\documentclass[11pt]{article}

\usepackage[T1]{fontenc}
\usepackage{lmodern}
\usepackage{microtype}
\usepackage{geometry}
\usepackage{amsmath,amssymb,amsthm,mathtools}
\usepackage{booktabs,tabularx,array}
\usepackage[]{hyperref}
\usepackage{xcolor}
\usepackage{appendix}
\usepackage{algorithm}
\usepackage{algorithmic}
\usepackage{dsfont}
\usepackage{xcolor}
\usepackage{enumitem}
\usepackage{times}
\usepackage{graphicx}

\newtheorem{theorem}{Theorem}[section]
\newtheorem{lemma}[theorem]{Lemma}
\newtheorem{proposition}[theorem]{Proposition}
\theoremstyle{definition}
\newtheorem{definition}[theorem]{Definition}
\theoremstyle{remark}

\newtheorem{claim}[theorem]{Claim}

\renewcommand{\algorithmicensure}{\textbf{Output:}}

\newcolumntype{Y}{>{\raggedright\arraybackslash}X}

\title{Online Fair Division: Pushing the Frontier of Approximate Proportionality}
\author{Yingjian Du\thanks{School of Computer Science and Technology, Shandong University, Qingdao, China. 202400130236@mail.sdu.edu.cn}\and Ankang Sun\thanks{School of Computer Science and Technology, Shandong University, Qingdao, China. ankang.sun@sdu.edu.cn}}
\date{}

\begin{document}
\maketitle

\begin{abstract}
Online fair division captures allocation problems in which indivisible
resources arrive over time and must be assigned before future resources
are known. Understanding what fairness remains achievable when allocation
decisions are immediate and irrevocable is a fundamental question in this
setting. We study deterministic online allocation among $n$ agents with nonnegative additive valuations, where the number of goods is unknown and the adversary can adapt to previous allocation decisions. 
We focus on proportionality up to one good (PROP1) and examine how advance future information affects the achievable guarantees.

Without additional future information, we give a deterministic algorithm that guarantees $\Omega(\frac{1}{\log(nm)})$-PROP1 after every round, where $m$ is the number of goods at termination. 
We complement this result by showing that, for every $n$ and sufficiently large $m$, every deterministic algorithm has an adaptive instance with $m$ goods on which its allocation has PROP1 approximation guarantee $O(\frac{\log \log m}{\log m})$.
Thus, when the number of agents is fixed, our upper and lower bounds on the competitive ratio differ by at most an $O(\log\log m)$ factor.
These results answer an open question in Choo et al. \cite{DBLP:journals/corr/abs-2508-03253} that asks whether a nontrivial deterministic approximation for PROP1 can be obtained. 

We also study the setting where the algorithm knows in advance the predictions of the maximum item value for every agent. 
When the predictions are accurate, we give a deterministic \(\frac{1}{2}\)-PROP1 algorithm, improving the \(\frac{1}{n}\) guarantee of Choo et al. \cite{DBLP:journals/corr/abs-2508-03253} to a constant.
We further establish an explicit upper bound below one on the competitive ratio, even for two agents with accurate predictions.
Finally, we give a single deterministic algorithm that guarantees \(\frac{1}{2}\)-PROP1 when predictions are accurate and \(\Omega(\frac{1}{\log (nm)})\)-PROP1 for arbitrary predictions.

\end{abstract}

\section{Introduction}
Fair division studies how scarce resources should be allocated among agents
with heterogeneous preferences in a fair manner
\cite{DBLP:books/daglib/0017730,DBLP:books/daglib/0017734}.
The offline problem, in which all goods and valuations are available before
any allocation is made, has been studied extensively
\cite{DBLP:journals/ai/AmanatidisABFLMVW23}.
In many practical applications, however, the resources are not available
simultaneously. Donations arrive over time and must be routed immediately to
charities, computational jobs must be assigned to servers upon release, and
opportunities such as advertising impressions or service requests must be
allocated when they become available
\cite{DBLP:conf/ijcai/AleksandrovAGW15,DBLP:conf/wine/BeiLL22,
DBLP:conf/nips/SpaehE23}.
These applications motivate \emph{online fair division}: indivisible goods
arrive sequentially, and each good must be allocated immediately and
irrevocably after the agents' values for that good are revealed
\cite{DBLP:conf/aaai/AleksandrovW20,
DBLP:conf/sigecom/BenadeKPP18,DBLP:conf/sigecom/ZengP20}.
The central difficulty is that an allocation that appears fair at the
current time may become highly unfair after additional goods arrive.

We take as our baseline the setting without additional information. There
are $n\geq 2$ agents with nonnegative additive valuations, and a finite
sequence of $m$ goods arrives online. The algorithm knows $n$, but it knows
neither the number $m$ of goods that will arrive nor the values of future
goods. Upon the arrival of a good, its value to every agent is revealed, and
the algorithm must assign it before seeing the next good. We allow the
adversary to be adaptive: the value vector of a future good may depend on
all previous allocation decisions. Fairness is evaluated ex post on the
final allocation when the adversary terminates the finite sequence.

Two canonical fairness notions are \emph{envy-freeness} and
\emph{proportionality}. The former requires that no agent envies another,
and the latter asks every agent to receive at least a $1/n$ fraction of her
value for all goods \cite{steinhaus1948problem}. Because goods are
indivisible, neither notion can always be satisfied, even with two agents
and a single good valued positively by both. This motivates the standard
relaxations of envy-freeness up to one good (EF1) and proportionality up to
one good (PROP1). Informally, EF1 allows envy to be eliminated by removing
one good from the envied bundle, whereas PROP1 allows an agent to reach her
proportional share by adding one good outside her bundle
\cite{DBLP:conf/sigecom/LiptonMMS04,
DBLP:conf/sigecom/ConitzerF017,
DBLP:journals/aamas/AzizCIW22,
DBLP:conf/aaai/BarmanK19}.
We use $\alpha$-PROP1 for the corresponding multiplicative relaxation and
refer to the worst-case factor guaranteed by an online algorithm as its
competitive ratio.

Recent work has revealed a sharp contrast between offline and online fair
division. In particular, EF1 allocations always exist in the offline
setting \cite{DBLP:conf/sigecom/LiptonMMS04}.
However, in the online setting
without additional information about future goods, no deterministic online
algorithm can guarantee any positive multiplicative approximation to EF1
\cite{DBLP:journals/corr/abs-2505-24503}.
%{\color{blue}\cite{DBLP:conf/ifaamas/MelissourgosP26} study an online fair division problem with a fixed number of goods.
%They show that for two agents with identical valuations, one can achieve $(\phi-1)$-EFX, while no $a$-EFX algorithm without predictions exists for any $a\in(\phi-1,1]$.}
For PROP1, the lower bound of
Benad\`e et al.~\cite{DBLP:conf/sigecom/BenadeKPP18} rules out exact PROP1 without additional information, but it does not quantify the best achievable approximation as a function of $n$ and $m$.
%{\color{blue}A natural relaxation of proportionality is proportionality up to $c$ goods(PROP$c$).
%Against an adaptive adversary, \cite{DBLP:journals/corr/abs-2605-19844} present a fully online algorithm that keeps the outcome PROP$c_t$ at any time, where $c_t \in O\left( \sqrt{\frac{t \log n}{n}} \right)$}
On the positive side, Neoh et al. \cite{DBLP:journals/corr/abs-2505-24503} show that exact PROP1 becomes
achievable when every agent's total value for the complete sequence is
known in advance. Thus, these results indicate that the approximability of
PROP1 depends on what information about future goods is available in
advance.

The work closest to ours is by Choo et al. \cite{DBLP:journals/corr/abs-2508-03253}, who also study multiplicative PROP1 approximations in this online model. 
Against adaptive adversaries, they show that three natural greedy algorithms can return allocations with arbitrarily small PROP1 factors.
Against a non-adaptive adversary, they show
that uniform random allocation achieves a
$\Theta(\frac{1}{\log\frac{n}{\delta}})$-PROP1 guarantee with probability at least $1-\delta$. 
They also introduce \emph{maximum item value} (MIV) predictions: before the process begins, the algorithm is given, for every
agent $i$, a prediction of the maximum value of a single good in the
realized instance. When these predictions are accurate, they give a
deterministic $\frac{1}{n}$-PROP1 algorithm against adaptive adversaries.

These results leave three questions. First, without additional information, what is the tight PROP1 guarantee for deterministic algorithms? 
Second, with accurate MIV predictions, can one obtain a constant guarantee, and can the guarantee be made arbitrarily close to one? Third, can a single deterministic algorithm exploit accurate MIV predictions while achieving a meaningful guarantee when those predictions are inaccurate?

In this work, we make progress on all three questions. Without additional information, we give an \(\Omega(\frac{1}{\log(nm)})\)-PROP1 deterministic algorithm and prove an upper bound of \(O(\frac{\log\log m}{\log m})\) for every fixed number of agents. With accurate MIV predictons, we obtain a \(\frac{1}{2}\)-PROP1 guarantee and establish an explicit impossibility bound below one, even for two agents. We then combine the two algorithmic guarantees and present one deterministic algorithm achieves \(\frac{1}{2}\)-PROP1 under accurate MIV predictions and \(\Omega(\frac{1}{\log(nm)})\)-PROP1 under arbitrary predictions.

\subsection{Our Contributions}
\label{subsec:contributions}

We study deterministic online algorithms for PROP1 and summarize our
results and contributions below. To state our results, we use the
following notation, which is defined formally in
Section~\ref{sec::pre}. For every agent $i$ and round $t$,
measured at the end of round $t$, let $V_i^t$ denote agent $i$'s total
value for all goods that have arrived by round $t$; let $B_i^t$ denote
her value for the goods allocated to her by round $t$; and let $U_i^t$
denote the maximum value she assigns to an arrived but unreceived good.

\paragraph{Without additional information: the algorithm.}
We give a deterministic online algorithm that is not told $m$ and guarantees \(\Omega(\frac{1}{\log(nm)})\)-PROP1 on every finite instance with $m$ goods. 
In other words, after every round \(t\), its allocation is \(\Omega(\frac{1}{\log(nt)})\)-PROP1. 
Thus, the guarantee remains valid whenever the adaptive adversary terminates the sequence. For every fixed number of agents, the terminal guarantee is \(\Omega(\frac{1}{\log m})\).

This result addresses the deterministic approximation question
posed by Choo et al. \cite{DBLP:journals/corr/abs-2508-03253}. Their work
shows that three natural greedy algorithms fail to guarantee any positive
competitive ratio against adaptive adversaries and obtains a logarithmic
guarantee through randomization against a non-adaptive adversary. Our
result shows that a nontrivial guarantee can also be obtained by a
deterministic algorithm against an adaptive adversary. The two positive
results are otherwise incomparable: their guarantee is independent of
$m$, but is probabilistic and assumes a non-adaptive input sequence,
whereas ours holds deterministically against adaptive inputs but decreases
with $m$.

At a high level, for every agent $i$, the algorithm maintains the largest
item value $H_i^t$ observed by round $t$, the number of times this largest
value has increased, and the difference between a target fraction of
$V_i^t$ and the value $B_i^t$ received by the agent. It uses a weighted
exponential potential to measure these differences after normalization by
$H_i^t$, together with an additional term that accounts for increases in
$H_i^t$. When a good arrives, the algorithm assigns it to an agent for whom
receiving the good produces the largest reduction in the potential. The
resulting upper bound on the potential bounds each normalized difference
by $O(\log(nt))$, which then gives the desired PROP1 guarantee.

\paragraph{Without additional information: an impossibility result.}
We complement the algorithm with an impossibility result. For every fixed $n\geq 2$, every sufficiently large $m$, and every deterministic online algorithm, there is an adaptive instance with $m$ goods on
which the allocation returned by the algorithm has PROP1 factor at most $\frac{4n\log\log m}{\log m}$.
The construction uses two agents with nonzero valuations and extends to general $n$ by adding dummy agents whose valuations are zero.
This result shows that no deterministic algorithm can guarantee a constant approximation ratio for PROP1 or a ratio that depends only on $n$.
When the number of agents $n$ is fixed, this upper bound and the established \(\Omega(\frac{1}{\log m})\) guarantee determine the optimal dependence on $m$ up to an \(O(\log\log m)\) factor.

% substantially strengthens the previous lower bound landscape.
% The impossibility of exact PROP1 implied by Benad\`e et
% al.~\cite{DBLP:conf/sigecom/BenadeKPP18} and discussed by Neoh et
% al.~\cite{DBLP:journals/corr/abs-2505-24503} rules out only competitive
% ratio one, whereas Choo et
% al.~\cite{DBLP:journals/corr/abs-2508-03253} establish competitive ratio
% zero only for three particular greedy algorithms. Our upper bound applies
% to every deterministic algorithm and rules out any approximation depends only on $n$. 
% Together with our
% $\Omega(1/\log m)$ guarantee, it determines the optimal dependence on $m$
% for every fixed $n$ up to an $O(\log\log m)$ multiplicative factor.

The construction fixes a parameter $R$ and, for every agent $i$ with
$U_i^t>0$, considers the normalized score
$ k_i^t=\frac{V_i^t-RB_i^t}{U_i^t}$.
As we will see, if $k_i^t>R$, then the PROP1 factor of the current allocation is smaller
than $\frac{n}{R}$. The adversary considers all finite continuations that are
consistent with the algorithm's decisions and subtracts a small penalty
proportional to the number of goods in each continuation. Starting from a
continuation whose penalized score for agent~$1$ is nearly the largest
possible, the adversary can choose subsequent goods that the algorithm
must allocate to agent~$1$. At the same time, agent~$2$ repeatedly misses
goods that are larger than all goods she previously missed, which
eventually makes $k_2^t$ exceed $R$. The penalty on the continuation length
bounds the total number of goods required. Choosing
$R=\Theta(\frac{\log m}{\log \log m})$ gives the stated upper bound.

\paragraph{With MIV information: the algorithm.}
In the setting with MIV information, before the first good arrives, the algorithm is given the $p_i$ of every agent $i$, where $p_i$ is the predicted maximum value of a single good for agent $i$.
This information specifies only the maximum single good value of each agent,
but reveals neither the number of goods, the agents' total values, nor
the arrival order.
When the predictions are accurate, we give a deterministic algorithm that guarantees \(\frac{1}{2}\)-PROP1, independent of both \(n\) and $m$.
This improves the \(1/n\) guarantee of Choo et al. \cite{DBLP:journals/corr/abs-2508-03253} to a constant.
This result shows that even limited advance information, namely, the maximum item value of every agent, can substantially improve the worst-case fairness guarantee.
%, without revealing the future valuation vectors or the agents' total values.

Informally, the algorithm normalizes the value of every arriving good
to agent $i$ by $p_i$. It maintains a normalized
shortfall from one half of the proportional share, together with one of
three states recording whether a good of value $p_i$ has appeared and
whether such a good has been allocated to another agent. These quantities
define a remaining margin $q_i^t$ for agent $i$ at the end of round $t$.
Let $ N^+:=\{i\in N:p_i>0\}$
be the set of agents with positive $p_i$'s. The algorithm uses the potential $\Psi^t:=\sum_{i\in N^+}\frac{1}{q_i^t}$,
so that an agent with a small remaining margin receives higher priority.
For each arriving good, the algorithm compares how the potential would
change under the possible allocation choices and selects a recipient that
keeps the potential from increasing. This ensures that
$q_i^t>0$ for every $i\in N^+$ and every round $t$. At termination, a
good of value $p_i$ must have appeared for every $i\in N^+$. Such a good
is either contained in $A_i$ or contributes to $U_i$, and the positivity
of $q_i^m$ then implies the desired approximation guarantee.

\paragraph{With MIV information: an impossibility result}
We complement the algorithm with an impossibility result. As observed by
Choo et al.~\cite{DBLP:journals/corr/abs-2508-03253}, accurate maximum item value predictions do not suffice to guarantee 1-PROP1, while there is no known constant gap. 
We strengthen this impossibility by establishing an explicit multiplicative gap below one.
Even for two agents with $p_1=p_2=1$, no deterministic algorithm can guarantee a competitive ratio greater than $1-\frac{1}{64\cdot 518}$.
The construction uses at most \(518\) goods, with every item value between \(\frac{1}{64}\) and \(1\).
The predicted maximum item value is fixed before allocation begins and is attained in the realized sequence. 
Thus, the construction does not rely on uncertainty about whether a good exceeding an announced maximum will arrive later. 
Instead, it uses the values and adaptive arrival order of goods within the announced bounds.

% every deterministic algorithm can be forced,
% using at most $518$ goods with values between $1/64$ and $1$, to return
% an allocation whose PROP1 factor is at most
% $
%     1-\frac{1}{64\cdot 518}.
% $

% In particular, the adversary keeps the complete MIV vector fixed from the
% beginning and creates the difficulty through the values and adaptive order
% of the remaining goods. This addresses an obstacle identified by Choo et
% al.~\cite{DBLP:journals/corr/abs-2508-03253}. That is, many online lower bounds depend on uncertainty about whether a much larger good will arrive later.
% Such an argument cannot be used directly with exact MIV information,
% because the announced maximum value must be preserved and must occur in
% the realized sequence.

The construction considers two agents and defines
$
    s_i^t=2(B_i^t+U_i^t)-V_i^t.
$
For two agents, agent $i$ satisfies PROP1 at round $t$ exactly when
$s_i^t\geq 0$. The construction has three phases. It first uses three
common goods with value one each to ensure that both agents have an unreceived good
of value one, unless the desired upper bound has already been obtained.
It then uses adaptive two-good blocks, each of which reduces
$s_1^t+s_2^t$ by $1/64$, until one agent has only a small remaining
margin. Finally, a five-good block forces the first four goods to that
agent and uses the last good to make $s_i^t$ negative for one of the
agents. 
% By choose the values properly, we obtain the desired multiplicative gap, which is below one.

\paragraph{An algorithm with consistency and robustness.} Finally, we are concerned with the case when the predictions are inaccurate and present a deterministic algorithm that leverages predictions to improve the worst-case guarantees when the predictions are accurate (consistency), while also providing a performance guarantee when the predictions fail (robustness).
In particular, we give one deterministic algorithm that guarantees \(\frac{1}{2}\)-PROP1 when predictions are accurate, and guarantees $\Omega(\frac{1}{\log (nm)})$-PROP1 when predictions fail.
At a high level, the algorithm combines the potentials of the above two established algorithms with a proper refinement. The algorithm then allocates each arriving good to the agent that minimizes the resulting combined potential.

These results distinguish the fairness guarantees available under different forms of advance information. 
Knowing every agent’s total value for the complete sequence suffices for exact PROP1 \cite{DBLP:journals/corr/abs-2505-24503}. 
Knowing only each agent’s maximum item value permits a constant guarantee, but does not suffice for exact or arbitrarily close to exact PROP1. 
Without additional information, the best deterministic guarantee decreases with $m$ for every fixed $n$, while an \(\Omega(\frac{1}{\log m})\) guarantee remains achievable. 
Our last algorithm further shows that the benefit of accurate maximum item value predictions can be obtained while retaining the logarithmic guarantee when those predictions are inaccurate.

% Although both of our positive results use potential functions to guide the
% allocation of each arriving good, they use different normalization values.
% Without additional information, the algorithm uses the largest item value
% observed so far, which may increase during the process. With MIV
% information, the algorithm uses the fixed value $p_i$ known before the first round. Taken together, our results establish a three level
% information hierarchy for online proportionality. Knowing every agent's
% total value for the complete sequence suffices for exact PROP1
% \cite{DBLP:journals/corr/abs-2505-24503}. Knowing only each agent's exact
% MIV does not suffice for exact PROP1, but it does suffice for a
% $1/2$-PROP1 guarantee independent of the number of agents. Without either
% form of information, the achievable competitive ratio must decrease with
% $m$ for fixed $n$, although an $\Omega(1/\log m)$ guarantee remains
% achievable.

\subsection{Technical Overview}
\label{subsec:techniques}
We first describe the ideas behind the algorithms for the without information and accurate maximum item value predictions settings, followed by the two impossibility constructions.
The algorithm that provides both consistency and robustness combines modified versions of the two algorithmic potentials.

\paragraph{Algorithms.}
Both algorithms track a proportionality debt for each agent: the difference between a target fraction of
her value for the goods seen so far and the value she has received. This debt is normalized by an available item value
scale. For each arriving good, the algorithm computes the exact potential reduction obtained by assigning the good
to an agent rather than letting her miss it, and chooses an agent with maximum reduction. A one step potential
comparison then shows that the reduction for the recipient covers the potential increases of the agents who miss the
good. 
Without information about future goods, the scale is the running maximum and may change when a new record
arrives. The algorithm uses a weighted exponential potential so that a bound on this potential gives a logarithmic bound on normalized debt, together with a telescoping reserve that accounts for changes in the running maximum.
With accurate MIV information, the value $p_i$ gives a fixed scale.
The algorithm records whether a maximum valued
good is available for the one good correction, and a reciprocal potential keeps the resulting safety slack positive. In
both cases, the potential controls normalized proportionality debt, and the largest relevant good converts the debt
bound into a PROP1 guarantee.
The last algorithm combines these approaches and use a single allocation rule that controls the combined potential.

\paragraph{Counterexamples.}
Both lower bounds first describe the fairness status of each active agent by a small set of variables and derive their exact changes when the current good is received or missed. 
The adversary then chooses values adaptively so that one allocation choice either violates the target guarantee or enters a state from which a violation can be forced. 
This restricts the algorithm's choice and allows the adversary to steer the allocation over several rounds, using one agent to route goods while reducing the fairness margin of another. 
The two constructions implement this idea differently because the available value scales are different. 
Without information, the largest missed value is an endogenous scale that may increase over time, while with perfect MIV information, the scale is fixed in advance. 
The common idea here is to combine a low dimensional state description, potential changes, and a region that the algorithm cannot safely enter. 
This may also be useful when designing adaptive lower bounds for other online fairness problems.

\subsection{Further Related Work}
\label{subsec:related-work}

\paragraph{Online fair division.}
Online fair division was initially studied under restricted valuation
domains, motivated in part by food-bank allocation
\cite{DBLP:conf/ijcai/AleksandrovAGW15}; see \cite{DBLP:conf/aaai/AleksandrovW20} for a
survey. Subsequent work studies cumulative envy, fairness--efficiency
tradeoffs, and the effect of allowing previous allocations to be revised
\cite{DBLP:conf/sigecom/BenadeKPP18,DBLP:conf/ijcai/HePPZ19,DBLP:conf/sigecom/ZengP20}. More recent work
investigates ex-post multiplicative guarantees for standard fairness
notions. Zhou et al. \cite{DBLP:conf/icml/0002B023} study MMS under normalized valuations, while
Neoh et al. \cite{DBLP:journals/corr/abs-2505-24503} compare no information, normalization
information, and richer frequency predictions. Choo et al. \cite{DBLP:journals/corr/abs-2508-03253}
focus on approximate PROP1 under adaptive and non-adaptive adversaries and introduce MIV predictions.
Kahana et al. \cite{DBLP:journals/corr/abs-2605-19844} give a deterministic online algorithm that bounds each agent's deficit by $c_t$ times the value of her largest unreceived good after $t$ rounds.
For a fixed number of agents, $c_t=O\left( \sqrt{t}\right)$, which implies an $\Omega\left({m^{-\frac{1}{2}}}\right)$-PROP1 guarantee.
Their algorithm provides an $m$ dependent deterministic guarantee without additional information against an adaptive adversary.
Very recent work by Chen and Tan \cite{chen2026competitive} develops a broader competitive analysis
framework across several fairness notions and information structure; its
normalization by the best offline fairness factor differs from the direct
$\alpha$-PROP1 benchmark used here.

\paragraph{Fair division with predictions and advice.}
Our MIV model is related to the learning-augmented analysis of online algorithms, which augments worst-case online input with predictions and seeks guarantees that exploit accurate advice without abandoning rigorous
worst-case analysis
\cite{DBLP:conf/nips/PurohitSK18,DBLP:journals/jacm/LykourisV21,
DBLP:journals/cacm/MitzenmacherV22}. Prediction augmented allocation has been studied for welfare, scheduling, matching, and mechanism-design objectives
\cite{DBLP:conf/soda/LattanziLMV20,DBLP:conf/innovations/BalkanskiGT23,
DBLP:conf/nips/SpaehE23,DBLP:conf/nips/CohenEEV24}. In online fair division, prior prediction
models include entire valuation vectors, total-value information, frequency predictions, and MIV
predictions \cite{DBLP:conf/ifaamas/MelissourgosP26, DBLP:journals/corr/abs-2505-24503,DBLP:journals/corr/abs-2508-03253}. Our purpose is
not to learn the values of arriving goods, that is, the entire current value vector
is observed, but to understand how a very small amount of global scale information changes the worst-case approximability of an ex-post fairness
benchmark.

\paragraph{Offline proportionality relaxations.}
The notion of PROP1 is a widely studied relaxation of the canonical proportionality for indivisible items
\cite{DBLP:conf/sigecom/ConitzerF017,DBLP:journals/aamas/AzizCIW22,aziz2020polynomial,
DBLP:conf/aaai/BarmanK19}. In the offline model, PROP1 follows from several
standard allocation procedures and stronger fairness guarantees. The online setting is different: irrevocability prevents balancing after
future values are known, and the adversary can choose when the process
terminates. Our results isolate the resulting price of uncertainty, as well as the extent to which one scalar of exact scale information per agent can
mitigate it.

\paragraph{Concurrent and Independent Work.}
Independently and concurrently to our work, Neoh and Teh~\cite{neoh2026closing} show that, without additional information, no deterministic or randomized online algorithm can guarantee a positive PROP1 competitive ratio only on the number of agents $n$ against an adaptive adversary.
In settings with accurate MIV information, their potential is different from ours, and they give a deterministic algorithm with higher competitive ratio than ours.
Given an upper bound $\kappa\in[2,n]$ on the number of agents who value any single good positively, their guarantee improves to $\max\{\frac{19}{30},\frac{n}{n+\kappa}\}$-PROP1.
With accurate MIV predictions and a known number of goods $m$, they give a deterministic algorithm that simultaneously guarantees $\frac{19}{30}$-PROP1 and maximum additive envy of $O(\log n+\sqrt{\frac{m\log n}{n}})$, after normalizing each agent's values by her MIV.
For inaccurate MIV predictions, they obtain a positive constant approximation under any fixed, one-sided error bound below one.
They also analyze randomized allocation rules against a non-adaptive adversary.

Our work complements their result of impossibility in online settings without additional information by quantifying the dependence on the number of goods $m$.
Furthermore, we give a deterministic online algorithm guarantees $\Omega\left(\frac{1}{\log(nm)}\right)$ on every instance with $m$ goods.
In online settings with accurate MIV predictions and an adaptive adversary, we give a deterministic algorithm guarantees $\frac{1}{2}$-PROP1, and prove the impossibility result that no deterministic algorithm can guarantee a competitive ratio greater than $1-\frac{1}{64\cdot 518}$.
Our consistency and robustness result provides a deterministic algorithm that achieves $\frac{1}{2}$-PROP1 under exact predictions and $\Omega(\frac{1}{\log(nm)})$-PROP1 under arbitrary nonnegative predictions, without requiring any bound on the prediction error.

% \paragraph{Paper organization.}
% Section~\ref{sec:preliminaries} introduces the model and the multiplicative
% PROP1 benchmark. Sections~\ref{sec:noinfo-lower} and
% \ref{sec:noinfo-algorithm} present, respectively, the zero horizon-uniform
% lower bound and the logarithmic no-information algorithm.
% Sections~\ref{sec:miv-lower} and~\ref{sec:miv-algorithm} establish the
% quantitative impossibility and the $1/2$-PROP1 algorithm under perfect MIV
% information. We conclude with open questions, including the optimal
% dependence on $m$ without information and the optimal constant under
% perfect MIV.

\section{Preliminaries}\label{sec::pre}

For any $k\in \mathbb{N}_+$, let $[k]=\{1,\ldots,k\}$. There are a set $N=\{1,\ldots,n\}$ of $n$ agents and a finite sequence of indivisible goods a set $M=\{g_1,\ldots,g_m\}$ of $m$ indivisible goods to be allocated to these agents.
We study the setting where the number of agents is known in advance and goods arrive \emph{online}, one at a time. 
In particular, $m$ is unknown, and moreover, when a good arrives, it must be allocated immediately and irrevocably to one of the agents.
Each agent $i$ is associated with an \emph{additive} valuation function $v_i: 2^M\rightarrow \mathbb{R}_{\geq 0}$.
The additivity of $v_i$ gives that for any $S\subseteq M$, $v_i(S)=\sum_{g\in S} v_i(\{g \})$. For simplicity, instead of $v_i(\{g\})$, we write $v_i(g)$ hereafter.

Since goods arrive online, we label the goods in their arrival order. For any $t\in \mathbb{N}_+$, let $g_t$ be the good that arrives at round $t$.
An allocation $A=(A_1,\ldots,A_n)$ is an $n$-partition of $M$, such that for any $i\neq j$, $A_i\cap A_j =\varnothing$ and $\bigcup_{i\in [n]} A_i = M$. Each $A_i$ refers to the set of goods or the \emph{bundle} allocated to agent $i$.
For any round $t$, let $M^t=\{g_1,\ldots,g_t\}$ denote the set of arrived goods. 
For any round $t$ and any $i\in [n]$, let $A^t_i$ denote the bundle already allocated to agent $i$ after the allocation of the good $g_t$.

For the fairness notion, we study the proportionality and focus on its relaxation. The proportional share of each agent $i$ is $\frac{v_i(M)}{n}$, and a proportional (PROP) allocation ensures that each agent receives her proportional share.
It is known that PROP allocations do not always exists, even in the offline setting of allocating one good to two agents.
We instead study the relaxation of proportionality, so called proportional up to one good (PROP1).

\begin{definition}[$\alpha$-PROP1] \label{def:alpha-prop1} For any $\alpha\in[0,1]$, the allocation $A=(A_1,\ldots,A_n)$ is \emph{$\alpha$-proportional up to one good} ($\alpha$-PROP1) if for any $i\in N$, either $v_i(A_i)\geq \frac{v_i(M)}{n}$, or there exists a good $g\in M\setminus A_i$ such that $v_i(A_i\cup\{g\}) \geq \alpha \cdot \frac{v_i(M)}{n}$.
When $\alpha=1$, we simply say that $A$ is PROP1. \end{definition}

To simplify notations, for any agent $i\in N$ and any $t\in\{0,\ldots,m\}$, we define the following quantities at the end of
round $t$, after good $g_t$ has been allocated:
$
    V_i^t
    = \sum_{\ell=1}^t v_i(g_\ell),
$
which is agent $i$'s total value for all goods that have arrived by
round $t$;
$
    B_i^t
    = \sum_{g_\ell\in A_i^t} v_i(g_\ell),
$
which is agent $i$'s value for the goods allocated to her by round
$t$; and
$
    U_i^t
    = \max
        \{v_i(g_\ell):g_\ell\in M^t\setminus A_i^t\},
$
which is the maximum value that agent $i$ assigns to a good that has
arrived by round $t$ but has not been allocated to her. We set
$A_i^0=\emptyset$ for all $i$, and hence $V_i^0=B_i^0=U_i^0=0$. 
These notations will be used throughout the
paper.

% To simplify notations, for any agent $i$ and round $t$, we define 
% $V^t_i=\sum_{\ell \leq t} v_i(g_\ell) $ the total value of all arrived goods until round $t$ for agent $i$;
% define $B^t_i=\sum_{\ell\leq t: g_\ell \in A^t_i} v_i(g_\ell)$ the total value of goods allocated to agent $i$ until round $t$;
% define $U^t_i=\max_{g_\ell\notin A_i^t:\ell \leq t} v_i(g_\ell)$ the maximum value of a good not allocated to agent $i$.
% These three notations will be used throughout the paper.

Since the number of goods, their values, and their arrival order are
unknown, the online allocation process can be viewed as an interaction
between an adaptive adversary and an online algorithm. After observing
the history through round $t-1$, including all previous allocation
decisions, the adversary either terminates the sequence or reveals a
new good $g_t$ together with its value $v_i(g_t)$ for every agent
$i\in N$. The algorithm must then allocate $g_t$ immediately and
irrevocably. 
For any $\alpha>0$, an algorithm guarantees \emph{competitive ratio} at least $\alpha$ if its terminal allocation is $\alpha$-PROP1 on every admissible instance.

We consider two information settings: \emph{without additional
information} and \emph{with maximum item value information}.
Without additional information, the algorithm receives no advance
information about the number, arrival order, or values of the goods
beyond what is revealed in the current round.
With MIV information, before the first round the algorithm is given $p_1,\ldots,p_n$,
where $p_i=\max_{\ell\in[m]} v_i(g_\ell)$ is the maximum value that agent $i$ assigns to any single good in the entire realized sequence. Thus, the adversary must generate a finite sequence satisfying $v_i(g_\ell)\leq p_i$ for every agent $i$ and every good $g_\ell$, and, whenever $p_i>0$, at least one good must satisfy $v_i(g_\ell)=p_i$. 
In particular, if $p_i=0$, then $v_i(g_\ell)=0$ for every good $g_\ell$. 
Apart from the vector $p_i$'s, the algorithm receives no information about the number, arrival
order, or values of the goods.

% As the number of goods $m$ and their arrival order are unknown, the problem can indeed be viewed as a game between an adversary and an online algorithm. 
% We consider the standard adaptive models. An adaptive adversary observes the algorithm’s past
% decisions and, at each round $t$, decides whether to terminate the sequence or introduce a new good $g_t$ together with its values for every agent.
% We study two different type of information settings, namely, \emph{without additional information} and \emph{with maximin item value (MIV) information}.
% In the setting without additional information, there is no extra information available to the algorithm, while in the setting with MIV information, before the first round, the algorithm knows $p_1,\ldots,p_n$, where $p_i$ is the maximum value of an agent for all future goods. In particular, at termination, the maximum value of a single item for agent $i$ must be $p_i$.

% We consider different type of information settings. In the \emph{fully online} setting, the algorithm does not receive extra information.
% In the \emph{perfect Maximum Item Value} prediction setting, before the first round, the algorithm is told $v_i^{\max}=\max_{g\in M} v_i(g)$ for every $i\in  [n]$.

% {\color{red}[Miss definitions about the notation $V^t_i, B^t_i,U^t_i$, and adaptive adversary]}

\section{Without Additional Information}\label{sec::no-info}
\subsection{Without Additional Information: A Deterministic Algorithm}
We first consider the setting without additional information. In this
section, we present a deterministic online algorithm that guarantees $\Omega(\frac{1}{\log nm})$-PROP1 on every
finite instance with $m$ goods. For ease of presentation, define function
$
    g(r)=\frac{1}{r(r+1)}
$ for every $r\in\mathbb{N}_+$,
and let $\theta=\frac{1}{4n}$ and $\gamma=\frac{3}{4}$. For any agent $i\in N$ and round $t\in[m]$, let
$
    x_i^t=v_i(g_t),
$
and $x^t=(x^t_1,\ldots,x^t_n)$.

We now explain the high level idea of the algorithm. For every agent $i$ and
round $t$, the algorithm maintains $H_i^t$, the largest value that agent
$i$ assigns to any good that has arrived by round $t$, and $\tau_i^t$, the
number of times that this largest observed value has strictly increased
by round $t$. The first positive value is counted as the first such
increase. The algorithm also maintains a quantity $D_i^t$. We later prove
that $
    D_i^t=\theta V_i^t-B_i^t$.
Thus, $D_i^t$ measures the difference between the target
$\theta V_i^t$ and the value that agent $i$ has received by round $t$.

When good $g_t$ arrives, the algorithm first updates $H_i^t$ and
$\tau_i^t$ for every agent. It then computes the score $\Delta_i^t$
defined in Line~\ref{alg-no-info::line::Delta}. For an agent with
$H_i^t>0$, this score is the difference between the agent's
exponential contribution to the potential when she does not receive
$g_t$ and her contribution when she receives $g_t$. Therefore, a larger
value of $\Delta_i^t$ means that assigning $g_t$ to agent $i$ avoids a
larger increase in the potential. The algorithm allocates $g_t$ to an
agent with maximum $\Delta_i^t$ and then updates $D_i^t$ for all agents.
The formal description is given in
Algorithm~\ref{alg:no-informaton}.

% Let us explain the high-level idea of the algorithm. The algorithm records $H^t_i$ the maximum value of single (arrived) good for agent $i$ at round $t$ and $\tau^t_i$ the number of times that $H$ is increased until round $t$.
% When $g_t$ arrives, it updates $H^t_i, \tau^t_i$ and then computes $\Delta^t_i$ (Line~\ref{alg-no-info::line::Delta}). The algorithm then allocates $g_t$ to the agent achieving the maximum $\Delta^t_i$ and update $D^t_i$ for all $i$. Indeed, $D^t_i$ is the difference between $1/(4n)$ fraction of her total value of goods arrived at round $t$ and $i$'s received value, which will be proved later on. 
% The higher the $D^t_i$ is, agent $i$ is farther from her proportionality benchmark.

% Term $\Delta^t_i$ represents how dangerous agent $i$ is if not receiving $g_t$. It increases with the increase of $D^{t-1}/H^t_i$ and of $\exp(\gamma \theta z^t_i) -\exp(-\gamma(1-\theta)z^t_i)$, where the former is the normalized difference the refined benchmark and $i$'s current value and the latter captures how useful of $g_t$ to agent $i$.
% Thus, the algorithm allocates $g_t$ to the agent with the maximum $\Delta^t_i$, i.e., the agent needs $g_t$ the most. 
% The formal description of the algorithm is introduced in Algorithm~\ref{alg:no-informaton}

\begin{algorithm}[th]	\caption{Deterministic allocation without additional information}
	\label{alg:no-informaton}
 \renewcommand{\algorithmicensure}{\textbf{Output:}}
	\begin{algorithmic}[1]
        \REQUIRE A fixed deterministic ordering of the agents for tie-breaking.
        \STATE Initialize $D^0_i=H^0_i=\tau^0_i=0$ and the empty allocation $A=(A_1,\ldots,A_n)$ with $A_i=\varnothing$ for all $i$.
        \WHILE{a new good $g_t$ arrives}
        \STATE Observe $x^t=(x^t_1,\ldots,x^t_n)$.
        \STATE For each $i\in [n]$, update $H^t_i\gets \max\{H^{t-1}_i,x^t_i\}$ and $\tau^t_i\gets \tau^{t-1}_i + \mathds{1}_{\{x^t_i>H^{t-1}_i\} }$.\label{alg-no-info::line::H-and-tau}
        \FOR{each $i\in [n]$}
        \IF{$H^t_i=0$}
        \STATE Define $z^t_i=0$ and $\Delta^t_i=0$.
        \ELSE
        \STATE Define $z^t_i=x^t_i/H^t_i$ and 
        $
        \Delta^t_i = g(\tau^t_i)\exp\left(\gamma \frac{D^{t-1}_i}{H^t_i}\right)\cdot \left[\exp(\gamma \theta z^t_i) -\exp(-\gamma(1-\theta)z^t_i)\right].
        $\label{alg-no-info::line::Delta}
        \ENDIF
        \ENDFOR
        \STATE Allocate $g_t$ to agent $i_t \in \arg\max_{i\in [n]} \Delta ^t_i$, i.e., $A_{i_t}\gets A_{i_t}\cup \{g_t\}$. Break ties according to the fixed deterministic ordering.
        \STATE Update: for each $j\neq i_t$,  $D_j^t \gets D^{t-1}_j + \theta x^t_j$, and for agent $i_t$, $D_{i_t}^t \gets D^{t-1}_{i_t} - (1-\theta)x^t_{i_t}$.
        \ENDWHILE
        \ENSURE Allocation $A$.
	\end{algorithmic}
\end{algorithm}

We next analyze Algorithm~\ref{alg:no-informaton}. The transition from
the potential at the end of round $t-1$ to the potential at the end of
round $t$ is divided into two steps. First, after $g_t$ is revealed,
$H_i^t$ and $\tau_i^t$ are updated, while the allocation has not yet been
made and the value of $D_i$ remains $D_i^{t-1}$. We represent this
intermediate state by a \emph{pre-allocation} potential. Second, after $g_t$ is
allocated, the values $D_i^t$ are updated, giving the \emph{post-allocation}
potential.

For every round $t\geq 1$, define the post-allocation potential by
\[
\Phi^t:=\sum_{i: H_i^t>0} g(\tau^t_i)\exp\left(\gamma \frac{D^{t}_i}{H^t_i}\right) + \sum_{i=1}^n\frac{1}{\tau^t_i+1},
\]
and \emph{pre-allocation} potential by
\[
\widehat{\Phi}^t:=\sum_{i: H_i^t>0} g(\tau^t_i)\exp\left(\gamma \frac{D^{t-1}_i}{H^t_i}\right) + \sum_{i=1}^n\frac{1}{\tau^t_i+1}.
\]
Thus, $\Phi^t$ and $\widehat{\Phi}^t$ use the same values of $H_i^t$
and $\tau_i^t$. They differ only in whether the numerator in the
exponential term is $D_i^t$ or $D_i^{t-1}$.

At $t=0$, the exponential sum is empty and
$
    \Phi^0=n.
$
We first prove that updating $H_i^t$ and $\tau_i^t$ does not increase
the potential, and then prove that the allocation and the update of
$D_i^t$ do not increase it. Consequently, $\Phi^t\leq n$ after every
round. This bound will yield a lower bound on the achieved PROP1 factor.

% Informally, $\Phi^t$ is computed after allocating $g_t$, while $\widehat{\Phi}^t$ is computed before allocating $g_t$ but after the update of $H^t_i$ and $\tau^t_i$. 
% This makes that the only difference between $\Phi^t$ and $\widehat{\Phi}^t$ is $D^t_i$ and $D^{t-1}_i$, the nominator on the exponential term.

% To bound the competitive ratio, we will prove that the post-allocation potential never increases. Initially, since $\tau^i_0=0$ and $H^0_i=0$ for all $i$, we have $\Phi^0=\widehat{\Phi}^0=n$. The monotonicity of the potential gives the upper bound of the post-allocation potential at every possible termination, which then bounds the approximation factor.

\begin{lemma}\label{lem::pre-and-post}
    For any round $t\geq 1$, $\widehat{\Phi}^t \leq \Phi^{t-1}$.
\end{lemma}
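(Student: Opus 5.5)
The plan is to prove the inequality agent by agent. Both $\widehat{\Phi}^t$ and $\Phi^{t-1}$ are sums over agents of a per-agent contribution
\[
c_i := g(\tau_i)\exp\left(\gamma \frac{D_i^{t-1}}{H_i}\right)\mathds{1}_{\{H_i>0\}} + \frac{1}{\tau_i+1},
\]
evaluated at $(H_i^t,\tau_i^t)$ and at $(H_i^{t-1},\tau_i^{t-1})$ respectively. The numerator $D_i^{t-1}$ is the same in both. So it suffices to show, for each $i$, that the contribution does not increase when $(H_i,\tau_i)$ is updated in Line~\ref{alg-no-info::line::H-and-tau}. I split into cases according to whether a new record occurs.

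\emph{No new record} ($x_i^t\le H_i^{t-1}$). Then $H_i^t=H_i^{t-1}$ and $\tau_i^t=\tau_i^{t-1}$, so the two contributions coincide. This covers the case where both are zero-valued, $H_i^{t-1}=H_i^t=0$, in which only the term $1/(\tau_i+1)$ is present.

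\emph{First record} ($H_i^{t-1}=0<x_i^t$). Here $\tau_i^{t-1}=0$ and $\tau_i^t=1$, so the old contribution is exactly $1$. Since $H_i^{t-1}=0$, every earlier good had $x_i^\ell=0$. The update rule only adds $\theta x_i^\ell$ or subtracts $(1-\theta)x_i^\ell$, so $D_i^{t-1}=0$; I argue this directly from the update rule, without needing the later identity $D_i^t=\theta V_i^t-B_i^t$. The new contribution is therefore $g(1)e^0+\frac12=\frac12+\frac12=1$, with equality.

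\emph{Later record} ($0<H_i^{t-1}<x_i^t$). Write $r=\tau_i^t=\tau_i^{t-1}+1\ge 2$. The old contribution is $g(r-1)\exp(\gamma D_i^{t-1}/H_i^{t-1})+\frac1r$, and the new one is $g(r)\exp(\gamma D_i^{t-1}/H_i^t)+\frac1{r+1}$. I split on the sign of $D_i^{t-1}$.
\begin{itemize}[nosep]
\item If $D_i^{t-1}\ge 0$: since $H_i^t>H_i^{t-1}$, we have $D_i^{t-1}/H_i^t\le D_i^{t-1}/H_i^{t-1}$, so the exponential does not increase. Also $g(r)\le g(r-1)$ and $\frac1{r+1}\le\frac1r$, so each term is dominated termwise.
\item If $D_i^{t-1}<0$: the new exponential is at most $1$. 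Hence the new contribution is at most $g(r)+\frac1{r+1}=\frac1r-\frac1{r+1}+\frac1{r+1}=\frac1r$, which is at most the old contribution because its exponential term is nonnegative.
\end{itemize}
This second subcase is exactly where the telescoping reserve $\sum_i \frac1{\tau_i+1}$ is needed. The identity $g(r)=\frac1r-\frac1{r+1}$ is the key step. It is also the only place I expect any subtlety: when the debt is negative, dividing by a larger $H_i$ increases the exponent, and the reserve released by incrementing $\tau_i$ is what pays for that.

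Summing the per-agent inequalities over $i\in N$ gives $\widehat{\Phi}^t\le\Phi^{t-1}$.
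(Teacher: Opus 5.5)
Your proof is correct and follows the paper's argument essentially step for step. It uses the same per-agent comparison and the same three cases (no new record, first record, later record), and in the last case it makes the same split on the sign of $D_i^{t-1}$, using the telescoping identity $g(r)=\frac{1}{r}-\frac{1}{r+1}$. Your explicit derivation of $D_i^{t-1}=0$ in the first-record case, directly from the update rule, fills in a step the paper leaves implicit.
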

\begin{proof}
    It suffices to prove that for each agent $i$, her contribution to $\widehat{\Phi}^t$ is at most her contribution to $\Phi^{t-1}$. We split the proof by distinguishing $x^t_i\leq H^{t-1}_i$ or not.

    If $x^t_i\leq H^{t-1}_i$, then by Line~\ref{alg-no-info::line::H-and-tau}, we have $H^t_i=H^{t-1}_i$ and $\tau^t_i=\tau^{t-1}_i$.
    By the definitions, one can verify that the contributions of agent $i$ to $\widehat{\Phi}^t$ and $\Phi^{t-1}$ are equal.

    In the remaining case, where $x^t_i> H^{t-1}_i$, we distinguish between $\tau^{t-1}_i=0$ and  $\tau^{t-1}_i\geq 1$.
    If $\tau^{t-1}_i=0$, then $H^{t-1}_i=0$, so that $H^\ell_i=x_i^\ell = 0$ for all $\ell \leq t-1$. Hence, agent $i$'s contribution to $\Phi^{t-1}$ is $1/(\tau^{t-1}+1)=1$.
    After the arrival of $g_t$, we compute $\widehat{\Phi}^t$, and the contribution of agent $i$ is
    \[
    g(\tau^t_i)\exp(\gamma \frac{D_i^{t-1}}{H^t_i}) + \frac{1}{\tau^t+1} = g(1)e^0+\frac{1}{2} = 1,
    \]
    which equals agent $i$'s contribution to $\Phi^{t-1}$.

    If $\tau^{t-1}_i\geq 1$, define
    \[
    \widehat{\Phi}^t_i=g(\tau^t_i)\exp(\gamma \frac{D^{t-1}}{H^t_i}) + \frac{1}{\tau^t+1} \text{ and } \Phi^{t-1}_i=g(\tau^{t-1}_i)\exp(\gamma \frac{D^{t-1}}{H^{t-1}_i}) + \frac{1}{\tau^{t-1}+1},
    \]
    which refer to $i$'s contribution to $\widehat{\Phi}^t$ and $\Phi^{t-1}$, respectively.
    When $D^{t-1}\geq 0$, we have $D^{t-1}/H^t_i \leq D^{t-1}/H^{t-1}_i$, since $H^t_i=x^t_i>H^{t-1}_i$.
    Since $\tau^t_i=\tau^{t-1}_i+1$, we have $g(\tau^t_i)< g(\tau^{t-1}_i)$ and $1/(\tau^t+1) \leq 1/(\tau^{t-1}+1)$. Combining everything, we have $\widehat{\Phi}^t_i < \Phi^{t-1}_i$.
    On the other hand, when $D^{t-1}<0$, we have
    \[
    \widehat{\Phi}^t_i<g(\tau^t_i)+\frac{1}{\tau^t+1} = \frac{1}{\tau^t_i} - \frac{1}{\tau^t_i+1} + \frac{1}{\tau^t+1} \leq \Phi^{t-1},
    \]
    where the last inequality transition is due to that $\tau^t_i=\tau^{t-1}_i+1$ and the exponential contribution is positive.

    Therefore, by summing over all agents' contribution, we can claim $\widehat{\Phi}^t \leq \Phi^{t-1}$.
\end{proof}

Next we compare $\Phi^t$ and $\widehat{\Phi}^t$. Let us first write the expression of their difference:
\[
\Phi^t-\widehat{\Phi}^t=\sum_{i:H^t_i>0} g(\tau^t_i) \left(\exp(\gamma\frac{D^t_i}{H^t_i}) -  \exp(\gamma\frac{D^{t-1}_i}{H^t_i})  \right).
\]
Recall that good $g_t$ is allocated to $i_t$. By substituting that $D^t_i=D^{t-1}_i+\theta x^t_i$ for all $i\neq i_t$ and $D^t_{i_t}=D^{t-1}_{i_t}-(1-\theta)x^t_{i_t}$, we can then rearrange the difference as
\[
\Phi^t-\widehat{\Phi}^t=\sum_{i=1}^n g(\tau^t_i)\exp\left(\gamma \frac{D^{t-1}_i}{H^t_i} \right) \left[\exp(\gamma \theta z^t_i)-1 \right] - \Delta^t_{i_t},
\]
where $\Delta^t_{i_t}$ is defined in Line~\ref{alg-no-info::line::Delta} and it captures the difference of $i_t$'s contribution between receiving $g_t$ and losing $g_t$.
For ease of presentation, for any $i$ and $t$, define 
\[
J^t_i:=g(\tau^t_i)\exp\left(\gamma \frac{D^{t-1}_i}{H^t_i} \right) \left[\exp(\gamma \theta z^t_i)-1 \right],
\]
and hence, $\Phi^t-\widehat{\Phi}^t=\sum_{i=1}^n J^t_i- \Delta^t_{i_t}$.

\begin{lemma}\label{lem::J-and-Delta}
    For any agent $i$ and round $t$, $J^t_i \leq \frac{3}{8n}\cdot\Delta^t_i$.
\end{lemma}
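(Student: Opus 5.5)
The plan is to reduce the inequality to a one-variable inequality in $z=z^t_i\in[0,1]$. If $H^t_i=0$, then $z^t_i=0$ and $\Delta^t_i=0$ by definition. Agent $i$ is also absent from the exponential sum, so $J^t_i$ is understood as $0$ and the claim is trivial. Otherwise set $C:=g(\tau^t_i)\exp(\gamma D^{t-1}_i/H^t_i)>0$, and note that $0\le x^t_i\le H^t_i$ by the update in Line~\ref{alg-no-info::line::H-and-tau}, so $z\in[0,1]$. Write $a:=\gamma\theta=\frac{3}{16n}$ and $b:=\gamma(1-\theta)=\frac34\bigl(1-\frac{1}{4n}\bigr)$. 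The key observation is that $\frac{3}{8n}=2a$. After dividing by $C$, the claim therefore becomes
\[
e^{az}-1\;\le\;2a\bigl(e^{az}-e^{-bz}\bigr)\qquad\text{for all } z\in[0,1].
\]

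First, I bound the left side from above by convexity: $e^{az}-1\le az\,e^{az}\le az\,e^{a}$.

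Second, I bound the right side from below by splitting it as $e^{az}-e^{-bz}=(e^{az}-1)+(1-e^{-bz})$. The first term is at least $az$. For the second term, $1-e^{-bz}$ is concave in $z$ and vanishes at $0$, so on $[0,1]$ it lies above the chord: $1-e^{-bz}\ge (1-e^{-b})z$. Hence the right side is at least $2a z\bigl(a+1-e^{-b}\bigr)$, and it suffices to show
\[
e^{a}\le 2\bigl(a+1-e^{-b}\bigr).
\]

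This numerical check is the only delicate step, since the constants are tight. The right side alone, with only the $1-e^{-b}$ term, falls just short; the extra $a$ from the first term is what makes the inequality hold. Using $n\ge2$, we have $a\le\frac{3}{32}$ and $b\ge\frac34\cdot\frac78=\frac{21}{32}$. The difficulty is that $a$ appears on both sides, so I cannot simply take the worst case of each side separately. The clean way is to write the target as $e^{a}-2a\le 2(1-e^{-b})$. The function $e^{a}-2a$ is decreasing on $[0,\ln 2]$, so it is at most its value at $a=0$, which is $1$. On the other side, $2(1-e^{-21/32})\approx 2(1-0.519)>0.96$. Taken naively this gives $1>0.96$, which fails, so the decreasing-function shortcut loses too much at the $a\to 0$ end. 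I would instead use both sides with the actual dependence on $n$: $e^{a}\le 1+a+a^2$ for $a\le 1$, which reduces the target to $1-a+a^2\le 2(1-e^{-b})$. For $n\ge2$ the left side is at most $1-a(1-a)$, which is below $1$ but only by about $a$. The right side is about $0.962$ at $b=21/32$ and increases as $b$ grows toward $\frac34$.

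I expect this numerical step to be the main obstacle. If the crude chord bound does not close the gap uniformly in $n$, I would refine the lower bound on $1-e^{-bz}$. One option is $1-e^{-bz}\ge bz-\tfrac{b^2z^2}{2}$. Alternatively, I would compare the two sides directly as functions of $z$: $h(z):=2a(e^{az}-e^{-bz})-(e^{az}-1)$ satisfies $h(0)=0$ and $h'(0)=2a(a+b)-a=a(2a+2b-1)$, which is positive because $2b\ge\frac{21}{16}>1$. I would then check that $h$ stays nonnegative on $[0,1]$, using that $h$ is a combination of two exponentials and hence has at most one interior critical point. This reduces the check to $h(1)\ge0$, which is an explicit inequality in $n$ that I would verify for all $n\ge2$.

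Multiplying the one-variable inequality back by $C$ gives $J^t_i\le\frac{3}{8n}\Delta^t_i$.
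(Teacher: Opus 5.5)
\paragraph{A gap: the final numerical step.}
Your reduction is sound. Dividing by the common positive factor and using $\frac{3}{8n}=2a$ is correct, as are the bounds $e^{az}-1\le az\,e^{a}$ and $e^{az}-e^{-bz}\ge az+(1-e^{-b})z$. The resulting sufficient condition $e^{a}\le 2\bigl(a+1-e^{-b}\bigr)$ is in fact true for every $n\ge 2$. But you never prove it, so the lemma is not established as written.

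Both of your attempted checks treat $a$ and $b$ as independent. They bound the left side by its value as $a\to 0$ (i.e.\ $n\to\infty$) and the right side by its value at $b=21/32$ (i.e.\ $n=2$), and that comparison cannot succeed. The proposal then ends with contingency plans rather than an argument: a second-order lower bound, or the unimodality of $h$ followed by an unverified ``$h(1)\ge 0$''.

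\paragraph{The missing observation.}
The parameters $a$ and $b$ are coupled: $a+b=\gamma\theta+\gamma(1-\theta)=\gamma=\tfrac34$. Substituting $e^{-b}=e^{-3/4}e^{a}$, your target becomes
\[
(1+2e^{-3/4})e^{a}\le 2+2a \qquad \text{for } a\in(0,\tfrac{3}{32}].
\]
Using $e^{a}\le\frac{1}{1-a}$, it suffices that $e^{-3/4}+a^{2}\le\tfrac12$. This holds since $e^{-3/4}<0.473$ and $a^{2}\le 0.009$.

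The same substitution closes your $h$-route. You get $h(1)=1-e^{a}\bigl(1-2a(1-e^{-3/4})\bigr)$. Since $2(1-e^{-3/4})>1$, we have $1-2a(1-e^{-3/4})<1-a\le e^{-a}$, so $h(1)>0$.

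\paragraph{Comparison with the paper.}
The paper uses this identity from the start, which makes the argument $n$-free. Since $e^{az}-e^{-bz}=e^{az}(1-e^{-\gamma z})$,
\[
\frac{J^t_i}{\Delta^t_i}=\frac{1-e^{-az}}{1-e^{-\gamma z}}\le\frac{az}{(1-e^{-\gamma})z}=\frac{a}{1-e^{-3/4}}<2a,
\]
using only $e^{-3/4}<\tfrac12$. Multiplying numerator and denominator by $e^{-az}$, instead of bounding $e^{az}-1\le az\,e^{a}$, avoids the spurious factor $e^{a}$. That factor is what forced you to recover the extra $a$ from the splitting and made your constant check delicate.
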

\begin{proof}
    Recall that $\Delta^t_i = g(\tau^t_i)\exp\left(\gamma \frac{D^{t-1}_i}{H^t_i}\right)\cdot \left[\exp(\gamma \theta z^t_i) -\exp(-\gamma(1-\theta)z^t_i)\right]$. 
    If $z^t_i=0$, we have $J^t_i=\Delta^t_i=0$ and the statement holds trivially.

    We now focus on $z^t_i>0$. By definition, we have
    \[
    \frac{J^t_i}{\Delta^t_i} = \frac{e^{\gamma \theta z^t_i}-1}{e^{\gamma \theta z^t_i} - e ^{-\gamma(1-\theta)z^t_i}}, \text{which implies } \frac{J^t_i}{\Delta^t_i} = \frac{1-e^{-\gamma \theta z^t_i}}{1-e^{-\gamma z^t_i}},
    \]
    where the right hand side equation is achieved by multiplying the numerator and denominator of the left hand side fraction by $e^{-\gamma \theta z^t_i}$.
    Due to that $1-e^{-y} \leq y$ for all $y\geq 0$, we have $1-e^{-\gamma \theta z^t_i} \leq \gamma \theta z^t_i$.
    Thus, $J^t_i/\Delta^t_i \leq \gamma \theta z^t_i/(1-e^{-\gamma z^t_i})$.
    Notice that function $f(y)=1-e^{-y}$ is concave on $[0,\infty)$ and $f(0)=0$. Therefore, as $z^t_i\leq 1$ and hence $\gamma z^t_i \leq \gamma$, we have 
    \[
    1-e^{-\gamma z^t_i} \geq \frac{\gamma z^t_i}{\gamma} (1-e^{-\gamma}), \text{which implies } \frac{J^t_i}{\Delta^t_i} \leq \frac{\gamma \theta}{1-e^{-\gamma}}.
    \]
    By substituting $\gamma=3/4$ (and hence $e^{-\gamma}<1/2$) and $\theta=1/(4n)$, we have $J^t_i/\Delta^t_i \leq 3/(8n)$.
\end{proof}

Next we prove $\Phi^t \leq \widehat{\Phi}^t$. Recall that $\Phi^t-\widehat{\Phi}^t=\sum_{i=1}^n J^t_i- \Delta^t_{i_t}$ and Algorithm~\ref{alg:no-informaton} allocates $g_t$ to $i_t$, meaning that $\Delta^t_{i_t} \geq \Delta^t_i$ for all $i$.
By Lemma~\ref{lem::J-and-Delta}, we have
\[
\Phi^t-\widehat{\Phi}^t=\sum_{i=1}^nJ^t_i - \Delta^t_{i_t} \leq \frac{3}{8n}\sum_{i=1}^n\Delta^t_i - \Delta^t_{i_t} \leq -\frac{5}{8}\Delta^t_{i_t} \leq 0.
\]
Therefore, the above inequality together with Lemma~\ref{lem::pre-and-post} directly gives the following.
\begin{proposition}\label{prop::potential-no-info}
    For any round $t\geq 1$, it holds that $\Phi^t \leq \widehat{\Phi}^t \leq \Phi^{t-1}$ and hence $\Phi^t\leq \Phi^0=n$.
\end{proposition}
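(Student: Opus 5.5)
The proposition follows by chaining two one-step comparisons and then inducting on $t$. Fix a round $t\geq 1$. The first inequality, $\widehat{\Phi}^t\leq \Phi^{t-1}$, is exactly Lemma~\ref{lem::pre-and-post}. It says that updating $H_i^t$ and $\tau_i^t$ while keeping $D_i^{t-1}$ fixed never increases the potential.

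For the inequality $\Phi^t\leq\widehat{\Phi}^t$, I will use the decomposition $\Phi^t-\widehat{\Phi}^t=\sum_{i=1}^n J_i^t-\Delta_{i_t}^t$ derived above. To justify it, I substitute the update rule $D_j^t=D_j^{t-1}+\theta x_j^t$ for $j\neq i_t$ and $D_{i_t}^t=D_{i_t}^{t-1}-(1-\theta)x_{i_t}^t$. Then I add and subtract the term $J_{i_t}^t$, which corresponds to the hypothetical case where $i_t$ misses $g_t$.

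One detail needs checking: agents with $H_i^t=0$ appear in the sum over all $i$ but not in the potential. For such agents, $z_i^t=0$, so both $J_i^t$ and $\Delta_i^t$ vanish and the decomposition is unaffected.

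Next I apply Lemma~\ref{lem::J-and-Delta} to each agent, which gives $J_i^t\leq \frac{3}{8n}\Delta_i^t$. The algorithm's choice of $i_t$ gives $\Delta_i^t\leq\Delta_{i_t}^t$ for every $i$. Summing over the $n$ agents yields $\sum_i J_i^t\leq \frac{3}{8}\Delta_{i_t}^t$. Therefore
\[
\Phi^t-\widehat{\Phi}^t\leq -\tfrac{5}{8}\Delta_{i_t}^t\leq 0,
\]
where the last step uses $\Delta_{i_t}^t\geq 0$. This nonnegativity holds because $\Delta_i^t$ is either $0$ or a product of positive factors with $e^{\gamma\theta z}-e^{-\gamma(1-\theta)z}\geq 0$ for $z\geq 0$.

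Combining the two steps gives $\Phi^t\leq \widehat{\Phi}^t\leq \Phi^{t-1}$ for every $t\geq 1$. Induction on $t$ then gives $\Phi^t\leq \Phi^0$. Finally, $\Phi^0=n$, since all $H_i^0=0$ makes the exponential sum empty and $\tau_i^0=0$ makes each reserve term equal to $1$.

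I do not expect a real obstacle. The step needing the most care is the bookkeeping in the decomposition, namely that the agents with $H_i^t=0$ contribute nothing.
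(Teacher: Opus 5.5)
Your proposal is correct and follows the paper's argument essentially verbatim. You get $\widehat{\Phi}^t\le\Phi^{t-1}$ from Lemma~\ref{lem::pre-and-post}, and $\Phi^t\le\widehat{\Phi}^t$ from the decomposition $\Phi^t-\widehat{\Phi}^t=\sum_i J_i^t-\Delta_{i_t}^t$ combined with Lemma~\ref{lem::J-and-Delta} and the argmax choice of $i_t$, then chain from $\Phi^0=n$. Your extra checks (agents with $H_i^t=0$ contributing zero, and $\Delta_{i_t}^t\ge 0$) make explicit details the paper leaves implicit.
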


At this stage, we are ready to use the upper bound of the potential to establish the competitive ratio of Algorithm~\ref{alg:no-informaton}.

\begin{theorem}\label{thm::no-info}
    For any $n\geq 2$, Algorithm~\ref{alg:no-informaton} is deterministic and achieves competitive ratio $\rho=\Omega(\frac{1}{\log (nm)})$. 
\end{theorem}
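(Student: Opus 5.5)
Determinism is immediate: the algorithm is a fixed update rule with a fixed tie-breaking order. For the ratio, I will fix any round $t$ (in particular $t=m$) and any agent $i$, and read off a bound on $D_i^t/H_i^t$ from Proposition~\ref{prop::potential-no-info}. First I check by induction on the update line that $D_i^t=\theta V_i^t-B_i^t$: each good adds $\theta x_i^t$ to $\theta V_i$, and subtracts $x_i^t$ from $B_i$ exactly when $i$ receives it. If $H_i^t=0$, then $V_i^t=0$ and PROP1 holds trivially. Otherwise every term of $\Phi^t$ is nonnegative, so $\Phi^t\le n$ gives
\[
g(\tau_i^t)\exp\!\left(\gamma \tfrac{D_i^t}{H_i^t}\right)\le n.
\]
Hence $D_i^t/H_i^t\le \frac{1}{\gamma}\log\big(n\,\tau_i^t(\tau_i^t+1)\big)$. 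Since $\tau_i^t\le t$, this yields $D_i^t\le L\,H_i^t$ with $L=\frac{4}{3}\log(n t(t+1))=O(\log(nt))$.

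Next I convert this debt bound into PROP1. Let $g^*$ be a good of value $H_i^t$ for agent $i$. If $g^*\notin A_i^t$, then $U_i^t=H_i^t$ and
\[
B_i^t+U_i^t\ge \theta V_i^t - LH_i^t + H_i^t.
\]
If instead $g^*\in A_i^t$, then $B_i^t\ge H_i^t$, and adding any unreceived good (or none, if $i$ holds everything) gives $B_i^t+U_i^t\ge \max\{H_i^t,\ \theta V_i^t-LH_i^t\}$. So in both cases
\[
B_i^t+U_i^t\ge \max\{H_i^t,\ \theta V_i^t-LH_i^t\}.
\]
Taking a convex combination, $(L+1)(B_i^t+U_i^t)\ge L H_i^t+(\theta V_i^t-LH_i^t)=\theta V_i^t$. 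Therefore
\[
B_i^t+U_i^t\ge \frac{\theta}{L+1}V_i^t=\frac{n\theta}{L+1}\cdot\frac{V_i^t}{n}=\frac{1}{4(L+1)}\cdot\frac{V_i^t}{n}.
\]
This gives $\alpha=\frac{1}{4(L+1)}=\Omega(1/\log(nt))$. If $U_i^t$ is undefined because $i$ owns every arrived good, then $B_i^t=V_i^t\ge V_i^t/n$ and the PROP condition holds outright.

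Finally I apply this at $t=m$, using $\log(nm(m+1))=O(\log(nm))$, which gives $\rho=\Omega(1/\log(nm))$ uniformly over adaptive instances. The only delicate step is the case split on whether the record good $g^*$ was received. When she holds it, the debt bound alone is not enough, since $\theta V_i^t-LH_i^t$ could be negative. The trick is that holding $g^*$ already gives $B_i^t\ge H_i^t$, and the convex-combination step handles both cases at once. Everything else follows directly from Proposition~\ref{prop::potential-no-info}.
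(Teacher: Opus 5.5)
Your proof is correct and follows the paper's argument. You obtain $D_i^t\le \frac{4}{3}\log\bigl(nt(t+1)\bigr)H_i^t$ from $\Phi^t\le n$ and use the identity $D_i^t=\theta V_i^t-B_i^t$. Your case split on the record good is exactly the paper's observation $B_i^t+U_i^t\ge H_i^t$, and your weighted combination reproduces its chain $\theta V_i^t\le B_i^t+LH_i^t\le (L+1)(B_i^t+U_i^t)$. You also handle the case where agent $i$ holds every arrived good, so that $U_i^t$ is an empty maximum, which the paper leaves implicit.
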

\begin{proof}%[Proof of Theorem~\ref{thm::no-info}]
    The algorithm is clearly deterministic. We next analyze its performance. 

    For an arbitrary round $t$, we can focus on the agent $i$ with $H^t_i>0$, since agent $i'$ with $H^t_{i'}=0$ is PROP1 trivially. According to Proposition~\ref{prop::potential-no-info}, we have $\Phi^t\leq n$, and hence, $g(\tau^t_i)\exp(\gamma D^t_i/H^t_i) \leq n$.
    Since $g(\tau^t_i)=(\tau^t_i(\tau^t_i+1))^{-1}$, we have $\exp(\gamma D^t_i/H^t_i)\leq n\tau^t_i(\tau^t_i+1) \leq nt(t+1)$, since $\tau^t_i\leq t$ by definition.
    By taking the logarithms and substituting $\gamma=3/4$, we have
    $
    D^t_i \leq \frac{4}{3}\log (nt(t+1)) H^t_i
    $.

    \begin{claim}\label{claim::D^t_i}
        For any agent $i$ and round $t$, $D^t_i=\theta V^t_i - B^t_i$. 
    \end{claim}
    \begin{proof}[Proof of Claim~\ref{claim::D^t_i}]
        We proceed by induction on $t$. For the base case of $t=0$, initialization gives $D^0_i=0 =\theta V^0_i-B^0_i.$
        Suppose that $D^{t-1}_i=\theta V^{t-1}_i - B^{t-1}_i$ holds. Depending on whether agent $i$ receives $g_t$, we have
        $
        V^t_i=V^{t-1}_i+x^t_i $ and $B^t_i=B^{t-1}_i+\mathds{1}_{i_t=i}x_i^t$, where $i_t$ is the agent receiving $g_t$.
        Hence, $D^t_i=D^{t-1}_i+\theta x^t_i - \mathds{1}_{i_t=i}x^t_i$.
        With the induction hypothesis, we have
        \[
        D^t_i = \theta V^{t-1}_i - B^{t-1}_i+\theta x^t_i - \mathds{1}_{i_t=i}x^t_i=\theta(V^{t-1}_i+x^t_i)-(B^{t-1}_i+\mathds{1}_{i_t=i}x^t_i)=\theta V^t_i-B^t_i,
        \]
        completing the proof of the claim.
    \end{proof}
    We also claim that $B^t_i+U^t_i\geq H^t_i$, since if $U^t_i<H^t_i$, then agent $i$ receives the good value $H^t_i$, and thus, $B^t_i \geq H^t_i$.
    Therefore, by $\theta V^t_i=D^t_i+B^t_i$ and the established upper bounds, we have
    \[
    \theta V^t_i \leq B_i^t+\frac{4}{3}\log(nt(t+1))H_i^t \leq \left(1+ \frac{4}{3}\log(nt(t+1))\right) \cdot (B^t_i+U^t_i),
    \]
    which implies
    \[
    \frac{n(B^t_i+U^t_i)}{V^t_i} \geq \frac{n\theta}{1+ \frac{4}{3}\log(nt(t+1))} \geq \frac{1}{4+\frac{16}{3}\log(nt(t+1))},
    \]
    where the last inequality transition is due to $\theta=1/(4n)$.

    Taking $t=m$ proves that desired approximation guarantee. Finally, since $n\geq2$ and $m\geq1$, we have 
    $nm(m+1)\leq (nm)^2$.
Therefore, it holds that $\ln\bigl(nm(m+1)\bigr)
    \leq 2\ln(nm)$,
and hence, the competitive ratio is at least
\[
    \frac{1}{
        4+\frac{32}{3}\log(nm)
    }
    =
    \Omega\left(\frac{1}{\log(nm)}\right),
\]
which completes the proof.
\end{proof}

For any fixed \(n\), the above theorem gives an \(\Omega(\frac{1}{\log m})\) guarantee. The next section shows that the dependence on \(m\) is unavoidable and the best achievable guarantee is within an \(O(\log\log m)\) factor of our established competitive ratio.

\subsection{Without Additional Information: An Impossibility Result}

% Choo et al.~\cite{DBLP:journals/corr/abs-2508-03253} ask what
% multiplicative PROP1 guarantee can be achieved by a deterministic online
% algorithm against an adaptive adversary when no future information is available. The known impossibility of exact PROP1
% rules out competitive ratio one, but does not quantify how the best
% achievable guarantee must depend on the number $m$ of goods or the number of agents $n$. {\color{red}[We have mentioned this in the intro. Perhaps, delete the paragraph.]}

In this section, we complement the result of $\Omega(\frac{1}{\log (nm)})$ algorithm by showing that for every fixed $n\geq 2$, no deterministic online algorithm can achieve better than $O(\frac{n\log\log m}{\log m})$ approximation for PROP1. 
In particular, for any algorithm and any $n\geq 2$, we construct an adaptive instance with only two agents so that the underlying algorithm does not result in competitive ratio better than $\frac{4n\log\log m}{\log m}$.
\begin{theorem}
\label{thm:general-lb}
For any $n\geq 2$, any sufficiently large $m$, and any deterministic online algorithm $ALG$, there exists an adaptive instance
in the setting without additional information with $m$ goods and for which the allocation returned by $ALG$ has PROP1 competitive ratio at
most $\frac{4n\log\log m}{\log m}$.
\end{theorem}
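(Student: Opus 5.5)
We reduce to two agents and then build the adaptive adversary around the normalized score $k_i^t=\frac{V_i^t-RB_i^t}{U_i^t}$ described in the introduction.

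\textbf{Reduction to two agents.} Fix $n\geq 2$. Agents $3,\ldots,n$ value every good at zero, so they are trivially PROP1 and cannot affect the fairness of agents $1$ and $2$. For agent $i\in\{1,2\}$ the proportional share is $V_i^t/n$. Hence whenever $U_i^t>0$ and $k_i^t>R$, we have $V_i^t>R(B_i^t+U_i^t)$. Since $U_i^t$ is the best single good outside $A_i^t$, the PROP1 factor of the allocation is below $n(B_i^t+U_i^t)/V_i^t<n/R$. The theorem therefore follows once we show the following: for $R=\frac{\log m}{4\log\log m}$ and all large $m$, every deterministic $ALG$ can be driven, within at most $m$ goods, to a round where $k_1^t>R$ or $k_2^t>R$. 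If the violation occurs before round $m$, we pad the sequence with goods of value zero to both real agents. These change none of $V_i,B_i,U_i$ (whoever receives them), so the instance has exactly $m$ goods.

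\textbf{The penalized game.} Since $ALG$ is deterministic, every finite sequence of value vectors determines the allocation. Fix a small penalty $\varepsilon>0$ per good. From any history $h$, consider the quantity $\Gamma(h)=\sup\{k_1(h\cdot c)-\varepsilon|c|\}$, where the supremum ranges over finite continuations $c$ that keep $k_1,k_2\le R$ throughout. If some continuation crosses $R$, we are already done. Otherwise $\Gamma(h)\le R$, so the supremum is finite, and we can choose a continuation $c_0$ whose penalized score is within $\varepsilon$ of $\Gamma$.

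At the history $h'=h\cdot c_0$, agent $1$ is almost as badly off as the adversary can ever make her, up to the penalty. We then offer a good with a small value $a$ to agent $1$, chosen relative to $U_1$ and $B_1$, and a value $b$ to agent $2$. If $ALG$ gives this good to agent $2$, then $k_1$ increases by roughly $a/U_1$ (it adds to $V_1$ but not to $B_1$). Choosing $a/U_1$ larger than the cost $2\varepsilon$ of the extra good would contradict the near-maximality of $c_0$. Hence $ALG$ is forced to give the good to agent $1$. Repeating this with a fresh near-maximizer after each step lets us route goods to agent $1$ at will, at a cost of one penalty unit per good, which bounds the total length.

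\textbf{Starving agent 2.} Agent $2$'s values $b_1<b_2<\cdots$ on the forced goods form a slowly increasing sequence, for example $b_{j+1}=(1+1/R)b_j$ in blocks. Each new value exceeds all goods agent $2$ previously missed, so $U_2$ equals the latest value and $B_2$ does not grow. Since $V_2$ is a sum of about $R^2$ terms within a constant factor of the last one, $V_2/U_2$ grows past $R(1+B_2/U_2)$, and thus $k_2>R$. Any good agent $2$ received earlier has its value absorbed because the scale $U_2$ keeps rising. The main obstacle is the accounting: each forcing step raises agent $1$'s received value and lowers $k_1$, while the forced goods must stay small for agent $1$ but large for agent $2$. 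Balancing $a$, $\varepsilon$ and the growth rate of agent $2$'s values across nested rounds of ``re-maximize, then force'' gives a recursion of depth $O(R)$ with branching $\mathrm{poly}(R)$. The total number of goods is then $m\le R^{O(R)}$, which is exactly what $R=\Theta(\log m/\log\log m)$ allows; with constants tuned, this gives the factor $\frac{4n\log\log m}{\log m}$. I would first verify this recursion in the case where $ALG$ never gives agent $2$ anything, and then handle deviations by showing that each deviation only strengthens the next near-maximizer.
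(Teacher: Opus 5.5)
You have several pieces right and they match the paper: the reduction to two agents with zero-valued dummies, the certificate that $k_i^t>R$ forces a PROP1 factor below $n/R$, the zero-padding, and the single forcing step. That step says that at a near-maximizer of $k_1-\varepsilon\cdot\text{length}$, a good of normalized value in $(2\varepsilon,1]$ for agent~1 must go to agent~1.

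The gap is ``repeating this with a fresh near-maximizer after each step.'' For the forcing contradiction, the class over which you take the supremum must contain the near-maximizer followed by the forced good, and that good is positive for agent~2. So the class contains continuations in which agent~2 has positive values. The near-maximizer $c_j$ inserted between two forced goods is then simply whatever nearly attains the supremum, and you do not control it. In general it raises $k_1$ through goods that agent~1 misses. $ALG$ may hand these to agent~2, and each receipt lowers $k_2$ by $(R-1)y_2$. Alternatively, agent~2 may miss a huge good inside $c_j$, which resets $k_2$ to about $1$ by rule~(iii) of Lemma~\ref{lem:exact-score-transitions}.

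So agent~2's progress is not preserved across re-maximizations, and the ``starving agent~2'' argument has nothing to stand on. Your plan to ``handle deviations'' addresses the wrong danger: deviations on the forced goods are already excluded by near-maximality.

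This cannot be fixed by accounting alone. Suppose each forced good cost one penalty unit with $\varepsilon$ polynomially small in $R$, and agent~2 still gained steadily. Then you would force $k_2>R$ within $\mathrm{poly}(R)$ goods, giving a factor $O(n\,m^{-c})$ and contradicting Theorem~\ref{thm::no-info}. The $R^{O(R)}$ length needs a concrete mechanism, and your ``recursion of depth $O(R)$ with branching $\mathrm{poly}(R)$'' is asserted without one.

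The missing idea is to \emph{never} re-maximize. The paper first runs a Phase~1 of at most $R+1$ unit goods, which makes both agents active with $k_i\ge 1+R-R^2$. You need this too, both for $k_i$ to be defined and to bound the near-maximizer's length. It then takes a single near-maximizer $s^\star$ of $k_1-\lambda\ell$ over reachable continuations of length at most $T_0$. The length cap is what keeps the eventual violating continuation within $m$ goods, which your uncapped ``we are already done'' does not guarantee. All remaining goods are then forced consecutively, as follows.

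First comes one reset good with $y_1=d+2\lambda$ and $y_2>|k_2|+1$. It must go to agent~1, and it puts $k_2\in(0,2)$ regardless of what $s^\star$ did to agent~2.

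Then come $R$ goods with $y_2=1$ and $y_{1,t}=2(d_{t-1}+\lambda)$, where $d_{t-1}$ is agent~1's current gap to the supremum $M$. Once $k_2>0$, rules (ii) and (iii) show that no miss raises $k_2$ by more than $1$. So your slowly increasing values only cost extra goods, and the number of consecutively forced goods sits in the exponent of $1/\lambda$.

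Each forced receipt gives $d_t=(2R-1)(d_{t-1}+\lambda)$, so the gap grows geometrically. The penalty must therefore be fixed in advance as $\lambda=1/(4(3R+2)(2R-1)^R)$, which keeps every $y_{1,t}<1$; this is what makes a miss raise $k_1$ additively.

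The length bound then comes from $\lambda\,\ell(s^\star)<R^2$. This gives at most $(2R)^{R+5}$ goods in total, and hence $R=\Theta(\log m/\log\log m)$.
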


For any round $t$, we say an agent is \emph{active} if she has missed at least one positive valued good.  
For any $i\in[n]$ and round $t$, define
\[
  \rho^t_i=\min\{1, \frac{n(B^t_i+U^t_i)}{V^t_i}\},
\]
which can be viewed as the capped multiplicative PROP1 competitive ratio.
For any agent $i$ and round $t$, define $\rho^t=\min_{i\in N}\rho_i^t$ and let $x^t_i=v_i(g_t)$.
Fix an integer $R\ge2$.  
For any active agent $i$ and round $t$, define
\[
  k_i^t:=\frac{V_i^t-RB_i^t}{U_i^t},
\]
which can be regarded as her \emph{score}.
Rearranging, we get $\frac{k_i^t}{R}=\frac{\frac{V_i^t}{R}-B_i^t}{U_i^t}$, which measures
the number of the  $U_i^t$ valued good we have to owe agent $i$ to make her satisfied at level $\frac{n}{R}$.
Since PROP1 allows agent to add one missed good, if some active agent $i$ at round $t$ has $k_i^t>R$, the allocation has a PROP1 competitive ratio strictly below $\frac{n}{R}$.
We will prove this formally in the next lemma.

\subsubsection*{Normalized Score Dynamics}
\begin{lemma}
  For any round $t$, if there exists an active agent $i$ such that $k_i^t>R$, then $\rho^t<\frac{n}{R}$.
  \label{lem:score-certificate}
\end{lemma}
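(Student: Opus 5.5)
We argue directly from the definitions, with no earlier lemma needed. Fix a round $t$ and an active agent $i$ with $k_i^t>R$. Since $i$ is active, she has missed at least one positive-valued good, so $U_i^t>0$ and $k_i^t$ is well defined. Multiplying $k_i^t>R$ by $U_i^t>0$ gives
\[
V_i^t-RB_i^t>RU_i^t, \qquad\text{that is,}\qquad V_i^t>R\,(B_i^t+U_i^t).
\]
In particular $V_i^t>0$, so $\rho_i^t$ is well defined.

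The next step is to convert this into a bound on $\rho_i^t$. Dividing by $V_i^t>0$ yields
\[
\frac{n(B_i^t+U_i^t)}{V_i^t}<\frac{n}{R}.
\]
Because $R\ge 2$ and we may assume $n/R\le 1$ (and in any case $\min\{1,y\}\le y$), we get
\[
\rho_i^t=\min\Bigl\{1,\tfrac{n(B_i^t+U_i^t)}{V_i^t}\Bigr\}\le \frac{n(B_i^t+U_i^t)}{V_i^t}<\frac{n}{R}.
\]
Then $\rho^t=\min_j\rho_j^t\le\rho_i^t<n/R$.

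It remains to make clear why $\rho^t$ measures the PROP1 factor, and this is where some care is needed. The PROP1 factor of agent $i$ at round $t$ is $n\max_{g\notin A_i^t}v_i(A_i^t\cup\{g\})/V_i^t=n(B_i^t+U_i^t)/V_i^t$, capped at $1$. Agent $i$ cannot be exactly proportional here, since $B_i^t<V_i^t/R\le V_i^t/n$ whenever $R\ge n$. For $R<n$ the claimed bound $n/R$ exceeds $1$ and the statement about $\rho^t$ is trivial from the inequality above. So the lemma holds as stated for $\rho^t$ regardless, and the only point to note is that positivity of $U_i^t$ comes from activeness. No real obstacle remains.
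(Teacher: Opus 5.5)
Your proof is correct and follows the paper's argument essentially line for line: you use activeness to get $U_i^t>0$, clear the denominator in $k_i^t>R$ to obtain $V_i^t>R(B_i^t+U_i^t)$, divide by $V_i^t$, and conclude via $\rho^t\le\rho_i^t\le n(B_i^t+U_i^t)/V_i^t<n/R$. Your explicit check that $V_i^t>0$ is a small point the paper leaves implicit, and your closing discussion of the cases $R\ge n$ versus $R<n$ is not needed for the lemma as stated.
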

\begin{proof}
  Since agent $i$ is active, $U_i^t>0$ holds.
  By the definition of $k_i^t$, the inequality $k_i^t>R$ implies $V_i^t-RB_i^t>RU_i^t$.  Thus
  \[
    \frac{n(B_i^t+U_i^t)}{V_i^t}<\frac{n}{R}.
  \]
  Since
  \[
    \rho_i^t = \min\left\{   1,\frac{n(B_i^t+U_i^t)}{V_i^t}\right\}\le   \frac{n(B_i^t+U_i^t)}{V_i^t},
  \]
  and $\rho^t\le\rho_i^t$, we conclude that $\rho^t<\frac{n}{R}$.
\end{proof}
The above lemma gives the connection between $k_i^t$ and $\rho^t$.
Once we get an active agent whose score is strictly larger than $R$, the $\frac{n}{R}$-PROP1 allocation algorithm fails.
So we need transition rules of $k_i^t$ at each round, which will be provided by the next lemma.

\begin{lemma}
  At the beginning of any round $t$, for any active agent $i$ with normalized value $y_i^t=x_i^t/U_i^{t-1}$, the following transition rules hold:
  \begin{enumerate}[label=(\roman*)]
      \item if $i$ receives current good $g_t$, then
    $k_i^t\gets k_i^{t-1}-(R-1)y_i^t$;
      \item if $i$ misses $g_t$ and $0\le y_i^t\le1$, then
    $k_i^t\gets k_i^{t-1}+y_i^t$;
      \item if $i$ misses $g_t$ and $1<y_i^t$, then
    $k_i^t\gets 1+k_i^{t-1}/y_i^t$.
  \end{enumerate}
  \label{lem:exact-score-transitions}
\end{lemma}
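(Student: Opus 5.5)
The plan is a direct computation from the definitions of $V_i^t$, $B_i^t$, $U_i^t$, splitting into the three cases of the statement. Since $i$ is active at round $t-1$, we have $U_i^{t-1}>0$, so $k_i^{t-1}=(V_i^{t-1}-RB_i^{t-1})/U_i^{t-1}$ and $y_i^t$ are well defined. The only extra ingredient is how $U_i$ evolves. If $i$ receives $g_t$, the set of arrived-but-unreceived goods does not change, so $U_i^t=U_i^{t-1}$. If $i$ misses $g_t$, that set gains $g_t$, so $U_i^t=\max\{U_i^{t-1},x_i^t\}$. In every case $i$ remains active, because $U_i^t\ge U_i^{t-1}>0$, so $k_i^t$ is defined.

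In case (i), $V_i^t=V_i^{t-1}+x_i^t$ and $B_i^t=B_i^{t-1}+x_i^t$, so the numerator changes by $x_i^t-Rx_i^t=-(R-1)x_i^t$. Dividing by the unchanged $U_i^{t-1}$ gives $k_i^t=k_i^{t-1}-(R-1)y_i^t$.

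In cases (ii) and (iii), the numerator becomes $V_i^{t-1}-RB_i^{t-1}+x_i^t$, since $B_i$ does not change.
\begin{itemize}
\item In case (ii), $x_i^t\le U_i^{t-1}$, so the denominator stays $U_i^{t-1}$. Hence $k_i^t=k_i^{t-1}+x_i^t/U_i^{t-1}=k_i^{t-1}+y_i^t$.
\item In case (iii), $x_i^t>U_i^{t-1}$, so the new denominator is $x_i^t=y_i^tU_i^{t-1}$. Then
\[
k_i^t=\frac{k_i^{t-1}U_i^{t-1}+x_i^t}{x_i^t}=\frac{k_i^{t-1}}{y_i^t}+1.
\]
\end{itemize}

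There is no real obstacle here. The only point needing care is the bookkeeping of $U_i^t$ under receive versus miss, in particular that receiving $g_t$ leaves $U_i$ unchanged. I would state the update of $U_i^t$ explicitly before running through the three cases.
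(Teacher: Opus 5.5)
Your proposal is correct and follows the paper's proof: a direct case-by-case computation from the definitions of $V_i^t$, $B_i^t$, $U_i^t$. Your explicit observations that receiving $g_t$ leaves $U_i$ unchanged and that $i$ remains active make precise two steps the paper uses only implicitly.
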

\begin{proof}
  At the beginning of any round $t$ and for any agent $i$, $V_i^t=V_i^{t-1}+y_i^tU_i^{t-1}$ holds.
  If $i$ receives $g_t$, then $B_i^t=B_i^{t-1}+y_i^tU_i^{t-1}$.
  By the definition of $k_i^t$, we have
  \[
    k_i^t=\frac{V_i^{t-1}+y_i^tU_i^{t-1}-RB_i^{t-1}
      -Ry_i^tU_i^{t-1}}{U_i^{t-1}}=k_i^{t-1}-(R-1)y_i^t.
  \]
  Now suppose $i$ misses $g_t$, then $B_i^t=B_i^{t-1}$ and consider two cases.
  If $0\le y_i^t\le1$, then $x_i^t\le U_i^{t-1}$.  Thus $U_i^t=U_i^{t-1}$.
  Therefore
  \[
    k_i^t=\frac{V_i^{t-1}+y_i^tU_i^{t-1}-RB_i^{t-1}}{U_i^{t-1}}=k_i^{t-1}+y_i^t.
  \]
  If $y_i^t>1$, then $x_i^t>U_i^{t-1}$.  Thus $U_i^t=y_i^tU_i^{t-1}$.  Consequently
  \[
    k_i^t=\frac{V_i^{t-1}+y_i^tU_i^{t-1}-RB_i^{t-1}}{y_i^tU_i^{t-1}}
    =1+\frac{k_i^{t-1}}{y_i^t}.
  \]
  We proved the three transition rules.
\end{proof}

The three branches of Lemma~\ref{lem:exact-score-transitions} play distinct roles in transition.
The first rule is a direct way to lower a score: once one agent receives one positive valued good, her score will decrease.
The second rule is a direct way to raise a score: if an agent $i$ misses one positive valued good with normalized value no more than $1$, her score moves upward.
This will be the core transition to make one agent's score exceed $R$.
The third one is subtle, for it may increase or decrease a score.
But it has a unique feature.
If agent $i$ misses a good with a sufficiently large normalized value, then her score lies in an arbitrarily small neighborhood of $1$, no matter how negative it was.

\subsubsection*{Length-Penalized Forcing Construction}
Before presenting the formal proof of Theorem~\ref{thm:general-lb}, we first explain the idea of our construction.
With Lemma~\ref{lem:score-certificate}, for any round $t$, once there exists one agent $i$ that satisfies $k_i^t>R$, the PROP1 competitive ratio is less than $\frac{n}{R}$.
The goal of the adversary is therefore to force the $k_i^t$ of some agent $i$ larger than $R$, at round $t$.

Our construction is for two agents and can be extended to any $n$ agents by adding $n-2$ dummy agents.
The constructed instance can be divided into four phases.
In Phase $1$, there are at most $R+1$ goods, each having value $1$ for both agents and value $0$ for all other dummy agents.
If none of their scores exceeds $R$, then the two agents will be active within the $R+1$ goods.
Then in Phase $2$,  within a bounded number of goods, the adversary drives agent $1$ very close to break her supremum, which we will discuss further later.
In Phase $3$, the adversary uses one good to confine agent $2$'s score to $(0, 2)$.
Finally in Phase $4$, the adversary constructs several goods, and their values are designed in a way that each of them must be allocated to agent $1$; otherwise her supremum guarantee fails.
However, agent $2$'s score will exceed $R$.
After this construction we append zero valued goods until the total number of goods is exactly $m$.

\begin{figure}[htbp]
  \centering
  \includegraphics[width=0.9\linewidth]{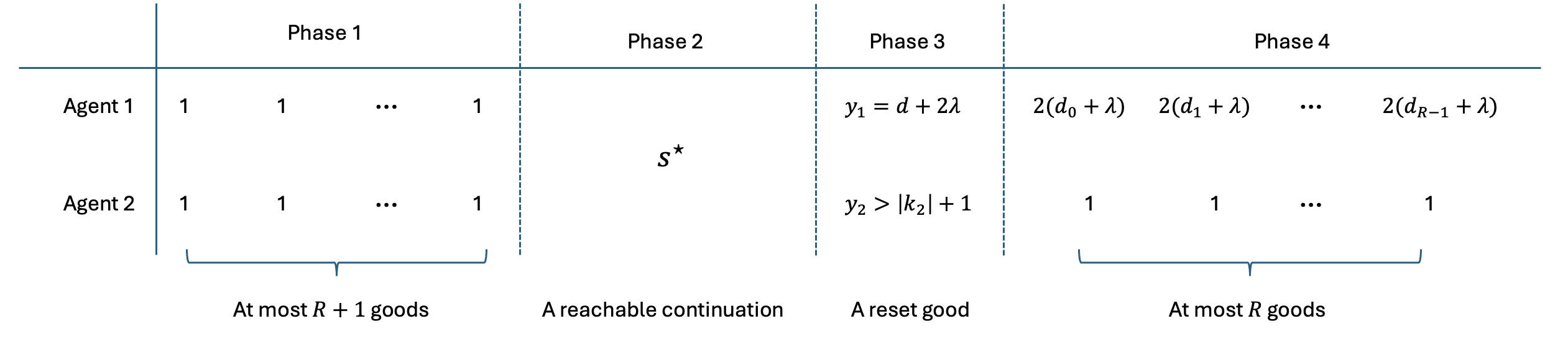}
  \caption{Illustration of the counterexample construction in the proof of Theorem~\ref{thm:general-lb}.}
  \label{fig:length-penalized-forcing-construction}
\end{figure}

\begin{proof}[Proof of Theorem~\ref{thm:general-lb}]
  For a contradiction, assume that there exists a deterministic algorithm $ALG$ such that, at any round $t$ the score of every active agent never exceed $R$.

In Phase $1$, there are at most $R+1$ goods with value $1$ for agent $1$ and $2$, and value $0$ for all other dummy agents.
If $ALG$ allocates the first good to a dummy agent, then both are active, and the adversary construction enters Phase $2$.
Otherwise, without loss of generality, we assume the first good is allocated to agent $1$.
If the remaining $R$ goods are all allocated to agent $1$, then agent $2$ is active with $k_2^{R+1}=R+1>R$.
That contradicts the hypothesis that for any active agent $i$ and any round $t$, $ALG$ maintains $k_i^t\le R$.
So at least one of the first $R+1$ is not allocated to agent $1$.
Let $T\le R+1$ be the first such round.
The good $g_T$ makes agent $1$ active and agent $2$ is active already, so both agent $1$ and agent $2$ are active at round $T$.

Because both agents have missed at least one good, for any $i=1,2$, $B_i^T\le T-1$, $U_i^T=1$ and $V_i^T=T$ hold.
Therefore,
\[
  k_i^T = V_i^T-RB_i^T \ge T-R(T-1) \ge R-(R-1)(R+1)=1+R-R^2,
\]
where the last inequality transition follows from $T\le R+1$.

In Phase $2$, for convenience, we restart the round index at zero, that is, for any $i=1,2$, $1+R-R^2\le k_i^0\le R$ holds, where $k_i^0$ denotes the score right after Phase $1$.
Define $\Lambda=2R-1$, and
\[
  \lambda=\frac{1}{4(3R+2)\Lambda^R}, \quad
  T_0=\left\lceil\frac{R^2}{\lambda}\right\rceil+R.
\]

Let $\mathcal{A}$ be a deterministic algorithm, and let $H_t$ denote the history through round $t$, that is, the goods revealed in round $1,\ldots,t$, together with the recipients $\mathcal{A}$ assigned to them.
A \emph{continuation} from round $t$ is a finite sequence
\[
  s=\bigl((x^1,a_1),\ldots,(x^\ell,a_\ell)\bigr)
\]
of value vectors with their recipients, describing the $\ell$ consecutive rounds following round $t$, and its \emph{length} $\ell(s)=\ell$ is the number of its goods.
For $1\le j\le\ell+1$, write
\[
  s_{<j}=\bigl((x^1,a_1),\ldots,(x^{j-1},a_{j-1})\bigr)
\]
for the round of $s$. Write $H_ts$ for the concatatenation of $H_t$ and $s$, which means the history of $t+\ell$ rounds obtained.
The continuation is \emph{reachable} if for any $j\in [\ell]$
\[
  a_j=\mathcal A\bigl(H_ts_{<j},\,x^j\bigr)
\]
that is, any recipient in $s$ is chosen by $\mathcal{A}$.

For $ALG$, let $\mathcal{S}$ denotes the set of all reachable continuations from round zero reached at the end of Phase $1$, with length at most $T_0$.
For any $s\in\mathcal{S}$ and any active agent $i$, we write $\ell(s)$ for the continuation's length and $k_i(s)$ for agent $i$'s score at its endpoint.
By the hypothesis on $ALG$, no reachable continuation in $\mathcal{S}$ drives either agent's score above $R$.
Equivalently, for any $s\in\mathcal S$ and any $i\in[2]$, $k_i(s)\le R$ holds.

For any $s\in \mathcal{S}$, define agent $1$'s \emph{length-penalized score} $F(s)=k_1(s)-\lambda\ell(s)$.
Let
\[
  M=\sup_{s\in\mathcal{S}}\{F(s)\}=
  \sup_{s\in\mathcal{S}}\{k_1(s)-\lambda\ell(s)\}.
\]
Because the empty continuation belongs to $\mathcal{S}$, $M\ge k_1^0\ge 1+R-R^2$.
By the hypothesis, we have $M \le R$.
With the definition of $M$, there exists a reachable continuation $s^\star\in\mathcal{S}$ such that $F(s^\star)>M-\lambda$.
Let $d=M-(k_1(s^\star)-\lambda\ell(s^\star))<\lambda$.
Thus
\[
  \lambda\ell(s^\star) < k_1(s^\star)+\lambda-M < R+\lambda-(1+R-R^2) < R^2.
\]
So $\ell(s^\star)\le\left\lceil R^2/\lambda\right\rceil-1$.

In Phase $3$, we will use one good to confine agent $2$'s score to $(0, 2)$.
Reveal a good with normalized value $y_1=d+2\lambda$ and $y_2>|k_2|+1$.
Since $d<\lambda$, it follows that $y_1<3\lambda<1$.
If agent $1$ misses it, her length-penalized score is
\[
  k_1(s^\star)+y_1-\lambda(\ell(s^\star)+1) = M-d-\lambda+y_1 = M+\lambda > M,
\]
contradicting the definition of $M$.
Therefore this good must be allocated to agent $1$.
Let $k_{1,0}$ be agent $1$'s score and $k_{2,0}$ be agent $2$'s score and $\ell_0$ be the new continuation length after this allocation, i.e., $\ell_0=\ell(s^\star)+1$.
By Lemma~\ref{lem:exact-score-transitions} $k_{1,0}=k_1-(R-1)(d+2\lambda)$ and $k_{2,0}=1+\frac{k_2}{y_2}$.  Since $|\frac{k_2}{y_2}|<1$, it follows that $0<k_{2,0}<2$.
Define
\begin{equation}
 d_0=M-[k_1(s^\star)-(R-1)(d+2\lambda)-\lambda\ell_0]=d+(R-1)(d+2\lambda)+\lambda< (3R-1)\lambda.
  \label{eq:d_0-bound}
\end{equation}

This good yields a score for agent $2$ that is greater than zero and less than two.
A receipt cannot increase an agent's score, whereas a miss of normalized value one increases it by exactly one.
We therfore set the normalized value of every good in the next phase to one for agent $2$, and try to force the allocation of the remaining goods to agent $1$.
If we succeed in doing so for $R$ consecutive rounds, then agent $2$'s scire will exceed $R$.

In Phase $4$, we number the goods of this phase $1,\ldots,R$.
For any round $t\in[R]$, let $k_{i,t}$ denote agent $i$'s score and let $\ell_t$ be the corresponding continuation length after the first $t$ goods of this phase have been allocated, so that $\ell_t=\ell_0+t$.
In particular, $t=0$ refers to the state immediately after Phase $3$, and $k_{i,0}$, $\ell_0$ are inherited from that phase.

For any round $t\in[R]$ define $d_t=M-(k_{1,t}-\lambda\ell_t)$.
Reveal a good $g_t$ with normalized value
$y_{1,t}=2(d_{t-1}+\lambda)$ and $y_{2,t}=1$.
Set $c=\Lambda/(\Lambda-1)$.
We will prove that for any round $t\ge1$, $0\le y_{1,t}\le1$, $d_t+c\lambda=\Lambda^t(d_0+c\lambda)$ and $g_t$ will be allocated to agent $1$, with induction on $t$.

\emph{Base case}: In the first round $y_{1,1}=2(d_0+\lambda)\ge0$ and with $d_0<(3R+2)\lambda$ we have
\[
  y_{1,1}<2[(3R+2)\lambda+\lambda]<1.
\]
If $g_1$ is  not allocated to agent $1$, then $k_{1,1}=k_{1,0}+y_{1,1}=M+d_0+\lambda(\ell+3)$.
Thus her length-penalized score will be $k_{1,1}-\lambda\ell_1=M+d_0+\lambda>M$, contradicting the definition of $M$.
Therefore $ALG$ must allocate it to agent $1$.
Then $k_{1,1}=k_{1,0}-(R-1)y_{1,1}$, so $d_1=(2R-1)(d_0+\lambda)$.
Rearranging, we get $d_1+c\lambda=\Lambda(d_0+c\lambda)$.

\emph{Induction hypothesis}: Fix $r$ with $2\le r\le R$, and assume $0\le y_{1,r-1}\le1$, $d_{r-1}+c\lambda=\Lambda^{r-1}(d_0+c\lambda)$ and $g_r$ is allocated to agent $1$.

\emph{Induction step}: For Inequality (\ref{eq:d_0-bound}) and $c<2$,
\[
  d_{r-1}+c\lambda = \Lambda^{r-1}(d_0+c\lambda)< \Lambda^{r-1}(3R+1)\lambda.
\]
So
\[
  y_{1,r} = 2(d_{r-1}+\lambda)< 2\lambda[\Lambda^{r-1}(3R+1)+1]<2\lambda\Lambda^{r-1}(3R+2)=\frac{\Lambda^{r-1}}{2\Lambda^R}< 1.
\]
If $g_r$ is not allocated to agent $1$, then $k_{1,r}=k_{1,r-1}+y_{1,r}$, so
\[
  d_r = M-[k_{1,r-1}+2(d_{r-1}+\lambda)-\lambda\ell_{r-1}-\lambda] = -d_{r-1}-\lambda < 0,
\]
contradicting the definition of $M$.
Then $g_r$ is allocated to agent $1$, and $k_{1,r}=k_{1,r-1}-(R-1)y_{1,r}$.
Thus
\[
  d_r = M-(k_{1,r}-\lambda\ell_r) = d_{r-1}+\lambda+(R-1)2(d_{r-1}+\lambda) = (2R-1)(d_{r-1}+\lambda).    
\]
Rearranging, we get $d_r+c\lambda=\Lambda(d_{r-1}+c\lambda)$. Thus
$d_r+c\lambda=\Lambda^r(d_0+c\lambda)$.

The induction proved that all these goods will be allocated to agent $1$.
Thus $k_{2,t}=k_{2,t-1}+1$, and hence, by iteration for any $t\in[R]$, $k_{2,t}=k_{2,0}+t$.
Since Phase $3$ ensures that $k_{2,0}>0$, after the $R$ rounds, we obtain $k_{2,R}=k_{2,0}+R>R$.
Therefore agent $2$'s score exceeds $R$ during this phase, and the continuation built in Phase $2$ to $4$ is reachable and has length at most $T_0$, so it lies in $\mathcal{S}$, contradicting our hypothesis.
We conclude that for any integer $R\ge2$ and any deterministic algorithm, the adversary is able to force the score of some agent above $R$, and by Lemma~\ref{lem:score-certificate} the PROP1 factor at this construction is below $\frac{n}{R}$.

Phase $1$ uses at most $R+1$ goods, Phase $2$ to $4$ at most $T_0$ goods.
Hence the total length of the construction is no greater than $R+1+T_0=\left\lceil\frac{R^2}{\lambda}\right\rceil+2R+1$.
For $R\ge2$, we have $3R+2\le4R$ and $2R-1\le2R$.  Thus
\[
  \frac{R^2}{\lambda}=4R^2(3R+2)(2R-1)^R\le16R^3(2R)^R.
\]
Therefore the total length of the construction is at most $\left\lceil\frac{R^2}{\lambda}\right\rceil+2R+1\le32R^3(2R)^R\le(2R)^{R+5}$.
Fix an arbitrary sufficiently large integer $m$.  Set
\[
  R=\left\lfloor\frac{\log m}{2\log\log m}\right\rfloor.  
\]
Because $m$ is sufficiently large, the quantity inside the floor is at least $4$.  Hence
\[
  R\ge\frac{\log m}{4\log\log m}.
\]
Define $L=\log m$ and $\ell_m=\log\log m=\log L$.
Because $m$ is sufficiently large, $L\ge10\ell_m$.
Thus $5\le L/(2\ell_m)$.
Since $R\le L/(2\ell_m)$
\[
  \log((2R)^{R+5})=(R+5)\log(2R)\le(\frac{L}{2\ell_m}+\frac{L}{2\ell_m})\log\frac{L}{\ell_m}=\frac{L}{\ell_m}\log\frac{L}{\ell_m}.
\]
For sufficiently large $m$, $\ell_m\ge1$, so $L/\ell_m\le L$.  Thus $\log((2R)^{R+5})\le L=\log m$.

Running the adversary construction until the first round at which some agent's score exceeds $R$.
The preceding length estimate shows that this occurs within at most $m$ goods.
If the violation occurs before round $m$, append zero-valued goods until the total number is $m$.
This padding will not change $V_i$, $B_i$, $U_i$, or the PROP1 competitive ratio.
Let agent $i$ be the first whose score exceeds $R$ in round $t$.
By Lemma \ref{lem:score-certificate}, $\rho^t<\frac{n}{R}$.
It follows that
\[
  \rho\le\frac{4n\log\log m}{\log m},
\]
for the sufficiently large $m$.
\end{proof}

\section{With Accurate Maximum Item Value Predictions}\label{sec::accurate-MIV}

In this section, we are concerned with the setting when the maximum item value predictions are accurate.
We present a deterministic algorithm achieving competitive ratio 0.5, and complement it with an upper bound strictly less than one. 

\subsection{With Accurate MIV Predictions: A Deterministic Algorithm}

In the setting without additional information, the algorithm uses
$H_i^t$, the largest value observed for agent $i$ by round $t$, as a
normalization value. Since $H_i^t$ may increase as new goods arrive, the
potential must account for changes in this value. With accurate MIV
information, the value $p_i$ is known before the first round and remains
fixed throughout the allocation process. We use this fixed value to
normalize agent $i$'s valuations and construct a reciprocal potential.
For every agent $i\in N$, recall that $p_i$ is her MIV, which is known before the first round.
Define $N^+=\{i\in N:p_i>0\}$ and $\beta=\frac{1}{2n}$.
For every $i\notin N^+$, we have $v_i(g_t)=0$ in every round $t$. Such agents satisfy PROP1 trivially.
We therefore define the quantities used by the algorithm only for agents
in $N^+$.

We first informally introduce the algorithm. For any $i$ and round $t$, right after allocating $g_t$, the algorithm maintains three quantities $\delta^t_i$, $h(\textsc{Sta}^t_i)$, and $q^t_i$.
The $\delta^t_i$ represents the normalized difference between agent $i$'s value and $1/2$ fraction of her proportional share right after allocating $g_t$. 
Once $g_t$ has been allocated, the algorithm updates $\delta^t_i$ as follows: if $i$ receives $g_t$, her $\delta$ decreases by $1-\beta$, and otherwise, increases by $\beta$.
This update rule ensures that $\delta^t_i=\frac{V^t_i-B^t_i}{p_i}$, which we will formally prove below. 
The $h(\textsc{Sta}^t_i)$ depends on \emph{status} $\textsc{Sta}^t_i$ and represents the correction term from the goods outside the bundle of agent $i$.
We will explain status and define the corresponding $h$ values below. Informally, if agent $i$ misses some good with value 1 for her, then the $h$ value is set to one. If agent $i$ receives every good with value 1 for her, then the $h$ value is set to zero.
The $q^t_i$ is equal to $h(\textsc{Sta}^t_i) - \delta^t_i$, which can be viewed as the safety distance of agent $i$. At termination, we will prove that for any agent $i$, $q^t_i>0$, which then gives the desired approximation factor.
Moreover, we treat $\frac{1}{q^t_i}$ as agent $i$'s contribution to the \emph{potential}, defined as $\Psi^t=\sum_i \frac{1}{q^t_i}$. We choose the reciprocal potential, as it assigns enormous urgency to an agent whose safety distance is close to zero.

Since $p_1,\ldots,p_n$ are known before the arrival for the first good, when $g_t$ arrives, we observe $x^t_1,\ldots,x^t_n$ and compute $y^t_i=x^t_i/p_i$ the normalized value of $g_t$ for agent $i$.
Then the algorithm classifies the agents into two classes, $\textsc{Spe}^t$ and $\textsc{Ord}^t$, where each agent in $\textsc{Spe}^t$ is called special and each agent in $\textsc{Ord}^t$is called ordinary.
An agent $i$ is in $\textsc{Spe}^t$ if $y^t_i=1$ and no good with value 1 for $i$ has been allocated to other agent before the allocation of $g_t$. Otherwise, the agent $i$ is in $\textsc{Ord}^t$.
The algorithm uses $G^t_i$'s to decide the recipient of $g_t$. For each ordinary agent $i$, $G^t_i$ is the reduction in the reciprocal of safety distance obtained by assigning $g_t$ to $i$, compared with letting agent $i$ miss $g_t$.
And for each special agent $i$, set $G^t_i=0$. The algorithm then allocates $g_t$ to the agent having the highest $G^t_i$.
The formal description is provided in Algorithm~\ref{alg:MIV-0.5}.

We now formally introduce status and its update rule. For any agent $i$ and round $t$, the algorithm maintains  $\textsc{Sta}^t_i$ right after the allocation of $g_t$. 
There are in total three status, namely, $\textsc{Sta1},\textsc{Sta2},\textsc{Sta3}$.
% In other words, $\textsc{Sta}^t_i \in \{\textsc{Sta1}, \textsc{Sta2}, \textsc{Sta3}\}$. 
Status tell whether the good with normalized value 1 has appeared for the agent, and whether at least one good with normalized value 1 has been assigned to another agent.
For better expositions, we explain their informal representations below;
\begin{itemize}
    \item $\textsc{Sta}^t_i=\textsc{Sta1}$ informally represents that no good with normalized value 1 for $i$ has appeared at round $t$;
    \item $\textsc{Sta}^t_i=\textsc{Sta2}$ informally represents that at least one good with normalized value 1 for $i$ has appeared, and every such good has been assigned to agent $i$;
    \item $\textsc{Sta}^t_i=\textsc{Sta3}$ informally represents that at least one good with normalized value 1 for $i$ has been assigned to some agent $j\neq i$.
\end{itemize}

Initially, for each $i\in N^+$, $\textsc{Sta}^0_i=\textsc{Sta1}$. For any round $t$, after $g_t$ is allocated, the status of agent $i$ is updated based on the following rule.
We let $i_t$ denote the agent receiving $g_t$ in the algorithm. For any $i$ and round $t$, 
\begin{itemize}
    \item if $y^t_i<1$, then $\textsc{Sta}^t_i\gets \textsc{Sta}^{t-1}_i$;
    \item if $y^t_i=1$, $\textsc{Sta}^{t-1}_i = \textsc{Sta1}$ and $i_t=i$, then $\textsc{Sta}^t_i\gets \textsc{Sta2}$;
    \item if $y^t_i=1$, $\textsc{Sta}^{t-1}_i = \textsc{Sta1}$ and $i_t\neq i$, then $\textsc{Sta}^t_i\gets \textsc{Sta3}$;
    \item if $y^t_i=1$, $\textsc{Sta}^{t-1}_i = \textsc{Sta2}$ and $i_t = i$, then $\textsc{Sta}^t_i\gets \textsc{Sta2}$;
    \item if $y^t_i=1$, $\textsc{Sta}^{t-1}_i = \textsc{Sta2}$ and $i_t \neq i$, then $\textsc{Sta}^t_i\gets \textsc{Sta3}$;
     \item if $y^t_i=1$ and $\textsc{Sta}^{t-1}_i = \textsc{Sta3}$, then $\textsc{Sta}^t_i\gets \textsc{Sta3}$;
\end{itemize}
One can verify that the described intuitions of status is compatible with the update rule. Define $h(\textsc{Sta}_1)=1-\beta$, $h(\textsc{Sta}_2)=0$, and $h(\textsc{Sta}_3)=1$. Since $h$ refers to the correction term outside the bundle of the agent, the values of $h(\textsc{Sta}_2)$ and $h(\textsc{Sta}_3)$ are quite clear. As for $h(\textsc{Sta}_1)$, we make its value as $1-\beta$ so that the first good with value $p_i$ for $i$ does not affect $q_i$, regardless of who receives it.

\begin{algorithm}[t]
	\caption{Deterministic allocation with exact MIV information}
	\label{alg:MIV-0.5}
 \renewcommand{\algorithmicensure}{\textbf{Output:}}
	\begin{algorithmic}[1]
        \REQUIRE A fixed deterministic ordering of the agents for tie-breaking.
        \STATE Initialize $\delta^0_i=0, \textsc{Sta}^0_i=\textsc{Sta}_1, q^0_i=1-\beta$ and the empty allocation $A=(A_1,\ldots,A_n)$ with $A_i=\varnothing$ for all $i$.
        \WHILE{there exists a good $g_t$ arriving online}
        \STATE Observe $x^t=(x^t_1,\ldots,x^t_n)$ and compute $y^t_i=x^t_i/p_i$ for every $i\in N^+$.
        \STATE Define $\textsc{Spe}^t=\{i\in N^+:y^t_i=1 \text{ and } \textsc{Sta}^{t-1}_i\in \{\textsc{Sta1},\textsc{Sta2}\}\}$ and $\textsc{Ord}^t=N^+\setminus \textsc{Spe}^t$.
        \STATE For each $i\in \textsc{Ord}^t$, define $
        G^t_i=1/q^{t,-}_i - 1/q^{t,+}_i
        $, where $q^{t,-}_i=q^{t-1}_i-\beta y^t_i$ and $q^{t,+}_i=q^{t-1}_i+(1-\beta) y^t_i$.
        \STATE For each $i\in \textsc{Spe}^t$, define $G^t_i=0$.
        \STATE Allocate $g_t$ to agent $i_t \in \arg\max_{i\in [n]} G ^t_i$, i.e., $A_{i_t}\gets A_{i_t}\cup \{g_t\}$. Break ties according to the fixed deterministic ordering.
        \STATE Update: (1) for every $i\in N^+$, $\delta^t_i\gets \delta^{t-1}_i+\beta y^t_i-\mathds{1}_{\{i_t=i\}} y^t_i$, (2) update $\textsc{Sta}^t_i$ based on the described update rules, and (3) for every $i\in N^+$, $q^t_i=h(\textsc{Sta}^t_i)-\delta^t_i$.\label{step::alg-MIV-update}
        \ENDWHILE
        \ENSURE Allocation $A$.
	\end{algorithmic}
\end{algorithm}

Let us now prove the competitive ratio. We begin with the updated values of $\delta^t_i$'s and $q^t_i$'s after the allocation of $g_t$.

\begin{proposition}\label{prop::MIV-detla}
    For any $i\in N^+$ and round $t$, $\delta^t_i=\frac{\beta V^t_i - B^t_i}{p_i}$.
\end{proposition}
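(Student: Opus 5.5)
The plan is induction on $t$, mirroring the proof of Claim~\ref{claim::D^t_i}. The statement is just a bookkeeping invariant: the update rule in Line~\ref{step::alg-MIV-update} is designed so that $\delta^t_i$ tracks $(\beta V^t_i - B^t_i)/p_i$.

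For the base case $t=0$, the initialization gives $\delta^0_i=0$. Since $V^0_i=B^0_i=0$, the right-hand side is also $0$. Note that $p_i>0$ because $i\in N^+$, so the normalization is well defined.

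For the inductive step, assume $\delta^{t-1}_i=(\beta V^{t-1}_i-B^{t-1}_i)/p_i$. When $g_t$ arrives, additivity gives
\[
V^t_i=V^{t-1}_i+x^t_i, \qquad B^t_i=B^{t-1}_i+\mathds{1}_{\{i_t=i\}}x^t_i.
\]
Line~\ref{step::alg-MIV-update} then gives
\[
\delta^t_i=\delta^{t-1}_i+\beta y^t_i-\mathds{1}_{\{i_t=i\}}y^t_i .
\]
Substitute $y^t_i=x^t_i/p_i$ and the induction hypothesis, and combine over the common denominator $p_i$. This yields
\[
\delta^t_i=\frac{\beta(V^{t-1}_i+x^t_i)-(B^{t-1}_i+\mathds{1}_{\{i_t=i\}}x^t_i)}{p_i}=\frac{\beta V^t_i-B^t_i}{p_i},
\]
which closes the induction.

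There is no real obstacle here. The only points to check are that the update uses the actual recipient $i_t$, whether $i$ was special or ordinary in round $t$, and that the status variable plays no role in $\delta$. The informal description's phrase ``decreases by $1-\beta$'' refers to normalized values and agrees with the formula above.
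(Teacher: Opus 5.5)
Your proof is correct and matches the paper's argument. It is induction on $t$: the base case is $\delta^0_i=0=(\beta V^0_i-B^0_i)/p_i$, and the inductive step substitutes the update $\delta^t_i=\delta^{t-1}_i+\beta y^t_i-\mathds{1}_{\{i_t=i\}}y^t_i$, with $y^t_i=x^t_i/p_i$, into the induction hypothesis. As you note, the status never enters this update, so the same computation covers special and ordinary agents alike.
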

\begin{proof}
    We prove by induction on $t$. For base case $t=0$, the equation holds by definition of $\delta^t_i$, $V^t_i$, and $B^t_i$. Suppose that the statement holds at round $t-1$. Recall that $i_t$ is the recipient of $g_t$. Then we have
    \[
    \delta^t_i = \beta \frac{V^{t-1}_i}{p_i} - \frac{B^{t-1}_i}{p_i} + \beta y^t_i - \mathds{1}_{\{i_t=i\}}y^t_i = \beta\frac{V^{t-1}_i+x^t_i}{p_i} - \frac{B^{t-1}_i+\mathds{1}_{\{i_t=i\}}x^t_i}{p_i} = \frac{\beta V^t - B^t_i }{p_i}.
    \]
    This completes the induction.
\end{proof}

\begin{proposition}\label{prop::MIV-q}
    For agent $i\in N^+$ and round $t$, 
    \begin{enumerate}[label=(\roman*)]
        \item if $i \in \textsc{Ord}^t$ and $i$ receives $g_t$, $q^t_i=q^{t-1}_i+(1-\beta)y^t_i$; if $i \in \textsc{Ord}^t$ and $i$ misses $g_t$, $q_i^t=q^{t-1}_i-\beta y^t_i$.
        \item if $i \in \textsc{Spe}^t$ and $\textsc{Sta}^{t-1}_i=\textsc{Sta1}$, then $q^t_i=q^{t-1}_i$; if $i \in \textsc{Spe}^t$ and $\textsc{Sta}^{t-1}_i=\textsc{Sta2}$, then $q^t_i=q^{t-1}_i+1-\beta$. 
    \end{enumerate}
\end{proposition}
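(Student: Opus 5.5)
This is a direct computation from the definition $q^t_i=h(\textsc{Sta}^t_i)-\delta^t_i$ together with the update rule for $\delta$ in Line~\ref{step::alg-MIV-update}. From that line, $\delta^t_i-\delta^{t-1}_i=\beta y^t_i-\mathds{1}_{\{i_t=i\}}y^t_i$. Hence
\[
q^t_i-q^{t-1}_i=\bigl(h(\textsc{Sta}^t_i)-h(\textsc{Sta}^{t-1}_i)\bigr)-\beta y^t_i+\mathds{1}_{\{i_t=i\}}y^t_i .
\]
The whole proof therefore reduces to determining the change in $h$ in each case, using the status update rules.

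For part (i), let $i\in\textsc{Ord}^t$. By the definition of $\textsc{Spe}^t$, either $y^t_i<1$ or $\textsc{Sta}^{t-1}_i=\textsc{Sta3}$. Recall that $y^t_i\le 1$ because $p_i$ is the maximum item value.
\begin{itemize}
\item If $y^t_i<1$, the first update rule keeps the status unchanged.
\item If $y^t_i=1$ and $\textsc{Sta}^{t-1}_i=\textsc{Sta3}$, the last rule keeps it at $\textsc{Sta3}$.
\end{itemize}
In both cases the $h$-term cancels. Receiving $g_t$ then gives $q^t_i=q^{t-1}_i+(1-\beta)y^t_i$, and missing it gives $q^t_i=q^{t-1}_i-\beta y^t_i$.

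For part (ii), let $i\in\textsc{Spe}^t$, so $y^t_i=1$.
\begin{itemize}
\item If $\textsc{Sta}^{t-1}_i=\textsc{Sta1}$ and $i$ receives $g_t$, the status becomes $\textsc{Sta2}$. The change is $(0-(1-\beta))-\beta+1=0$.
\item If $\textsc{Sta}^{t-1}_i=\textsc{Sta1}$ and $i$ misses $g_t$, the status becomes $\textsc{Sta3}$. The change is $(1-(1-\beta))-\beta=0$.
\end{itemize}
So $q^t_i=q^{t-1}_i$ in both subcases; this is exactly why $h(\textsc{Sta1})=1-\beta$ was chosen.

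If $\textsc{Sta}^{t-1}_i=\textsc{Sta2}$, the claimed formula $q^{t-1}_i+1-\beta$ should correspond to the miss case. The status moves to $\textsc{Sta3}$, so the change is $(1-0)-\beta=1-\beta$. In the receive case the status stays $\textsc{Sta2}$, and the change is $-\beta+1=1-\beta$ as well. Thus both branches give $q^t_i=q^{t-1}_i+1-\beta$, regardless of who receives $g_t$.

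There is no real obstacle here. The only care needed is to confirm that every (status, $y$, recipient) combination is covered by exactly one update rule, and to use $y^t_i\le1$ so that ordinary agents with $y^t_i=1$ must already be in $\textsc{Sta3}$.
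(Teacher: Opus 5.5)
Your proposal is correct and follows the paper's proof: you write $q^t_i-q^{t-1}_i$ as the change in $h$ minus the change in $\delta$, and then check each status transition, which is what the paper does for parts (i) and (ii). You also invoke $y^t_i\le 1$ explicitly to show that an ordinary agent's status cannot change. The paper leaves this step implicit, and stating it is a small improvement.
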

\begin{proof}
    We first prove (i). Based on the status update rule, when $i\in \textsc{Ord}^t$, we have $\textsc{Sta}^{t-1}_i = \textsc{Sta}^{t}_i$. Thus, no matter whether agent $i$ receives $g_t$, $q^t_i = q^{t-1}_i - \delta^t_i +\delta^{t-1}_i$.
    If agent $i$ receives $g_t$, we have $\delta^t_i-\delta^{t-1}_i = (\beta-1) y^t_i$ due to Proposition~\ref{prop::MIV-detla}. Hence, in this case, $q^t_i=q^{t-1}_i+(1-\beta)y^t_i$.
    Again, by Proposition~\ref{prop::MIV-detla}, if agent $i$ does not receive $g_t$, $\delta^t_i - \delta^{t-1}_i = \beta y^t_i$, which implies that $q_i^t=q^{t-1}_i-\beta y^t_i$.

    Next we prove (ii). For the case when $i \in \textsc{Spe}^t$ and $\textsc{Sta}^{t-1}_i=\textsc{Sta1}$, if $i$ receives $g_t$, then $\textsc{Sta}^{t}_i=\textsc{Sta2}$.
    Then by Proposition~\ref{prop::MIV-detla}, we have $q^t_i-q^{t-1}_i=h(\textsc{Sta2}) - h( \textsc{Sta1})- (\delta^t_i-\delta^{t-1}_i)=0$.
    Similarly, if $i$ does not receive $g_t$, we have $\textsc{Sta}^{t}_i=\textsc{Sta3}$. Then it is not hard to verify $q^t_i=q^{t-1}_i$.
    Finally, for the case when $i \in \textsc{Spe}^t$ and $\textsc{Sta}^{t-1}_i=\textsc{Sta2}$, if $i$ receives $g_t$, then $\textsc{Sta}^{t}_i=\textsc{Sta2}$.
    By Proposition~\ref{prop::MIV-detla}, we have $\delta^t_i-\delta^{t-1}_i=\beta -1$, and thus, $q^t_i=q^{t-1}_i + 1-\beta$.
    If $i$ misses $g_t$, then $\textsc{Sta}^{t}_i=\textsc{Sta3}$ and $\delta^t_i-\delta^{t-1}_i=\beta$. Consequently, $q^t_i=q^{t-1}_i + 1-\beta$.
\end{proof}

The above proposition indicates if agent $i$ is special at the beginning of round $t$, her $q_i$ does not decrease, regardless of the recipient $g_t$. Next, we prove the crucial lemmas for establishing the main result. Formally, define $\Psi^t=\sum_{i\in N^+} 1/q^t_i$ the potential at round $t$. In particular, $\Psi^0=n/(1-\beta)$.
\begin{lemma}
\label{lemma::MIV-implication}
Fix a round $t\geq1$. Suppose that
$q_i^{t-1}>0$ for every $i\in N^+$ and
$
    \Psi^{t-1}\leq\frac{n}{1-\beta}.
$
Then $q_i^t>0$ for every $i\in N^+$ and
$
    \Psi^t\leq\Psi^{t-1}.
$
\end{lemma}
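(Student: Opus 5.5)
The plan is to prove both conclusions in one round-$t$ argument: first a uniform lower bound on every $q_i^{t-1}$ taken from the potential bound, then a comparison that charges the potential increase from the agents who miss $g_t$ to the gain $G^t_{i_t}$ of the recipient, in the spirit of Lemma~\ref{lem::J-and-Delta}. Throughout I use the convention that agents outside $N^+$ have $G^t_i=0$, so that the argmax in Algorithm~\ref{alg:MIV-0.5} is over $[n]$.

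\emph{Step 1 (positivity).} Since every term of $\Psi^{t-1}$ is positive, we get $1/q_i^{t-1}\le \Psi^{t-1}\le n/(1-\beta)$. Hence $q_i^{t-1}\ge (1-\beta)/n=\frac{2n-1}{2n^2}$ for every $i\in N^+$. By Proposition~\ref{prop::MIV-q}(ii), a special agent's $q_i$ never decreases. By Proposition~\ref{prop::MIV-q}(i), an ordinary agent's $q_i$ decreases by at most $\beta y^t_i\le\beta=\frac{1}{2n}$. Since $n\ge 2$, we have $\frac{1}{2n}<\frac{2n-1}{2n^2}$. Therefore $q^{t,-}_i>0$ for all ordinary $i$, so $q^t_i>0$ for every $i\in N^+$, and all the $G^t_i$ are well defined. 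Moreover, for ordinary $i$ we get $G^t_i=y^t_i/(q^{t,-}_iq^{t,+}_i)\ge 0$, with equality iff $y^t_i=0$.

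\emph{Step 2 (who can receive).} Special agents contribute non-increasing terms to $\Psi$ regardless of who receives $g_t$. I would split into two cases.
\begin{itemize}
\item No ordinary agent has $y^t_i>0$. Then every ordinary $q_i$ is unchanged, whoever receives $g_t$, so $\Psi^t\le\Psi^{t-1}$ immediately.
\item Some ordinary agent has $y^t_i>0$. Then $\max_i G^t_i>0$, so the recipient $i_t$ is ordinary and $G^t_{i_t}=G^*:=\max_i G^t_i$.
\end{itemize}

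\emph{Step 3 (charging; the main step).} In the second case,
\[
\Psi^t-\Psi^{t-1}\le \sum_{j\in\textsc{Ord}^t}\Bigl(\frac{1}{q^{t,-}_j}-\frac{1}{q^{t-1}_j}\Bigr)-\Bigl(\frac{1}{q^{t-1}_{i_t}}-\frac{1}{q^{t,+}_{i_t}}\Bigr)\le \sum_{j\in\textsc{Ord}^t}\Bigl(\frac{1}{q^{t,-}_j}-\frac{1}{q^{t-1}_j}\Bigr)-G^*,
\]
where including $i_t$'s nonnegative miss term only enlarges the sum, and the last inequality uses $1/q^{t,-}_{i_t}\ge 1/q^{t-1}_{i_t}$.

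The key computation is an exact ratio. Write $q=q^{t-1}_j$ and $y=y^t_j$. Then
\[
\frac{1}{q-\beta y}-\frac{1}{q}=\frac{\beta y}{q(q-\beta y)}=\beta\frac{q^{t,+}_j}{q}G^t_j=\Bigl(\beta+\beta(1-\beta)\frac{y}{q}\Bigr)G^t_j .
\]
A crude bound on $y/q$ using only Step 1 loses a factor of about $n$. So instead I sum and use the potential bound itself. Bounding $G^t_j\le G^*$ and $y\le 1$ gives
\[
\sum_j\Bigl(\frac{1}{q^{t,-}_j}-\frac{1}{q^{t-1}_j}\Bigr)\le G^*\Bigl(\beta n+\beta(1-\beta)\Psi^{t-1}\Bigr)\le G^*\Bigl(\tfrac12+\beta n\Bigr)=G^*.
\]
Hence $\Psi^t\le\Psi^{t-1}$.

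The obstacle I anticipate is exactly this charging step. The choice $\beta=\frac1{2n}$ together with the hypothesis $\Psi^{t-1}\le n/(1-\beta)$ makes the constant come out exactly $1$, so I would double-check that no slack is needed. Two points need verifying. First, agents outside $N^+$ or with $y=0$ contribute nothing. Second, special agents really have non-increasing $1/q$, which Proposition~\ref{prop::MIV-q}(ii) gives.
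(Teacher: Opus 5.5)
Your route is the paper's: the same lower bound $q_i^{t-1}\ge(1-\beta)/n$ from the potential, the same case split on whether the maximal score is positive, the same exact identity $\frac{1}{q_j^{t,-}}-\frac{1}{q_j^{t-1}}=\bigl(\beta+\beta(1-\beta)\frac{y_j^t}{q_j^{t-1}}\bigr)G_j^t$, and the same summation in which $\Psi^{t-1}\le n/(1-\beta)$ makes the constant exactly $2\beta n=1$. However, the displayed chain at the start of Step~3 contains a false inequality. Your second ``$\le$'' amounts to $\frac{1}{q^{t-1}_{i_t}}-\frac{1}{q^{t,+}_{i_t}}\ge G^*=\frac{1}{q^{t,-}_{i_t}}-\frac{1}{q^{t,+}_{i_t}}$, that is, $\frac{1}{q^{t-1}_{i_t}}\ge\frac{1}{q^{t,-}_{i_t}}$. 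This is the reverse of the fact you cite. Since $G^*>0$ forces $y^t_{i_t}>0$, and hence $q^{t,-}_{i_t}<q^{t-1}_{i_t}$, the inequality is strictly false every time this case occurs. Your first ``$\le$'' discards exactly the recipient's miss term $F^t_{i_t}:=\frac{1}{q^{t,-}_{i_t}}-\frac{1}{q^{t-1}_{i_t}}$, and the second step tries to win it back illegitimately.

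The endpoint of your chain is nonetheless correct, and the ordinary agents need no inequality at all. The recipient's term changes by $\frac{1}{q^{t,+}_{i_t}}-\frac{1}{q^{t-1}_{i_t}}=F^t_{i_t}-G^t_{i_t}$. Hence the ordinary agents' total contribution to $\Psi^t-\Psi^{t-1}$ equals $\sum_{j\in\textsc{Ord}^t}F_j^t-G^*$ exactly. The only slack comes from the special agents, whose terms do not increase. This is how the paper proceeds, via the intermediate quantity $\overline{\Psi}^t=\sum_{i\in\textsc{Ord}^t}1/q_i^{t-1}+\sum_{i\in\textsc{Spe}^t}1/q_i^t$, for which $\Psi^t-\overline{\Psi}^t=\sum_{i\in\textsc{Ord}^t}F_i^t-G_*^t$. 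If you replace your two-step chain by this identity, the rest of your proof goes through unchanged.
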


\begin{proof}
We first prove positivity. Since
$1/q_i^{t-1}\leq\Psi^{t-1}$, for every $i\in N^+$, we have
$
    q_i^{t-1}\geq\frac{1-\beta}{n}.
$
For an ordinary agent,
\[
\begin{aligned}
    q_i^{t,-}
    =q_i^{t-1}-\beta y_i^t\geq \frac{1-\beta}{n}-\beta=\frac{n-1}{2n^2}>0,
\end{aligned}
\]
where we use $y_i^t\leq1$ and $\beta=1/(2n)$. Also,
$q_i^{t,+}>0$. For a special agent,
Proposition~\ref{prop::MIV-q} gives
$q_i^t\geq q_i^{t-1}>0$. Therefore, every realized value $q_i^t$
is positive.

We next prove that the potential does not increase. Define
\[
    \overline{\Psi}^t
    =
    \sum_{i\in\textsc{Ord}^t}\frac{1}{q_i^{t-1}}
    +
    \sum_{i\in\textsc{Spe}^t}\frac{1}{q_i^t}.
\]
For every special agent, $q_i^t\geq q_i^{t-1}$, and hence
$
    \overline{\Psi}^t\leq\Psi^{t-1}.
$

For every $i\in\textsc{Ord}^t$, define
$
    F_i^t
    :=
    \frac{1}{q_i^{t,-}}-\frac{1}{q_i^{t-1}}.
$
Let
$
    G_*^t:=\max_{i\in N}G_i^t=G_{i_t}^t.
$
If $G_*^t=0$, then $G_i^t=0$ for every ordinary agent, which implies
$y_i^t=0$ and hence $F_i^t=0$. The ordinary contributions do not
increase, while the special contributions do not increase by
Proposition~\ref{prop::MIV-q}. Thus,
$\Psi^t\leq\overline{\Psi}^t$.

Suppose now that $G_*^t>0$. Then the selected recipient $i_t$ is
ordinary, and
\[
    \Psi^t-\overline{\Psi}^t
    =
    \sum_{i\in\textsc{Ord}^t}F_i^t-G_*^t.
\]
For every ordinary agent with $y_i^t>0$, direct calculation gives
\[
\begin{aligned}
    \frac{F_i^t}{G_i^t}
    =
    \beta
    \frac{q_i^{t-1}+(1-\beta)y_i^t}{q_i^{t-1}}=
    \beta+
    \beta(1-\beta)\frac{y_i^t}{q_i^{t-1}}.
\end{aligned}
\]
The same identity holds when $y_i^t=0$, since then
$F_i^t=G_i^t=0$. Therefore,
\[
\begin{aligned}
\sum_{i\in\textsc{Ord}^t}F_i^t
&=
\beta\sum_{i\in\textsc{Ord}^t}G_i^t
+
\beta(1-\beta)
\sum_{i\in\textsc{Ord}^t}
\frac{y_i^t}{q_i^{t-1}}G_i^t\\
&\leq
\beta nG_*^t
+
\beta(1-\beta)G_*^t
\sum_{i\in\textsc{Ord}^t}\frac{1}{q_i^{t-1}}\\
&\leq
\beta nG_*^t+
\beta(1-\beta)\Psi^{t-1}G_*^t\\
&\leq
2\beta nG_*^t
=
G_*^t.
\end{aligned}
\]
It follows that
$
    \Psi^t\leq\overline{\Psi}^t\leq\Psi^{t-1}.
$
\end{proof}

By the above lemma, we can derive the following properties.

\begin{lemma}\label{lemma::MIV-property}
    For any agent $i\in N^+$ and round $t\geq 1$, it holds that (i) $q^t_i>0$ and (ii) $\Psi^t \leq \Psi^{t-1}$.
\end{lemma}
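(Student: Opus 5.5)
The plan is a simple induction on $t$ that repeatedly applies Lemma~\ref{lemma::MIV-implication}. The inductive invariant is
\[
q_i^{t}>0 \text{ for every } i\in N^+ \quad\text{and}\quad \Psi^{t}\leq \frac{n}{1-\beta}.
\]
For agents outside $N^+$ nothing needs to be shown. The statement only concerns $q_i^t$ for $i\in N^+$, and $\Psi^t$ sums only over $N^+$.

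\emph{Base case.} At $t=0$ the initialization gives $q_i^0=1-\beta>0$ for every $i\in N^+$. Hence
\[
\Psi^0=\frac{|N^+|}{1-\beta}\leq\frac{n}{1-\beta}.
\]
(The paper writes $\Psi^0=n/(1-\beta)$; the inequality is all we need.)

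\emph{Inductive step.} Fix $t\geq1$ and assume the invariant holds at $t-1$. These are exactly the hypotheses of Lemma~\ref{lemma::MIV-implication}. That lemma yields $q_i^t>0$ for all $i\in N^+$ and $\Psi^t\leq\Psi^{t-1}$. Combined with the inductive hypothesis, this gives
\[
\Psi^t\leq \Psi^{t-1}\leq \frac{n}{1-\beta},
\]
so the invariant propagates to round $t$. Properties (i) and (ii) of the statement are exactly the two conclusions of Lemma~\ref{lemma::MIV-implication} at each round. This holds for every $t\geq1$ up to termination; in particular it does not depend on $m$, which is unknown.

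There is no real obstacle, since all the analytic work (the ratio $F_i^t/G_i^t$ and the bound $\beta n+\beta(1-\beta)\Psi^{t-1}\leq 2\beta n$) sits inside Lemma~\ref{lemma::MIV-implication}. The one point to be careful about is that the lemma needs the absolute bound $\Psi^{t-1}\leq n/(1-\beta)$, not just monotonicity. So the induction must carry this bound explicitly rather than only $q_i^{t-1}>0$, and it must start from $\Psi^0\leq n/(1-\beta)$.
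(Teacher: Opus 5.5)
Your proof is correct and follows the paper's argument: an induction that feeds $q_i^{t-1}>0$ and $\Psi^{t-1}\leq \frac{n}{1-\beta}$ into Lemma~\ref{lemma::MIV-implication}. The paper recovers the absolute bound by chaining (ii) back to $\Psi^0$, while you carry it in the invariant, and your remark that $\Psi^0=\frac{|N^+|}{1-\beta}\leq\frac{n}{1-\beta}$ (rather than equality) is an accurate and harmless refinement of the paper's statement.
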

\begin{proof}
    We prove by induction. For the base case of $t=1$, by Lemma~\ref{lemma::MIV-implication} and facts that $q^0_i>0$ and $\Psi^0=n/(1-\beta)$, it holds that $q^1_i>0$ and $\Psi^1 \leq \Psi^0$. For the induction step, suppose that the two properties hold for every round $1,\dots, t-1$. Accordingly, $q^{t-1}_i>0$. Moreover, by repeatedly applying property (2), we have
    \[
    \Psi^{t-1} \leq \Psi^{t-2} \leq \dots \leq \Psi^1 \leq \Psi^0=\frac{n}{1-\beta}.
    \]
    Again, by Lemma~\ref{lemma::MIV-implication}, we have $q^t_i>0$ and $\Psi^{t} \leq \Psi^{t-1}$.
    Finally, by induction, the lemma statement holds.
\end{proof}

We now use the positivity established in Lemma~\ref{lemma::MIV-property} to prove the main result.
\begin{theorem}
    For any $n\geq 2$, Algorithm~\ref{alg:MIV-0.5} is deterministic and achieves competitive ratio 1/2.
\end{theorem}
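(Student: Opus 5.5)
The plan is to convert the positivity $q_i^m>0$ from Lemma~\ref{lemma::MIV-property} into the $\frac12$-PROP1 inequality at termination. Determinism is immediate: every quantity the algorithm computes depends only on the observed history, and ties are broken by a fixed ordering. Agents outside $N^+$ value every good at zero, so they satisfy PROP1 trivially. It therefore suffices to fix $i\in N^+$ and show that either $B_i^m\ge V_i^m/n$ or $B_i^m+U_i^m\ge \frac12\cdot\frac{V_i^m}{n}$. Since $\beta=\frac{1}{2n}$, the target is exactly $\beta V_i^m\le B_i^m+U_i^m$.

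First I will write $q_i^m=h(\textsc{Sta}_i^m)-\delta_i^m>0$ and substitute Proposition~\ref{prop::MIV-detla}. This gives
\[
\beta V_i^m-B_i^m< p_i\,h(\textsc{Sta}_i^m).
\]
Next I will use accuracy of the prediction. Some good $g_\ell$ has $v_i(g_\ell)=p_i$, so $y_i^\ell=1$ at that round. By the status update rule, the status at round $\ell$ leaves $\textsc{Sta1}$ and never returns to it, because the rules only move $\textsc{Sta1}\to\{\textsc{Sta2},\textsc{Sta3}\}$, $\textsc{Sta2}\to\{\textsc{Sta2},\textsc{Sta3}\}$, and $\textsc{Sta3}\to\textsc{Sta3}$. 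Hence $\textsc{Sta}_i^m\in\{\textsc{Sta2},\textsc{Sta3}\}$.

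I then split into these two cases. If $\textsc{Sta}_i^m=\textsc{Sta2}$, then $h=0$ and the bound gives $B_i^m>\beta V_i^m$, which is already $\frac12$-PROP1; one could even add any missed good. If $\textsc{Sta}_i^m=\textsc{Sta3}$, then $h=1$ and the bound gives $\beta V_i^m<B_i^m+p_i$. The transition into $\textsc{Sta3}$ happens only when a good with $y_i=1$ is allocated to some $j\neq i$, so that good lies in $M\setminus A_i$. Hence $U_i^m=p_i$, and adding this good yields $v_i(A_i\cup\{g\})>\frac12\cdot\frac{V_i^m}{n}$.

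The main care point is the invariant linking statuses to the actual allocation history. Specifically, I must verify that $\textsc{Sta3}$ certifies a missed good of value exactly $p_i$, and that termination excludes $\textsc{Sta1}$. This requires checking the update rules by a short induction. The rest is substitution. Finally, since the argument holds at the arbitrary termination round $m$ chosen by the adversary, the ratio $\frac12$ holds on every admissible instance.
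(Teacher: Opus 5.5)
Your proposal is correct and follows essentially the same route as the paper. You substitute Proposition~\ref{prop::MIV-detla} into $q_i^m=h(\textsc{Sta}_i^m)-\delta_i^m>0$, use accuracy of the prediction to exclude $\textsc{Sta1}$ at termination, and split into the $\textsc{Sta2}$ case (where $B_i^m>\beta V_i^m$) and the $\textsc{Sta3}$ case (where $U_i^m=p_i$); your remarks on the status invariant and on the case with no missed good just make explicit details the paper leaves implicit.
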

\begin{proof}
    The algorithm is clearly deterministic. We next analyze its performance. Fix an arbitrary round $t$. If $i \notin N^+$, we have $v_i(g_j)=0$ for all $j\leq t$. Thus, agent $i$ satisfies PROP1. In the remaining, we focus on $ i \in N^+$. 
    
    As $\beta=1/(2n)$, proving the 1/2 competitive ratio is reduced to prove $B^t_i + U^t_i \geq \beta V^t_i$. By Proposition~\ref{prop::MIV-detla}, it holds that $\beta V^t_i - B^t_i = p_i\delta ^t_i $. Moreover, as $\delta^t_i= h(\textsc{Sta}^t_i) - q^t_i $ (by Line~\ref{step::alg-MIV-update}), we have $\beta V^t_i - B^t_i = p_i (h(\textsc{Sta}^t_i) - q^t_i)$.
    We note that when round $t$ is the termination, the good with predicted value $p_i$ must come. Thus, $\textsc{Sta}^t_i \in \{ \textsc{Sta2}, \textsc{Sta3} \}$. 
    If $\textsc{Sta}^t_i = \textsc{Sta2}$, then we have $\beta V^t_i - B^t_i = -p_iq^t_i < 0$, and hence, $B^t_i+U^t_i>\beta V^t_i$.
    If $\textsc{Sta}^t_i = \textsc{Sta3}$, then we have $\beta V^t_i - B^t_i = p_i (1-q^t_i) < p_i = U_i$ because $q^t_i>0$. Thus, $B^t_i+U^t_i>\beta V^t_i$.

    Therefore, we have $B^t_i+U^t_i> V^t_i/(2n)$, completing the proof.
\end{proof}
We remark that as shown in the above proof, the potential bound $\Psi^t \leq \Psi^0$ does not directly establishes the competitive ratio. Instead, Lemma~\ref{lemma::MIV-property} gives $\Psi^{t-1} \leq \Psi^0$ and then $q^t_i>0$, based on which we are able to establish $B^t_i+U^t_i \geq \beta V^t_i$.

\subsection{With Accurate MIV Predictions: An impossibility result}

The preceding algorithm gives a constant guarantee under exact MIV information. 
We now show that this information does not suffice to make the competitive ratio arbitrarily close to one, even for two agents. The construction fixes the maximum item values at \(p_1=p_2=1\) and uses at most \(518\) goods, with every item value between \(\frac{1}{64}\) and \(1\).

% In this subsection, we complement the result by establishing the upper bound of the competitive ratio when the algorithm has the accurate MIV prediction.
% In particular, we will construct an instance with two agents and at most $518$ goods, and the value of a single good is at least $r=\frac{1}{64}$ and at most 1. 

\begin{theorem}\label{thm::MIV-impossible}
    For any $\varepsilon>0$ and any $\rho \geq (1-\frac{1}{64\times 518}) + \varepsilon$, there is no deterministic algorithm that achieves competitive ratio at least $\rho$, even for two agents.
\end{theorem}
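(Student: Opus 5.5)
Our plan is to build, against an arbitrary deterministic algorithm, an adaptive two-agent instance with $p_1=p_2=1$ that uses at most $518$ goods, each of value in $[\frac{1}{64},1]$. On this instance the final allocation will have PROP1 factor at most $1-\frac{1}{64\cdot 518}$; since the target $\rho$ exceeds this by $\varepsilon$, the statement follows.

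\textbf{The margin variables.} For two agents we track
\[
s_i^t=2(B_i^t+U_i^t)-V_i^t .
\]
Agent $i$ is exactly PROP1 iff $s_i^t\ge 0$. In general, the ratio $\frac{2(B_i^t+U_i^t)}{V_i^t}$ equals $1+\frac{s_i^t}{V_i^t}$. So it suffices to force $s_i^t\le -\frac{V_i^t}{64\cdot 518}$ at termination, and $V_i^t\le 518$ always holds. In particular, forcing $s_i\le -\frac{1}{64}$ is enough, and so is the stronger $s_i\le -\frac{518}{64\cdot 518}\cdot\frac{V}{518}$.

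First we record exact transitions for $s_i$ when a good of value $x\le U_i$ arrives:
\begin{itemize}
\item receiving it changes $s_i$ by $+x$;
\item missing it changes $s_i$ by $-x$.
\end{itemize}
When $U_i$ may rise (first miss of a value-$1$ good), the correction is explicit.

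\textbf{Phase 1.} We present up to three goods of value $(1,1)$. If one agent receives all three, the other has $V=3$, $B=0$, $U=1$, hence $s=-1$, and we are done. Otherwise both agents have missed a value-$1$ good, so $U_1=U_2=1$ is fixed forever, since $p_i=1$ is attained. From then on, the dynamics are additive: on each good one $s_i$ goes up by $x_i$ and the other goes down by $x_j$. After this phase both $s_i$ are bounded (at most $3$).

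\textbf{Phase 2.} This is the main engine: repeated adaptive two-good blocks, each decreasing $s_1+s_2$ by at least $\frac{1}{64}$. A natural block is one good with values $(a,b)$ with $a\ne b$, followed by a second good chosen depending on who got the first. The idea is that an algorithm keeping both margins safe must give the first good to one agent, so that the other agent's larger loss exceeds the recipient's smaller gain. The values are chosen among multiples of $\frac{1}{64}$, and the adversary reacts so that whatever the algorithm does, the sum drops by $\frac{1}{64}$, unless some $s_i$ goes negative immediately, which already ends the game.

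Since $s_1+s_2\le 6$ initially, after at most about $64\cdot 6\cdot 1.3$ blocks (hence the budget $\approx 2\cdot 255+3+5=518$ goods), one agent $i$ has small margin $s_i<c$ for a suitable constant $c$.

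\textbf{Phase 3.} A five-good block finishes the game. We present goods small for agent $j$ but of size just above $s_i$ for agent $i$, so that missing any of them drives $s_i<0$; this forces the first four to agent $i$. These four decrease $s_j$ by their $j$-values. The fifth good is valued so that whichever agent misses it ends negative. We then check the numeric bound: with all values $\ge\frac{1}{64}$ the violation is at least $\frac{1}{64}$, and $V\le 518$, so the ratio is at most $1-\frac{1}{64\cdot 518}$.

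\textbf{Main obstacle.} The hard step is designing the two-good blocks in Phase 2 so that every algorithm response loses $\frac{1}{64}$ in the sum while neither branch lets the algorithm increase one margin arbitrarily. The algorithm could instead "bank" margin in one agent. We would try a potential-style case split on which $s_i$ is smaller, always routing the larger-for-the-weak-agent value toward her, and verify the finite case tree by direct computation.
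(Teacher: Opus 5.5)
\textbf{Verdict: there is a genuine gap in Phase 2.} Your skeleton matches the paper's: three unit goods, then two-good blocks that each lower $s_1+s_2$ by $r=\frac{1}{64}$, then a five-good trap. But you leave the decisive step, the block design, unspecified, and the forcing mechanism you do describe cannot work.

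\textbf{Why immediate threats fail.} You want the first good of a block to be forced because missing it would make some $s_i$ negative immediately. Right after Phase 1 the state is $(s_H,s_L)=(3,1)$. Since $p_1=p_2=1$ caps every value at $1$, no single good can push $s_L$ below $0$ when missed (that needs value $>1$), nor $s_H$ (that needs value $>3$). A second adaptively chosen good does not help either, because the stronger agent always keeps margin at least $1$ after one more miss. So no immediate threat exists, and your Phase 2 cannot even start.

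\textbf{The missing idea: a deferred threat.} The five-good trap succeeds from \emph{any} state with $s_L\le r$ and $s_H\le 4$. The paper's block therefore first offers a good worth $1$ to $H$ and $s_L-r$ to $L$.
\begin{itemize}
\item Giving it to $H$ yields $s_L=r$ and $s_H\le 4$, which is exactly the trap precondition.
\item Giving it to $L$ changes the sum by $-1+(s_L-r)\le -r$, provided $s_L\le 1$.
\end{itemize}
The second good, worth $(1,1)$, leaves the sum unchanged. It maintains the invariants $\max_i s_i\le 3$ and $\min_i s_i\le 1$ at the start of every block: from $h\le 3$, $\ell\le 1$ the two possible outcomes are $(h,\,2\ell-1-r)$ and $(h-2,\,2\ell+1-r)$, and both satisfy them. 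These invariants answer your ``banking'' worry. They also supply the upper bound on the other agent's margin that your Phase 2 endpoint (``$s_i<c$'') omits and that Phase 3 needs.

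\textbf{Phase 3 has the roles reversed.} The five goods must be worth $1$, not something small, to the strong agent $j$. Her margin can be as large as $4$ and values are capped at $1$. So the fifth good can make $j$ negative only if the four forced goods have already driven $s_j$ to at most $0$, which requires $j$-value $1$ each. For the weak agent, the values must double ($2r,4r,\dots,32r$) so that each one exceeds her growing margin.

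\textbf{Smaller issues.}
\begin{itemize}
\item ``All values $\ge\frac{1}{64}$'' does not by itself give a violation of size at least $\frac{1}{64}$. You need either the lattice argument the paper uses (all values, hence all $s_i$, are multiples of $r$) or explicit slack of at least $r$ in each forced step.
\item The block count should come from $s_1+s_2=4$ exactly after Phase 1, giving at most about $4/r$ blocks. Your figure of roughly $64\cdot 6\cdot 1.3\approx 500$ blocks is not consistent with the $518$-good budget.
\end{itemize}
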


Before presenting the formal proof, we first explain the high-level idea of the construction. Define $s^t_i = 2(B^t_i+U^t_i) - V^t_i$, the \emph{safety distance} of agent $i$ at the end of round $t$.
Since $n=2$, it is easy to verify that in round $t$, agent $i$ satisfies PROP1 if and only if $s^t_i \geq 0$. The goal of the adversary is therefore to force the $s^t_i$ of some agent below zero. 

\begin{figure}[htbp]
  \centering
  \includegraphics[width=0.85\linewidth]{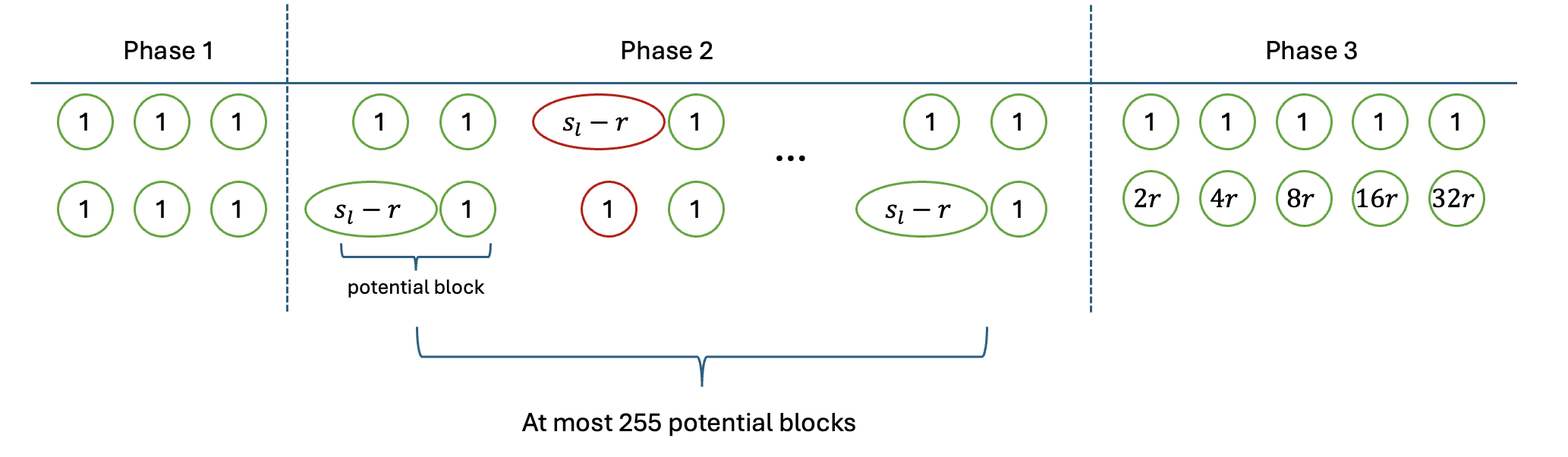}
  \caption{Illustration of the counterexample construction in the proof of Theorem~\ref{thm::MIV-impossible}.}
  \label{fig:not-guarantee-1-PROP1}
\end{figure}

The constructed instance can be divided into three phases. In Phase 1, there are three goods with value 1 each for both agents. Hence, an algorithm that always returns PROP1 allocation must not allocate all three goods to a single agent, which ensures $U^t_1=U^t_2=1$ for every later round $t$, given that every later good has value at most 1. 
Then in Phase 2, there will be at most 255 \emph{potential blocks}, each of which has exactly two goods. We design values of each block adaptively against the algorithm, so that after the allocation of each potential block, the potential $s^t_1+s^t_2$ is decreased by $r$.
Moreover, potential blocks also ensures that right after Phase 2, one agent has safety distance at most $r$ (suppose agent $L$), and the other agent has safety distance at most 4 (suppose agent $H$).
Finally in Phase 3, the adversary construct a \emph{trap block} with five goods, and their values are designed in the way that each of them must be allocated to agent $L$; otherwise, agent $L$ fails PROP1.
However, each of these good has value 1 for agent $H$, so that the algorithm can never satisfy both agents.

\begin{proof}[Proof of Theorem~\ref{thm::MIV-impossible}]
For a contradiction, assume that there exists a deterministic algorithm $ALG$ that achieves competitive ratio at least $\rho$. In the proof, we call the agent with a higher $s$ value as agent $H$, and the other agent as agent $L$.
Agents may switch the roles $H$ and $L$ after the allocations of goods in Phase 2.

In Phase 1, there are three goods $g_1,g_2,g_3$ with value 1 each for both agents. After $ALG$ has decided the allocation of $g_1,g_2,g_3$, agent $H$ receives two goods and agent $L$ receives one good. Now $s^3_H=3$ and $s^3_L=1$.
Then the adversary uses at most 255 potential blocks to ensure that at some moment (suppose round $t_2$), the safety distances of agents become $s^{t_2}_H \leq 4$ and $s^{t_2}_L \leq r$. We now treat this as an assumption, which we will prove later on. Right after this moment, the construction enters Phase 3.

In Phase 3, we fix the name of the agents and no longer update $H$ and $L$. There are five goods $g_{t_2+1}, g_{t_2+2}, g_{t_2+3}, g_{t_2+4}, g_{t_2+5}$. For agent $H$, her value is $v_H(g_{t_2+j})=1$ for every $j=1,\ldots, 5$. And for agent $L$, her value is $v_L(g_{t_2+j}) = 2^j\cdot r$.
We claim that $ALG$ must assign the first four goods to agent $L$. Note that $v_L(g_{t_2+1}) = 1/32$ and $s^{t_2}_L \leq r=1/64$. If $g_{t_2+1}$ is allocated to $H$, then $s^{t_2+1}_L \leq -r$, contradicting the assumption of $ALG$.
After assigning $g_{t_2+1}$ to $L$, we have $s^{t_2+1}_L \leq 3/64$. Since $v_L(g_{t_2+2}) = 1/16$, again $ALG$ must allocate $g_{t_2+2}$ to agent $L$; otherwise, $s^{t_2+2}_L \leq -r$, a contradiction.
Similarly, $ALG$ must assign both $g_{t_2+3}$ and $g_{t_2+4}$ to agent $L$. Right after round $t_2+4$, we have $s^{t_2+4}_H \leq 0$ since $s^{t_2}_H \leq 4$ and agent $H$ does not receive four goods with value 1 each for her. 
For agent $L$, we have $s^{t_2+4}_L \leq 31/64$.
Now since $v_L(g_{t_2+5}) = 1/2>s^{t_2+4}_L$ and $s^{t_2+4}_H \leq 0$, the agent who does not receive $g_{t_2+5}$ would have safety distance at most $-r$. Since there are at most 518 goods and each good has value at most 1, the approximation guarantee of the agent with safety distance at most $-r$ is at most $1-\frac{1}{64\times 518}$.

We now show that the adversary can construct Phase 2 that meets the properties described above. Each potential block has two goods $g^1,g^2$, and $g^1$ comes first. The value of agents are $v_H(g^1)=1$, $v_L(g^1)=s_L-r$, and $v_H(g^2)=v_L(g^2)=1$. Notice that the value of $g^1$ is adaptive and $s_L$ here is the safety distance of agent $L$ right before allocating $g_1$. 
% With slight abuse of notations, when presenting the safety distances, we omit the superscript for the round and use $s_H$ and $s_L$. Nonetheless, we will explicitly point out the moment to aviod confusion.
We claim that once the $s_H \leq 3$ and $s_L\leq 1$ at the beginning of each potential block, these upper bounds still hold after $ALG$ allocating the two goods (without entering Phase 3). 

We prove the claim by induction. Since $s^3_H=3$ and $s^3_L=1$, the base case holds trivially. Assume that the construction reaches the potential block with $g^1,g^2$. Define $h=s_H$ and $\ell =s_L$, where $s_H$ and $s_L$ here are safety distances of agents right before allocating $g^1$.
By the induction hypothesis, $h \leq 3$ and $\ell \leq 1$. By the definitions of values, we have $v_H(g^1)=1, v_L(g^1)=\ell-r$. 
If $ALG$ allocates $g^1$ to $H$, then the safety distances become $s_H\leq 4$ and $s_L=r$, so that the adversary can use Phase 3 to force some agent having safety distance at most $-r$.
Thus, $ALG$ must allocate $g^1$ to $L$. After allocating $g_1$, we have $s_H = h- 1$ and $s_L=2\ell - r$.
For $g^2$, if $ALG$ allocates it to agent $H$, the safety distances become $s'_H = h$ and $s'_L = 2\ell -1 -r$.
And if $ALG$ allocates it to agent $L$, the safety distance become $s''_H=h-2$ and $s''_L=2\ell + 1-r$.
Because $h\leq 3$ and $\ell \leq 1$, we have
\[
\max\{s'_H,s'_L\} \leq 3, \min\{s'_H,s'_L \} \leq 1 \text{ and } \max\{s''_H,s''_L\} \leq 3, \min\{s''_H,s''_L \} \leq 1.
\]
After the allocation of $g^2$, we rename the agents based on their safety distances at that moment. The above inequalities ensure that the safety distance of $H$ and of $L$ are at most 3 and at most 1, respectively. By induction, we prove the claim. 

Moreover, one can verify that if $ALG$ allocates a potential block without entering Phase 3, the total safety distance of two agents is decreased by $r$. Indeed, to avoid entering Phase 3, $ALG$ must allocate $g^1$ to $L$, which decreases the total safety distance by at least $r$.
Since both agent value $g^2$ at 1, the allocation of $g^2$ keeps the total safety distance unchanged.

Recall that initially, the total safety distance is $s^3_H+ s^3_L=4$. Since $r=1/64$ and each potential block decreases the total safety distance by $r$, $ALG$ can safely handle at most 255 potential blocks without entering Phase 3. 
After the allocating all 255 potential blocks, the total safety distance is at most $r$, and hence, the adversary can use Phase 3 to force the safety distance of some agent becoming at most $-r$.

Overall, the adversary uses at most $3+2\times 255+5=518$ goods and force the safety distance of some agent becoming negative. Notice that the values of goods for both agents are integer multiple of $r$.
Once the safety distance becomes negative, we have that it is at most $-r$, which then gives the desired upper bound of the competitive ratio of $ALG$. 
\end{proof}

\section{A Consistent and Robust Algorithm}
\label{subsec:simultaneous-robustness}

Section~\ref{sec::accurate-MIV} assumes that the maximum item value predictions are accurate. Nonetheless, in this section, we study the setting when the predictions can fail. Our goal here is to obtain both earlier algorithmic guarantees through one allocation rule: $\frac{1}{2}$-PROP1 when predictions are accurate and $\Omega(\frac{1}{\log nm})$-PROP1 when the predictions are arbitrarily bad. 

Since allocations are irrevocable, the algorithm cannot construct the allocations of Algorithms~\ref{alg:no-informaton} and \ref{alg:MIV-0.5} separately and choose between them at termination. 
A method that switches only after detecting an inaccurate prediction also does not address every error: a prediction larger than the realized MIV may never be contradicted by an arriving good. 
We therefore design an algorithm that maintains prediction-based quantities and quantities that depend only on observed values from the beginning of the sequence. It modifies the reciprocal potential in Algorithm~\ref{alg:MIV-0.5} and combines it with the exponential potential in Algorithm~\ref{alg:no-informaton}. 
Each arriving good is assigned according to its effect on this combined potential.
The following theorem states the both consistent and robust guarantees of our proposed algorithm.

\begin{theorem}
\label{thm:simultaneous-consistency-robustness}
For any $n\geq 2$, there exists a deterministic online algorithm satisfying the following properties.
\begin{enumerate}
    \item[(i)] when the predictions are accurate, it achieves competitive ratio of $\frac{1}{2}$.
    % If $p_i=\max_{\ell\in[m]}v_i(g_\ell)$ for every agent $i$, then the terminal allocation is $1/2$-PROP1. More precisely, every agent with $V_i^m>0$ satisfies
    % \[
    % B_i^m+U_i^m>\frac{V_i^m}{2n}.
    % \]
    \item[(ii)] when the predictions are arbitrarily bad,
    it achieves competitive ratio of at least $\frac{3}{184\log(nm)}$.
    % every arbitrary nonnegative vector $(p_1,\ldots,p_n)$ and every round $t\geq 1$,
    % \[
    % \rho^t\geq
    % \frac{1}{4+\frac{16}{3}\log\!\bigl(n(100n+1)t(t+1)\bigr)}
    % \geq \frac{3}{184\log(nt)}.
    % \]
\end{enumerate}
% The algorithm stores a constant number of real quantities for every agent and uses $O(n)$ arithmetic operations and exponential evaluations in every round.
\end{theorem}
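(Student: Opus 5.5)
The plan is to run a single potential-greedy rule on a combined potential
\[
\Gamma^t=\Phi^t+\kappa\,\widetilde{\Psi}^t .
\]
Here $\Phi^t$ is exactly the weighted exponential potential of Algorithm~\ref{alg:no-informaton}, built from observed quantities $H_i^t,\tau_i^t,D_i^t$ only. The term $\widetilde{\Psi}^t$ is a modified version of the reciprocal potential of Algorithm~\ref{alg:MIV-0.5}, built from the predictions $p_i$. The weight $\kappa$ is a constant of order $n$; I would try $\kappa\approx 100n^2(1-\beta)/n$, so that $\Gamma^0=n+\kappa n/(1-\beta)\le n(100n+1)$.

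The modification of the reciprocal part handles arbitrary predictions. An agent $i$ is dropped permanently from $\widetilde{\Psi}$ as soon as the prediction is visibly contradicted, that is, when some $x_i^t>p_i$, or when $p_i=0$ but $x_i^t>0$. Her reciprocal contribution is then set to $0$, which can only decrease $\Gamma$. While agent $i$ is not dropped, every normalized value satisfies $y_i^t\le 1$, so the status machine and the quantities $\delta_i^t,q_i^t$ behave exactly as in Propositions~\ref{prop::MIV-detla} and~\ref{prop::MIV-q}. I would also reserve some slack: the target stays $\beta=\frac{1}{2n}$, but the reciprocal potential is bounded in the proof by a constant strictly larger than its initial value, with the difference absorbed by $\kappa$.

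Each good is given to an agent maximizing the combined score $\Delta_i^t+\kappa G_i^t$, where $G_i^t=0$ for special or dropped agents. The core step is the analogue of Proposition~\ref{prop::potential-no-info} and Lemma~\ref{lemma::MIV-implication}, namely $\Gamma^t\le\Gamma^{t-1}$, proved by induction together with positivity of every live $q_i^t$.

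The pre-allocation step (updates of $H,\tau$, status changes, drops) does not increase $\Gamma$, by Lemma~\ref{lem::pre-and-post}, Proposition~\ref{prop::MIV-q}(ii), and the fact that dropping an agent removes a positive term. For the allocation step, the increase coming from agents who miss the good is
\[
\sum_i J_i^t+\kappa\sum_{i\in \textsc{Ord}^t}F_i^t .
\]
I would bound it by
\[
\frac{3}{8n}\sum_i\Delta_i^t+\kappa\Bigl(\beta+\beta(1-\beta)\tfrac{1}{q_i^{t-1}}\Bigr)\sum_i G_i^t ,
\]
with $1/q_i^{t-1}\le \widetilde\Psi^{t-1}\le \Gamma^{t-1}/\kappa$. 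After summing over agents, the total should be at most $\max_i(\Delta_i^t+\kappa G_i^t)$, which is precisely the decrease achieved by the recipient. Positivity of $q_i^t$ for ordinary agents then follows as in Lemma~\ref{lemma::MIV-implication}, via the bound $1/q_i^{t-1}\le\Gamma^0/\kappa$.

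This step is the main obstacle. In Lemma~\ref{lemma::MIV-implication} the inequality $\sum F\le G_*$ is exactly tight, because it uses $\Psi^{t-1}\le n/(1-\beta)$ with no slack. Now $\widetilde\Psi$ is controlled only through $\Gamma$, and the exponential part may eat into it. My first attempt would be to weaken the reciprocal target slightly: define $q$ with an additive safety margin, or use $\beta'$ slightly below $\frac1{2n}$ in the update while keeping the $\frac{1}{2}$ conclusion via strict positivity. The goal is a coefficient $c<1$ with $\sum F\le cG_*$ even when $\widetilde\Psi$ is as large as $\Gamma^0/\kappa$. The constant $\frac{3}{8}$ on the exponential side leaves room $\frac{5}{8}$, so a convex-combination argument should close, provided $\kappa$ is chosen to balance the two slacks.

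Given monotonicity, both parts of Theorem~\ref{thm:simultaneous-consistency-robustness} follow.

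For (ii), $g(\tau_i^t)\exp(\gamma D_i^t/H_i^t)\le\Gamma^t\le n(100n+1)$. The proof of Theorem~\ref{thm::no-info}, using Claim~\ref{claim::D^t_i} and $B_i^t+U_i^t\ge H_i^t$, then gives a ratio of at least
\[
\frac{1}{4+\frac{16}{3}\log\bigl(n(100n+1)t(t+1)\bigr)} .
\]
Bounding $n(100n+1)t(t+1)\le (nt)^{c}$ for a small constant $c$ turns this into $\frac{3}{184\log(nm)}$.

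For (i), accurate predictions mean no agent is ever dropped, and every $p_i$ is attained. So the final status of each $i\in N^+$ is $\textsc{Sta2}$ or $\textsc{Sta3}$, and $q_i^m>0$. The final case analysis of the accurate-MIV theorem then yields $B_i^m+U_i^m>V_i^m/(2n)$, that is, $\frac12$-PROP1.
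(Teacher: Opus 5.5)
\textbf{There is a genuine gap, and it is at the step that carries the theorem.} Your architecture matches the paper's: drop an agent's reciprocal term once $x_i^t>p_i$, keep the status machine while $y_i^t\le 1$, allocate by $\arg\max_i(\Delta_i^t+\kappa G_i^t)$, and derive (i) and (ii) as in the single-purpose theorems. But the monotonicity step $\Gamma^t\le\Gamma^{t-1}$ is left open, and the diagnosis you give for it is not the right one.

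Let $A=\max_i(\Delta_i^t+\kappa G_i^t)$, which is the recipient's reduction. Use the following facts:
\begin{itemize}
\item $J_i^t\le\frac{3}{8n}\Delta_i^t$ and $F_i^t=\bigl(\beta+\beta(1-\beta)y_i^t/q_i^{t-1}\bigr)G_i^t$;
\item $\kappa G_i^t\le A$;
\item $\sum_i\frac{y_i^t}{q_i^{t-1}}\kappa G_i^t\le A\,\widetilde\Psi^{t-1}$.
\end{itemize}
You must sum in the last bound. Your per-agent bound $1/q_i^{t-1}\le\widetilde\Psi^{t-1}$, multiplied by $\sum_i G_i^t$, loses a factor $n$. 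With these facts the total increase from misses is at most
\[
\tfrac{3}{8}A+\tfrac{1}{8}A+\beta(1-\beta)\,\widetilde\Psi^{t-1}A .
\]
So the $\frac58$ room on the exponential side is already used up. The step closes only if $\widetilde\Psi^{t-1}\le\frac{1}{2\beta(1-\beta)}=\frac{n}{1-\beta}$, and this condition does not depend on $\kappa$.

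The estimate is tight. Take all $\Delta_i^t$ negligible and all agents ordinary with equal $q_i^{t-1}$ and $y_i^t\to 1$; then the combined potential actually increases once $\widetilde\Psi^{t-1}>\frac{n}{1-\beta}$. With $q_i^0=1-\beta$, the reciprocal part starts exactly at $\frac{n}{1-\beta}$. Control through $\Gamma\le\Gamma^0$ then gives only $\widetilde\Psi\le\frac{n}{1-\beta}+\frac{n}{\kappa}$. No choice of $\kappa$, and no convex-combination balancing, removes the excess $n/\kappa$. Your remark that the reciprocal potential may be bounded by a constant strictly larger than its initial value describes exactly what must not happen.

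\textbf{The missing idea} is to start the reciprocal part strictly below this threshold without disturbing the status bookkeeping. Your first suggestion, an additive margin, does this, and it is what the paper does. Use $\bar q_i=q_i+\beta$ both in the potential and in $G_i^t$. Then:
\begin{itemize}
\item $\bar q_i^0=1$, so the reciprocal part starts at $n$;
\item the identity $F_i^t/G_i^t=\beta+\beta(1-\beta)y_i^t/\bar q_i^{t-1}$ is unchanged, and special agents' terms still do not increase;
\item with weight $\eta=\frac{1}{100n}$ on the exponential part, $\widetilde\Psi\le Q=n+\frac{1}{100}<\frac{n}{1-\beta}$, so the coefficient is $\chi=\beta n+\beta(1-\beta)Q<1$;
\item $\bar q_i\ge 1/Q>\beta$ still gives $q_i>0$, and hence the $\frac12$ conclusion.
\end{itemize}
In your scaling, $\kappa>2n-1$ suffices.

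Your second suggestion, running the $\delta$-update with some $\beta'<\frac{1}{2n}$, does not work. Positivity of $q_i$ would then certify only $B_i+U_i>\beta' V_i$, which is a ratio $n\beta'<\frac12$.

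Without committing to the margin and checking the resulting inequality, the proposal does not prove monotonicity, and both (i) and (ii) rest on it. The rest is correct: the drop rule, the pre-allocation step, and the final derivations of (i) and (ii), including the constant $c=10$ in $n(100n+1)t(t+1)\le(nt)^{c}$.
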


We now describe the algorithm. For ease of notation, let $\beta=\frac{1}{2n},\theta=\frac{1}{4n},\gamma=\frac{3}{4},\eta=\frac{1}{100n},$ and $Q=n+\frac{1}{100}$.
We continue to use the function $g(x)=\frac{1}{x(x+1)}$. 
The algorithm receives the predictions $p_1,\ldots,p_n$ before the first good arrives.

For each agent \(i\) with \(p_i>0\), the algorithm initially uses \(p_i\) to normalize her values. 
We say that \(p_i\) is \emph{active} as long as every value revealed for agent \(i\) is at most \(p_i\). 
If \(v_i(g_t)>p_i\) in some round \(t\), then the algorithm deactivates \(p_i\) before computing the quantities in that round. 
Once \(p_i\) has been deactivated, it remains inactive and is no longer used by the algorithm. Consequently, whenever \(p_i\) remains active after the value vector of \(g_t\) is revealed, the normalized value $y_i^t=\frac{v_i(g_t)}{p_i}$ is well defined and belongs to $[0,1]$.
The algorithm maintains two groups of quantities. The first group depends on \(p_i\) and is used to recover the 0.5 competitive ratio when all predictions are exact. 
The second group depends only on values that have already been revealed and provides the logarithmic guarantee when the predictions are arbitrary.

We first describe the quantities that use \(p_i\). While \(p_i\) is active, the algorithm maintains $ \delta_i^t=\frac{\beta V_i^t-B_i^t}{p_i}$.
Thus, \(p_i\delta_i^t\) is the difference between the target value \(\beta V_i^t\) and the value \(B_i^t\) received by agent \(i\) by the end of round \(t\).
The algorithm also maintains the same three statuses as Algorithm~\ref{alg:MIV-0.5}. 
Initially, $ \delta_i^0=0, \textsc{Sta}_i^0=\textsc{Sta1}$. The status is updated after \(g_t\) is allocated. If \(y_i^t<1\), then the status remains unchanged. 
If \(y_i^t=1\) and the status before round \(t\) is either \(\textsc{Sta1}\) or \(\textsc{Sta2}\), then the new status is \(\textsc{Sta2}\) when agent \(i\) receives \(g_t\), and it is \(\textsc{Sta3}\) when she does not receive \(g_t\). 
Once the status becomes \(\textsc{Sta3}\), it remains \(\textsc{Sta3}\) in every later round. These are exactly the update rules used in Algorithm~\ref{alg:MIV-0.5}.
Recall
$h(\textsc{Sta1})=1-\beta,
h(\textsc{Sta2})=0,$ and $
h(\textsc{Sta3})=1
$.

For every agent whose prediction is active at the end of round \(t\), let $ q_i^t=h(\textsc{Sta}_i^t)-\delta_i^t,$ and $ \bar q_i^t=q_i^t+\beta$.
The proof will maintain \(q_i^t>0\). The quantity \(\bar q_i^t\) is used when the algorithm decides the recipient of the current good.
After the value vector of \(g_t\) has been revealed and predictions have been deactivated, an agent \(i\) whose prediction remains active is called \emph{special} in round \(t\) if $y_i^t=1$ and $\textsc{Sta}_i^{t-1}\in\{\textsc{Sta1},\textsc{Sta2}\}$.
An agent whose prediction remains active but who does not satisfy these two conditions is called \emph{ordinary}. An agent whose prediction is inactive is neither special nor ordinary.

Consider an ordinary agent \(i\). If she does not receive \(g_t\), then the value of \(\bar q_i\) after the round would be $ \bar q_i^{t,-}=\bar q_i^{t-1}-\beta y_i^t$.
If she receives \(g_t\), then the corresponding value would be
$ \bar q_i^{t,+}=\bar q_i^{t-1}+(1-\beta)y_i^t. $
Define $ G_i^t := \frac{1}{\bar q_i^{t,-}} - \frac{1}{\bar q_i^{t,+}}$, which is the exact difference between the reciprocal term \(1/\bar q_i\) when agent \(i\) does not receive \(g_t\) and the same reciprocal term when she receives \(g_t\). 
A larger \(G_i^t\) means that allocating \(g_t\) to agent \(i\) avoids a larger increase in this reciprocal term.
For a special agent, the change in \(q_i\) is independent of the recipient of \(g_t\): if the pre-round status is \(\mathrm{STA1}\), then \(q_i\) remains unchanged, and if the pre-round status is \(\mathrm{STA2}\), then \(q_i\) increases by \(1-\beta\). 
Hence, the current allocation decision does not create a receive-versus-miss difference for a special agent. We therefore let $ G_i^t=0$  for every special agent. 
We also let \(G_i^t:=0\) whenever \(p_i\) is inactive.

We next describe the quantities that do not use the predictions. For every agent \(i\), the algorithm maintains the same quantities as Algorithm~\ref{alg:no-informaton}. 
In particular, $ H_i^t:=\max_{\ell\le t}v_i(g_\ell) $ is the largest value observed for agent \(i\) by round \(t\), and \(\tau_i^t\) is the number of strict positive increases of this running maximum, where the first positive value is counted as the first increase. 
The algorithm also maintains $D_i^t:=\theta V_i^t-B_i^t$.
In round \(t\), after the complete value vector of \(g_t\) is revealed, the algorithm first updates \(H_i^t\) and \(\tau_i^t\) for every agent. At this point, \(g_t\) has not yet been allocated, so the quantity used in the subsequent calculation remains \(D_i^{t-1}\). 
If \(H_i^t>0\), define $ z_i^t:=\frac{v_i(g_t)}{H_i^t}$.
Because \(H_i^t\) includes the value of the current good, we have \(z_i^t\in[0,1]\). 
The algorithm then defines
$$ \Delta_i^t := g(\tau_i^t) \exp\!\left( \gamma\frac{D_i^{t-1}}{H_i^t} \right) \left[ \exp(\gamma\theta z_i^t) - \exp\!\left(-\gamma(1-\theta)z_i^t\right) \right]. $$
If \(H_i^t=0\), then agent \(i\) assigns value zero to every good that has arrived by round \(t\), and we let $ z_i^t:=0$ and $ \Delta_i^t=0$.
The interpretation of \(\Delta_i^t\) follows from the update of \(D_i\). If agent \(i\) does not receive \(g_t\), then
$ D_i^t=D_i^{t-1}+\theta v_i(g_t). $
If she receives \(g_t\), then $ D_i^t=D_i^{t-1}-(1-\theta)v_i(g_t). $
Consequently, \(\Delta_i^t\) is the difference between
$ g(\tau_i^t) \exp\!\left( \gamma\frac{D_i^t}{H_i^t} \right) $ when agent \(i\) does not receive \(g_t\) and the same expression when she receives \(g_t\). 
Thus, a larger \(\Delta_i^t\) means that assigning \(g_t\) to agent \(i\) avoids a larger increase in this exponential expression.

The two quantities \(G_i^t\) and \(\Delta_i^t\) measure the effect of assigning \(g_t\) to agent \(i\) on two explicitly defined expressions. 
The first uses the active prediction \(p_i\), while the second uses only values observed by round \(t\). Algorithm~\ref{alg:simultaneous-robust} combines these effects by allocating \(g_t\) to an agent
$$ i_t\in \arg\max_{i\in N} \left\{ G_i^t+\eta\Delta_i^t \right\}. $$
Ties are broken according to a fixed deterministic ordering of the agents. After selecting \(i_t\), the algorithm updates \(D_i^t\) for every agent, and it updates \(\delta_i^t\), \(\mathrm{STA}_i^t\), \(q_i^t\), and \(\bar q_i^t\) for every agent whose prediction remains active. The formal procedure is given in Algorithm~\ref{alg:simultaneous-robust}. 

\begin{algorithm}[H]
\caption{Simultaneously consistent and robust allocation with MIV predictions}
\label{alg:simultaneous-robust}
\begin{algorithmic}[1]
\STATE \textbf{Input:} Predictions $p_1,\ldots,p_n\geq 0$ and a fixed deterministic ordering of the agents for tie-breaking.
\STATE Initialize $D_i^0=H_i^0=\tau_i^0=0$ and $A_i=\varnothing$ for every $i\in[n]$.
\STATE For every $i\in[n]$, if $p_i>0$, mark $p_i$ as active and initialize $ \delta_i^0=0, \textsc{Sta}_i^0=\textsc{Sta1}, q_i^0=1-\beta$, and $\bar q_i^0=1$.
If $p_i=0$, mark $p_i$ as inactive.
\WHILE{a new good $g_t$ arrives}
\STATE Observe $v_i(g_t)$ for every $i\in[n]$.
\STATE For every active $p_i$ satisfying $v_i(g_t)>p_i$, mark $p_i$ as inactive. \label{alg::step::robuts-deactivate}
% Once inactive, $p_i$ remains inactive.
\STATE For every $i\in[n]$, update
$
H_i^t\leftarrow\max\{H_i^{t-1},v_i(g_t)\}$ and $
\tau_i^t\leftarrow
\tau_i^{t-1}
+
\mathds{1}
_{\{v_i(g_t)>H_i^{t-1}\}}$.
\STATE For every $i$ whose prediction $p_i$ remains active, compute
$
y_i^t=\frac{v_i(g_t)}{p_i},
$
determine whether $i$ is special or ordinary, and compute $G_i^t$ according to the definitions above. Set $G_i^t=0$ for every other agent.
\STATE For every $i\in[n]$, compute $z_i^t$ and $\Delta_i^t$.
\STATE Allocate $g_t$ to
\[
i_t\in
\arg\max\limits_{i\in[n]}
\left\{G_i^t+\eta\Delta_i^t \right\},
\]
breaking ties according to the fixed ordering, and update $
A_{i_t}\leftarrow A_{i_t}\cup{g_t}$.
\STATE For every $i\in[n]$, update
$ D_i^t
\leftarrow
D_i^{t-1}
+ \theta v_i(g_t) - \mathds{1}_{\{i=i_t\}}v_i(g_t)$.
\STATE For every $i$ whose prediction $p_i$ remains active, update
$
\delta_i^t
\leftarrow
\delta_i^{t-1}
+
\beta y_i^t
-
\mathds{1}_{\{i=i_t\}}y_i^t.
$
Update $\textsc{Sta}_i^t$ according to the rules stated above, and
$
q_i^t
\leftarrow
h(\textsc{Sta}_i^t)-\delta_i^t,
\bar q_i^t
\leftarrow
q_i^t+\beta.
$
\ENDWHILE
\STATE \textbf{Output:} The allocation $A=(A_1,\ldots,A_n)$.
\end{algorithmic}
\end{algorithm}

We next analyze Algorithm~\ref{alg:simultaneous-robust}. 
For any round $t$, let $P^t$ be the set of agents whose predictions remain active after the deactivation step in round $t$ (Line~\ref{alg::step::robuts-deactivate}).
At the end of round $t$, after $g_t$ has been allocated and all quantities have been updated, define the augmented potential
\[
\Gamma^t=
\sum_{i\in P^t}\frac{1}{\bar q_i^t}
+\eta\sum_{i:H_i^t>0}g(\tau_i^t)
\exp\!\left(\gamma\frac{D_i^t}{H_i^t}\right)
+\eta\sum_{i=1}^n\frac{1}{\tau_i^t+1}.
\]
The first sum contains the reciprocal terms based on the predictions that are still active. The second sum contains the exponential terms based only on values revealed by round $t$. 
The final sum compensates for a possible increase in the exponential term when the running maximum $H_i^t$ changes.
To separate the updates that occur before the allocation of $g_t$ from those that occur after the allocation, define
\[
\widehat\Gamma^t
=
\sum_{i\in P^t}\frac{1}{\bar q_i^{t-1}}
+
\eta
\sum_{i:H_i^t>0}
g(\tau_i^t)
\exp\!\left(
\gamma\frac{D_i^{t-1}}{H_i^t}
\right)
+
\eta
\sum_{i=1}^n
\frac{1}{\tau_i^t+1}.
\]
Thus, $\widehat\Gamma^t$ is evaluated after every prediction satisfying $v_i(g_t)>p_i$ has been deactivated and after $H_i^t$ and $\tau_i^t$ have been updated, but before $g_t$ has been allocated. Accordingly, its reciprocal terms use $\bar q_i^{t-1}$, while its exponential terms use the pre-allocation quantity $D_i^{t-1}$ together with the updated values $H_i^t$ and $\tau_i^t$.

\begin{lemma}
\label{lem:robust-preparation-domain}
Fix a round $t\geq 1$. Suppose that for any $i\in P^{t-1}$, $q_i^{t-1}>0$.
Then $
\widehat\Gamma^t\leq\Gamma^{t-1}$ holds.
If $ \Gamma^{t-1}\leq Q$,
then 
\begin{itemize}
    \item every $i\in P^t$ satisfies
$\bar q_i^{t-1}\geq\frac{1}{Q}$ and $q_i^{t-1}\geq\frac{1}{Q}-\beta>0$;
\item for every ordinary agent $i\in P^t$, $\bar q_i^{t,-} \geq \frac{1}{Q}-\beta>0$, and $ \bar q_i^{t,+} \geq \frac{1}{Q} >0$;
\item for every special agent $i\in P^t$, either possible allocation of $g_t$ gives $q_i^t\geq q_i^{t-1}>0$.
\end{itemize}

% For any round $t\geq 1$, it holds that $ \widehat\Gamma^t\leq\Gamma^{t-1}$.
% Moreover, if $\widehat\Gamma^t\leq Q$, then every agent $i\in P^t$ satisfies
% \[
% \bar q_i^{t-1}\geq\frac{1}{Q},\qquad
% q_i^{t-1}\geq\frac{1}{Q}-\beta>0,
% \qquad
% \bar q_i^{t,-}\geq\frac{1}{Q}-\beta>0.
% \]
% Consequently, every score in Algorithm~3 is well-defined before it is used.
\end{lemma}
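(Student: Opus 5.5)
The plan is to compare $\widehat\Gamma^t$ with $\Gamma^{t-1}$ term by term, treating the three sums separately. For the reciprocal sum, the deactivation step only removes agents: $P^t\subseteq P^{t-1}$. The hypothesis $q_i^{t-1}>0$ gives $\bar q_i^{t-1}=q_i^{t-1}+\beta>0$ for every $i\in P^{t-1}$. Hence every term $1/\bar q_i^{t-1}$ is positive, and dropping the deactivated agents can only decrease the first sum. So $\sum_{i\in P^t}1/\bar q_i^{t-1}\le\sum_{i\in P^{t-1}}1/\bar q_i^{t-1}$.

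For the other two sums, the exponential and $1/(\tau+1)$ terms, I will reuse the per-agent argument of Lemma~\ref{lem::pre-and-post} verbatim. The quantities $H_i,\tau_i,D_i$ are updated exactly as in Algorithm~\ref{alg:no-informaton}, so $D_i^{t-1}=\theta V_i^{t-1}-B_i^{t-1}$ as in Claim~\ref{claim::D^t_i}. The combined contribution of each agent does not increase when $H_i$ and $\tau_i$ are updated. Multiplying by $\eta>0$ preserves this. Adding the three comparisons yields $\widehat\Gamma^t\le\Gamma^{t-1}$.

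For the second part, assume $\Gamma^{t-1}\le Q$. Every term of $\Gamma^{t-1}$ is nonnegative, and the reciprocal terms are positive. So for $i\in P^t\subseteq P^{t-1}$ we have $1/\bar q_i^{t-1}\le\Gamma^{t-1}\le Q$, which gives $\bar q_i^{t-1}\ge 1/Q$ and $q_i^{t-1}=\bar q_i^{t-1}-\beta\ge 1/Q-\beta$. Positivity of $1/Q-\beta$ is the check $\frac{1}{2n}<\frac{1}{n+1/100}$, i.e.\ $n+\frac1{100}<2n$, which holds for $n\ge1$.

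For an ordinary agent in $P^t$, the prediction is still active, so $y_i^t\in[0,1]$. Then $\bar q_i^{t,-}=\bar q_i^{t-1}-\beta y_i^t\ge 1/Q-\beta$, and $\bar q_i^{t,+}=\bar q_i^{t-1}+(1-\beta)y_i^t\ge\bar q_i^{t-1}\ge 1/Q$.

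For a special agent, I will redo the computation of Proposition~\ref{prop::MIV-q}(ii). The updates of $\delta_i$ and of the status are identical to those of Algorithm~\ref{alg:MIV-0.5} while $p_i$ is active, so the same identities hold:
\begin{itemize}
\item if the previous status is \textsc{Sta1}, then $q_i^t=q_i^{t-1}$ whether $i$ receives or misses $g_t$;
\item if the previous status is \textsc{Sta2}, then $q_i^t=q_i^{t-1}+1-\beta$ in both cases.
\end{itemize}
Either way, $q_i^t\ge q_i^{t-1}\ge 1/Q-\beta>0$.

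The only delicate point I foresee is making sure the Lemma~\ref{lem::pre-and-post} argument transfers with no hidden dependence on the allocation rule. It does not depend on it: that proof used only the update rules for $H,\tau$ and the sign cases of $D^{t-1}$.
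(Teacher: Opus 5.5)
Your proposal is correct and follows essentially the same route as the paper. The paper drops the positive reciprocal terms of deactivated predictions, reuses the per-agent case analysis of Lemma~\ref{lem::pre-and-post} scaled by $\eta$, reads off $\bar q_i^{t-1}\ge 1/Q$ from the potential bound, and redoes the special-agent $q$-update table. The only difference is that you bound $1/\bar q_i^{t-1}$ by $\Gamma^{t-1}$ directly (valid since $i\in P^t\subseteq P^{t-1}$ and all terms are nonnegative), whereas the paper bounds it by $\widehat\Gamma^t$; you also prove the explicit bounds $\bar q_i^{t,-}\ge 1/Q-\beta$ and $\bar q_i^{t,+}\ge 1/Q$, where the paper's proof only notes positivity.
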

\begin{proof}
We first prove $\widehat\Gamma^t\leq\Gamma^{t-1}$. Consider an agent $i\in P^{t-1}\setminus P^t$. The algorithm deactivates $p_i$ in round $t$, and the term $ \frac{1}{\bar q_i^{t-1}}$ is removed. 
By the assumption $q_i^{t-1}>0$, we have $ \bar q_i^{t-1}=q_i^{t-1}+\beta>0$. 
Hence, the removed term is positive, so its removal cannot increase the potential.

We next compare, for each agent $i$, the contribution to $\Gamma^{t-1}$ with the corresponding contribution to $\widehat\Gamma^t$. 
During this comparison, the current good has not yet been allocated. Therefore, the value in the numerator of the exponential term remains $D_i^{t-1}$ on both sides. 
Suppose first that $v_i(g_t)\leq H_i^{t-1}$.
Then the current good is not a new strict maximum for agent $i$, so $ H_i^t=H_i^{t-1}$ and $\tau_i^t=\tau_i^{t-1}$.
If $H_i^{t-1}>0$, it follows that
\[
g(\tau_i^t)
\exp\!\left(
\gamma\frac{D_i^{t-1}}{H_i^t}
\right)
=
g(\tau_i^{t-1})
\exp\!\left(
\gamma\frac{D_i^{t-1}}{H_i^{t-1}}
\right) \text{ and } \frac{1}{\tau_i^t+1}
=
\frac{1}{\tau_i^{t-1}+1}.
\]
Thus, these two terms associated with agent $i$ are unchanged during the transition from $\Gamma^{t-1}$ to $\widehat\Gamma^t$.
If $H_i^{t-1}=0$, then $v_i(g_t)=0$, so $H_i^t=0$ and $\tau_i^t=0$. In this case, the exponential term is absent both before and after the pre-allocation update, and the term $1/(\tau_i+1)$ is also unchanged.

Suppose now that $ v_i(g_t)>H_i^{t-1}$.
Then $v_i(g_t)$ is a new strict positive maximum. We distinguish whether this is the first positive value observed for agent $i$.
Assume first that $\tau_i^{t-1}\geq 1$. Before the update, the sum of the two relevant terms are
\[
\eta g(\tau_i^{t-1})
\exp\!\left(
\gamma\frac{D^{t-1}_i}{H^{t-1}_i}
\right)
+
\frac{\eta}{\tau^t_i}.
\]
After the update, but before allocating $g_t$, they are
\[
\eta g(\tau^t_i)
\exp\!\left(
\gamma\frac{D^{t-1}_i}{H^t_i}
\right)
+
\frac{\eta}{\tau^t_i+1}.
\]
Suppose first that $D^{t-1}_i\geq 0$. Since $H^t_i>H^{t-1}_i$, we have $\frac{D^{t-1}_i}{H^t_i} \leq \frac{D^{t-1}_i}{H^{t-1}_i}$, and therefore,
$
\exp(
\gamma\frac{D^{t-1}_i}{H^t_i})
\leq
\exp(
\gamma\frac{D^{t-1}_i}{H^{t-1}_i}).
$
Moreover, $g(\tau^t_i)<g(\tau^{t-1}_i)$ and  $\frac{1}{\tau^t_i+1}<\frac{1}{\tau^t_i}$.
It follows that the sum of the exponential term and the final compensation term does not increase.

Suppose next that $D<0$. Then $\exp(\gamma\frac{D^{t-1}_i}{H^t_i})<1$, so the new exponential term is strictly smaller than $\eta g(\tau^t_i)$.
Hence, the new sum is strictly smaller than $ \eta g(\tau^t_i)+\frac{\eta}{\tau^t_i+1}$.
Using $g(\tau^t_i)=\frac{1}{\tau^t_i(\tau^t_i+1)}$,
we have
$
\eta g(\tau^t_i)+\frac{\eta}{\tau^t_i+1}
=
\frac{\eta}{\tau^t_i}
$.
The old compensation term is already equal to $\frac{\eta}{\tau^t_i}$, and the old exponential term is positive. Therefore, the new sum is strictly smaller than the old sum.

It remains to consider the case that $g_t$ gives the first positive value. In this case, $H_i^{t-1}=0$ and $\tau_i^{t-1}=0$.
All values revealed for agent $i$ before round $t$ were zero. Consequently,
$
V_i^{t-1}=B_i^{t-1}=D_i^{t-1}=0$.
Before the update, the exponential term is absent and the compensation term is $\eta$. After the update, $\tau_i^t=1$, and the new exponential term is
$ \eta g(1)\exp(0)=\frac{\eta}{2}$.
The new compensation term is also $\frac{\eta}{2}$. Thus, the total contribution remains equal to $\eta$.

These comparisons hold separately for every agent. Summing them and including the removal of deactivated predictions gives $\widehat\Gamma^t\leq\Gamma^{t-1}$.

Next, we prove the positivity statements. Suppose that $\Gamma^{t-1}\leq Q$.
The first part of the proof gives $ \widehat\Gamma^t \leq \Gamma^{t-1} \leq Q$. Fix $i\in P^t$. Since $P^t\subseteq P^{t-1}$, the term $\frac{1}{\bar q_i^{t-1}}$
appears in $\widehat\Gamma^t$. All terms in $\widehat\Gamma^t$ are nonnegative. Therefore, $\frac{1}{\bar q_i^{t-1}}
\leq
\widehat\Gamma^t
\leq Q$, which implies $ \bar q_i^{t-1}\geq\frac{1}{Q}$.
Since $q_i^{t-1}=\bar q_i^{t-1}-\beta$, we obtain $q_i^{t-1} \geq \frac{1}{Q}-\beta$.
Moreover, $Q=n+\frac{1}{100}<2n=\frac{1}{\beta}$.
Hence, $\frac{1}{Q}-\beta>0$. Because $i\in P^t$, the prediction remains active after observing $g_t$. Therefore, $0\leq y_i^t\leq 1$.
If $i$ is ordinary, then $\bar q_i^{t,-}=
\bar q_i^{t-1}-\beta y_i^t >0$ and $\bar q_i^{t,+}=
\bar q_i^{t-1}+(1-\beta)y_i^t > 0 $. 
If $i$ is special, we have  $q_i^t\in
\left\{
q_i^{t-1},
q_i^{t-1}+1-\beta
\right\}$, and thus, $q_i^t\geq q_i^{t-1}>0$.
\end{proof}

The next lemma proves that the potential never exceeds its initial upper bound. For the analysis only, consider the hypothetical update in which every agent does not receive $g_t$. This is not a feasible allocation. It is only a convenient way to write the changes of the different terms. After calculating the potential in this hypothetical update, assigning $g_t$ to one agent subtracts the exact receive-versus-miss reduction associated with that agent.

\begin{lemma}
\label{lem:robust-augmented-invariant}
For any round $t\geq 0$, $\Gamma^t\leq Q$ holds.
Moreover, every $i\in P^t$ satisfies $q_i^t
\geq
\frac{1}{Q}-\beta
>0$.
\end{lemma}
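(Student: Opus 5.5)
The proof is by induction on $t$, carrying both claims together: the potential bound $\Gamma^t\le Q$ and the positivity of $q_i^t$ for every $i\in P^t$.

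\emph{Base case.} At $t=0$ every active prediction has $\bar q_i^0=1$, there are no exponential terms, and $\tau_i^0=0$. Hence $\Gamma^0=|P^0|+\eta n\le n+\frac{1}{100}=Q$. Also $q_i^0=1-\beta\ge \frac1Q-\beta$, since $\frac1Q\le 1$.

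\emph{Inductive step.} Assume $\Gamma^{t-1}\le Q$ and $q_i^{t-1}>0$ on $P^{t-1}$. Lemma~\ref{lem:robust-preparation-domain} then gives $\widehat\Gamma^t\le\Gamma^{t-1}\le Q$, and it guarantees that all quantities $\bar q_i^{t,\pm}$ are positive and well defined. It therefore suffices to show $\Gamma^t\le\widehat\Gamma^t$. The last claim then follows directly: for $i\in P^t$ we have $1/\bar q_i^t\le\Gamma^t\le Q$, so $q_i^t=\bar q_i^t-\beta\ge\frac1Q-\beta>0$.

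To compare $\Gamma^t$ with $\widehat\Gamma^t$, I use the hypothetical update in which every agent misses $g_t$, and then subtract the receive-versus-miss reduction of the recipient $i_t$. This gives
\[
\Gamma^t-\widehat\Gamma^t\le \sum_{i\ \text{ordinary}}F_i^t+\eta\sum_{i=1}^n J_i^t-\bigl(G_{i_t}^t+\eta\Delta_{i_t}^t\bigr),
\]
where $F_i^t=\frac{1}{\bar q_i^{t,-}}-\frac{1}{\bar q_i^{t-1}}$ and $J_i^t$ is defined as in Section~\ref{sec::no-info}. The terms of special agents do not increase whoever receives $g_t$, because their $q_i$ (hence $\bar q_i$) does not decrease by the argument of Proposition~\ref{prop::MIV-q}, so they are dropped. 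The compensation terms $\eta/(\tau_i^t+1)$ are unchanged by the allocation.

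Write $S_i:=G_i^t+\eta\Delta_i^t\ge0$ and $S_*:=S_{i_t}=\max_i S_i$. The same calculation as in Lemma~\ref{lemma::MIV-implication}, with $\bar q$ in place of $q$, gives
\[
F_i^t=\beta G_i^t+\beta(1-\beta)\frac{y_i^t}{\bar q_i^{t-1}}G_i^t .
\]
Lemma~\ref{lem::J-and-Delta} gives $J_i^t\le\frac{3}{8n}\Delta_i^t\le\beta\Delta_i^t$. Combining the first-order pieces yields
\[
\sum_i\bigl(\beta G_i^t+\eta J_i^t\bigr)\le\beta\sum_i S_i\le \beta n S_*=\tfrac12 S_* .
\]
For the second piece I use $y_i^t\le1$, $G_i^t\le S_*$, and $\sum_{i\in P^t}1/\bar q_i^{t-1}\le\widehat\Gamma^t\le Q$. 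This bounds it by $\beta(1-\beta)Q\,S_*$.

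The main obstacle is showing that this second constant stays strictly below $\frac12$. The crude estimate $\beta Q$ slightly exceeds $\frac12$, so the factor $(1-\beta)$ must be kept. With it,
\[
\beta(1-\beta)Q=\tfrac12\Bigl(1-\tfrac1{2n}\Bigr)\Bigl(1+\tfrac1{100n}\Bigr)<\tfrac12 .
\]
This is exactly why $\eta$ and $Q$ were chosen as they were. Adding the two bounds gives $\Gamma^t-\widehat\Gamma^t\le \bigl(\tfrac12+\tfrac12\bigr)S_*-S_*=0$. Together with $\widehat\Gamma^t\le\Gamma^{t-1}$, this completes the induction.
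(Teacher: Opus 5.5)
Your proof is correct and follows the paper's argument step for step. Both use the same induction through Lemma~\ref{lem:robust-preparation-domain}, the same hypothetical all-miss decomposition with $F_i^t=(\beta+\beta(1-\beta)y_i^t/\bar q_i^{t-1})G_i^t$ and $J_i^t\le\frac{3}{8n}\Delta_i^t$, and the same final constant $\beta n+\beta(1-\beta)Q<1$; the only difference is cosmetic, since you absorb $\eta J_i^t$ directly into $\beta(G_i^t+\eta\Delta_i^t)$ where the paper first bounds $\sum_i\eta\Delta_i^t\le n\mathcal{R}^t-\sum_{i\in\textsc{Ord}^t}G_i^t$, and both give the identical bound.
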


\begin{proof}
We proceed by induction on $t$. At $t=0$, every $i\in P^0$ satisfies $q_i^0=1-\beta$ and $\bar q_i^0=1$.
Thus, every active prediction contributes one to the first sum, and there are at most $n$ such terms. Since $H_i^0=0$ for every agent $i$, there is no exponential term. Finally, we have $\eta\sum_{i=1}^n\frac{1}{\tau_i^0+1}
=
n\eta
=
\frac{1}{100}$, and thus, $\Gamma^0
\leq
n+\frac{1}{100}
=
Q$ and $q_i^0=1-\beta>0$.

Fix $t\geq 1$, and suppose that the conclusion holds at the end of round $t-1$. Lemma~\ref{lem:robust-preparation-domain} gives $\widehat\Gamma^t
\leq
\Gamma^{t-1}
\leq Q$.
The same lemma proves that after allocating $g_t$, all denominators are positive. 
Hence, $\Gamma^t$ is well-defined for every possible recipient of $g_t$.

We first calculate the increase in the reciprocal term of an ordinary agent under the hypothetical update in which she does not receive $g_t$. Fix an ordinary $i\in P^t$, and define
\[
I_i^t =
\frac{1}{\bar q^{t-1}_i-\beta y^{t}_i}
-
\frac{1}{\bar q^{t-1}_i}.
\]
By definition, we have
\[
\begin{aligned}
G_i^t =
\frac{1}{\bar q^{t-1}_i-\beta y^t_i} -\frac{1}{\bar q^{t-1}_i+(1-\beta)y^t_i} = \frac{y^t_i} {(\bar q^{t-1}_i-\beta y^t_i)(\bar q^{t-1}_i+(1-\beta)y^t_i)}.\end{aligned}
\]
If $y^t_i>0$, division gives
\[
\begin{aligned}
\frac{I_i^t}{G_i^t}
=
\beta
\frac{\bar q^{t-1}_i+(1-\beta)y^t_i}{\bar q^{t-1}_i}=
\beta
+
\beta(1-\beta)\frac{y^t_i}{\bar q^{t-1}_i},
\end{aligned}
\]
which implies
\[
I_i^t
=
\left(
\beta
+
\beta(1-\beta)
\frac{y_i^t}{\bar q_i^{t-1}}
\right)
G_i^t.
\]
If $y^t_i=0$, then $I_i^t=G_i^t=0$,
so the same identity remains valid.

We next verify the reciprocal change of every special agent. 
The following table lists the change in $h(\textsc{Sta}_i)$, the change in $\delta_i$, and the resulting change in $q_i=h(\textsc{Sta}_i)-\delta_i$:
\[
\begin{array}{c|c|c|c|c}
\text{pre-round status}
&
\text{agent receives }g_t
&
h(\textsc{Sta}_i^t)-h(\textsc{Sta}_i^{t-1})
&
\delta_i^t-\delta_i^{t-1}
&
q_i^t-q_i^{t-1}
\\ \hline
\textsc{Sta1}
&
\text{yes}
&
-(1-\beta)
&
-(1-\beta)
&
0
\\
\textsc{Sta1}
&
\text{no}
&
\beta
&
\beta
&
0
\\
\textsc{Sta2}
&
\text{yes}
&
0
&
-(1-\beta)
&
1-\beta
\\
\textsc{Sta2}
&
\text{no}
&
1
&
\beta
&
1-\beta
\end{array}
\]
Thus, $q^t_i-q^{t-1}_i$ either remains unchanged or increases by $1-\beta$. Consequently, the reciprocal term of a special agent either remains unchanged or decreases. 
% This also shows why setting
% \[
% G_i^t=0
% \]
% for a special agent is exact.

We now calculate the increase in the exponential term when an agent does not receive $g_t$. For every agent $i$ with $H_i^t=0$, let $J_i^t=0$.
For every agent with $H_i^t>0$, let
\[
J_i^t =
g(\tau_i^t)
\exp\!\left(
\gamma\frac{D_i^{t-1}}{H_i^t}
\right)
\left[
\exp(\gamma\theta z_i^t)-1
\right].
\]

If $z_i^t=0$, then $J_i^t=\Delta_i^t=0$.
Suppose that $z_i^t>0$. Then we have
\[
\frac{J_i^t}{\Delta_i^t}
=
\frac{
\exp(\gamma\theta z_i^t)-1
}{
\exp(\gamma\theta z_i^t)
-
\exp(-\gamma(1-\theta)z_i^t)
}.
\]
Multiplying the numerator and denominator by $\exp(-\gamma\theta z_i^t)$
gives
\[
\frac{J_i^t}{\Delta_i^t}
=
\frac{
1-\exp(-\gamma\theta z_i^t)
}{
1-\exp(-\gamma z_i^t)
}.
\]

For every $x\geq 0$, $1-e^{-x}\leq x$.
Therefore, $1-\exp(-\gamma\theta z_i^t)
\leq
\gamma\theta z_i^t$.
The function $f(x)=1-e^{-x}$ is concave and satisfies $f(0)=0$. Since $0\leq\gamma z_i^t\leq\gamma$, concavity gives $1-\exp(-\gamma z_i^t) \geq z_i^t(1-\exp(-\gamma))$.
Consequently, we have $\frac{J_i^t}{\Delta_i^t}
\leq
\frac{\gamma\theta}{1-\exp(-\gamma)}$.
Moreover, it holds that $\exp(\frac{3}{4})>2$,
so $1-\exp(-\gamma)>\frac{1}{2}$.
Using $\gamma=\frac{3}{4}$ and $\theta=\frac{1}{4n}$,
we have $J_i^t \leq \frac{3}{8n}\Delta_i^t$.

For ease of notation, let $ \mathcal{R}^t = G_{i_t}^t+\eta \Delta^{t}_{i_t} = \max_{j\in[n]} \left\{ G_j^t+\eta \Delta^{t}_{j} \right\}$.
Thus, $\mathcal{R}^t$ is the combined reduction associated with the recipient selected by Algorithm~\ref{alg:simultaneous-robust}.
Let $\textsc{Ord}^t$ be the set of ordinary agents whose predictions remain active in round $t$.
For every $i\in\textsc{Ord}^t$, $G_i^t+\eta \Delta^{t}_i\leq\mathcal{R}^t$ and hence $\eta \Delta^{t}_i \leq \mathcal{R}^t-G_i^t$.
For every $i\notin\textsc{Ord}^t$, the algorithm sets $G_i^t=0$, and therefore $\eta \Delta^{t}_i\leq\mathcal{R}^t$.
Summing these inequalities gives
\[
\sum_{i=1}^n \eta \Delta^{t}_i
\leq
n\mathcal{R}^t
-
\sum_{i\in\textsc{Ord}^t}G_i^t.
\]
Define $S^t
=
\sum_{i\in\textsc{Ord}^t}I_i^t
+
\eta\sum_{i=1}^nJ_i^t$.
The reciprocal changes of special agents are nonpositive and are not included in $S^t$. Therefore, the value of the potential under the hypothetical update in which every agent does not receive $g_t$ is at most $\widehat\Gamma^t+S^t$.
Using the identities proved above, we have
\[
\begin{aligned}
S^t
&\leq
\sum_{i\in\textsc{Ord}^t}
\left(
\beta
+
\beta(1-\beta)
\frac{y_i^t}{\bar q_i^{t-1}}
\right)
G_i^t
+
\frac{3}{8n}
\sum_{i=1}^nW_i^t\\
&\leq
\left(
\beta-\frac{3}{8n}
\right)
\sum_{i\in\textsc{Ord}^t}G_i^t +
\beta(1-\beta)
\sum_{i\in\textsc{Ord}^t}
\frac{y_i^t}{\bar q_i^{t-1}}G_i^t
+
\frac{3}{8}\mathcal{R}^t.
\end{aligned}
\]
Since $\beta-\frac{3}{8n}
=
\frac{1}{8n}>0$
and $G_i^t\leq\mathcal{R}^t$, we have $\sum_{i\in\textsc{Ord}^t}G_i^t
\leq
n\mathcal{R}^t$.
Also, it holds that $0\leq y_i^t\leq 1$ and  $G_i^t\leq\mathcal{R}^t$.
Thus, we have
\[
\begin{aligned}
\sum_{i\in\textsc{Ord}^t}
\frac{y_i^t}{\bar q_i^{t-1}}G_i^t
\leq
\mathcal{R}^t
\sum_{i\in\textsc{Ord}^t}
\frac{1}{\bar q_i^{t-1}}\leq
Q\mathcal{R}^t,
\end{aligned}
\]
where the last inequality transition follows because these reciprocal terms are contained in $\widehat\Gamma^t$ that satisfies $\widehat\Gamma^t\leq Q$. Substitution gives
\[
\begin{aligned}
S^t
\leq
\left(
\beta-\frac{3}{8n}
\right)n\mathcal{R}^t
+
\beta(1-\beta)Q\mathcal{R}^t
+
\frac{3}{8}\mathcal{R}^t=
\left(
\beta n+\beta(1-\beta)Q
\right)\mathcal{R}^t.
\end{aligned}
\]
Let $\chi=
\beta n+\beta(1-\beta)Q$.
Substituting $\beta=\frac{1}{2n}$ and $Q=n+\frac{1}{100}$,
we obtain
\[
\begin{aligned}
\chi
=
\frac{1}{2}
+
\frac{1}{2n}
\left(
1-\frac{1}{2n}
\right)
\left(
n+\frac{1}{100}
\right)=
1-\frac{98n+1}{400n^2} <1.
\end{aligned}
\]

We now return from the hypothetical update to the actual allocation. For the exponential term, assigning $g_t$ to agent $i_t$, rather than letting $i_t$ miss it, decreases the potential by exactly $W_{i_t}^t$.
If $i_t$ is ordinary, the corresponding reciprocal term decreases by exactly $G_{i_t}^t$.
If $i_t$ is special or its prediction is inactive, then $G_{i_t}^t=0$, and there is no additional receive-versus-miss change in the reciprocal part. The reciprocal changes of all special agents are already nonpositive and independent of the recipient. Thus, we have
\[
\begin{aligned}
\Gamma^t
\leq
\widehat\Gamma^t
+
S^t
-
\left(
G_{i_t}^t+\eta\Delta^{t}_{i_t} 
\right)
\leq
\widehat\Gamma^t
+
(\chi-1)\mathcal{R}^t\leq
\widehat\Gamma^t,
\end{aligned}
\]
where the last inequality transition is due to $\chi<1$.
Combining the above inequality with Lemma~\ref{lem:robust-preparation-domain} gives
$\Gamma^t \leq \widehat\Gamma^t \leq \Gamma^{t-1} \leq Q$.
This completes the induction.

Finally, if $i\in P^t$, then $\frac{1}{\bar q_i^t} \leq \Gamma^t \leq Q$ holds.
Therefore, $\bar q_i^t\geq\frac{1}{Q}$, and hence $q_i^t = \bar q_i^t-\beta \geq \frac{1}{Q}-\beta >0$.
\end{proof}

We now derive the consistency and robustness guarantees of the algorithm.

\begin{proof}[Proof of Theorem~\ref{thm:simultaneous-consistency-robustness}]
We first prove the guarantee under exact predictions, where $p_i=\max_{\ell\in[m]}v_i(g_\ell)$ for all $i$. For the agent $i$ with $p_i=0$, she trivially satisfies PROP1, and hence, it suffices to focus on the agent with $p_i>0$.

Fix an agent $i$ with $p_i>0$ and hence $V_i^m>0$.
Because the prediction is accurate, the algorithm never deactivates $p_i$. Moreover, at least one good of value $p_i$ appears by the terminal round. 
Hence, the terminal status is either $\textsc{Sta2}$ or $\textsc{Sta3}$. 
By Lemma~\ref{lem:robust-augmented-invariant}, we have $q_i^m>0$. Suppose first that $\textsc{Sta}_i^m=\textsc{Sta2}$.
Since $h(\textsc{Sta2})=0$, we have $q_i^m=-\delta_i^m$, and thus, $\delta_i^m<0$.
By argument similar to the proof of Proposition~\ref{prop::MIV-detla}, we have $p_i\delta_i^m = \beta V_i^m-B_i^m$, and hence, $B_i^m>\beta V_i^m$.
Therefore, $B_i^m+U_i^m>\beta V_i^m$, yielding 0.5 competitive ratio in this case.
Next suppose that $\textsc{Sta}_i^m=\textsc{Sta3}$.
Then at least one good of value $p_i$ was allocated to another agent. Since $p_i$ is the maximum item value, $U_i^m=p_i$ holds.
Moreover, $q_i^m=1-\delta_i^m>0$, so $\delta_i^m<1$.
Consequently, we have $\beta V_i^m-B_i^m = p_i\delta_i^m< p_i= U_i^m$.
Therefore, in this case, we also have $B_i^m+U_i^m>\beta V_i^m$, which results in the desired competitive ratio.
At this stage, we prove the 0.5 competitive ratio when predictions are accurate.

We next prove the guarantee for arbitrary predictions. Fix a round $t$ and an agent $i$.
If $H_i^t=0$, then every good that has arrived by round $t$ has value zero for agent $i$. Thus, agent $i$ satisfies PROP1 trivially.
It suffices to prove the case where $H_i^t>0$.
The corresponding exponential term is contained in $\Gamma^t$. By Lemma~\ref{lem:robust-augmented-invariant}, we have $\eta
g(\tau_i^t)
\exp\!\left(
\gamma\frac{D_i^t}{H_i^t}
\right)
\leq Q$.
Since $H_i^t>0$, at least one strict positive maximum has occurred, and therefore $1\leq\tau_i^t\leq t$.
The function $g(x)=\frac{1}{x(x+1)}$ is decreasing. Hence, $g(\tau_i^t) \geq \frac{1}{t(t+1)}$.
It follows that $\exp\!\left(
\gamma\frac{D_i^t}{H_i^t}
\right)
\leq
\frac{Q}{\eta}t(t+1)$, which is equivalent to $\exp\!\left(
\gamma\frac{D_i^t}{H_i^t}
\right)
\leq
n(100n+1)t(t+1)$.
Taking logarithms yields
\[
\frac{D^t_i}{H^t_i}\leq \frac{4}{3}
\log\!\bigl(
n(100n+1)t(t+1)
\bigr),
\]
and let $C_t$ denote the right hand side of the above inequality.
By argument similar to the proof of Claim~\ref{claim::D^t_i}, one can verify that $D_i^t=\theta V_i^t-B_i^t$.
Combining the identity with the upper bound on $D_i^t$ gives $B_i^t \geq \theta V_i^t-C_tH_i^t$.
Choose a good among $g_1,\ldots,g_t$ whose value to agent $i$ is $H_i^t$. If agent $i$ received such a good, then $B_i^t\geq H_i^t$.
If she did not receive such a good, then $U_i^t=H_i^t$ and hence $B_i^t+U_i^t\geq H_i^t$.
Therefore, we have
\[
\theta V_i^t
\leq
B_i^t+C_tH_i^t\leq
(C_t+1)(B_i^t+U_i^t),
\]
which implies
$\frac{n(B_i^t+U_i^t)}{V_i^t} \geq \frac{n\theta}{C_t+1}$.
Since $n\theta=\frac{1}{4}$, we obtain
\[
\frac{n(B_i^t+U_i^t)}{V_i^t}
\geq
\frac{1}{
4+\frac{16}{3}
\log\!\bigl(
n(100n+1)t(t+1)
\bigr)
}.
\]
Since $n\geq 2$ and $t\geq 1$, one can verify that 
\[
\frac{1}{
4+\frac{16}{3}
\log\!\bigl(
n(100n+1)t(t+1)
\bigr)
}
\geq
\frac{3}{184\log(nt)},
\]
which completes the proof.
\end{proof}

Therefore, one deterministic allocation rule attains the \(\frac{1}{2}\) guarantee under accurate MIV predictions and the same asymptotic order of guarantee as Algorithm~\ref{alg:no-informaton} under arbitrary predictions.

\section{Conclusion}
\label{sec:conclusion}

We studied the deterministic online allocation of indivisible goods under
PROP1, with and without advance information about the agents' maximum item
values. Without additional information, we showed that no positive
competitive ratio depending only on the number of agents can be
guaranteed. More quantitatively, for a fixed number of agents, our upper
and lower bounds determine the optimal dependence on the number of goods
up to an $O(\log\log m)$ multiplicative factor: every deterministic
algorithm has competitive ratio $O(\frac{\log\log m}{\log m})$, while our
algorithm guarantees $\Omega(\frac{1}{\log m})$. With exact MIV information, we
showed that the situation changes substantially. A reciprocal-potential
algorithm guarantees $1/2$-PROP1, independently of the number of agents,
although our two-agent construction shows that exact PROP1 remains
impossible. 
Finally, we gave a single deterministic algorithm with both consistency and robustness. It guarantees \(\frac{1}{2}\)-PROP1 when all MIV predictions are exact and \(\Omega(\frac{1}{\log nm})\)-PROP1 for arbitrary nonnegative predictions. 
Thus, exploiting accurate MIV predictions does not require giving up the asymptotic order of the guarantee available without additional information.
These results identify a clear information hierarchy: exact
MIV information is sufficient to obtain a constant guarantee, but not
exact PROP1, whereas without additional information the best possible
guarantee must decrease with the number of goods.

Several questions remain open. The first is to close the \(O(\log\log m)\) gap for a fixed number of agents in the setting without additional information, . 
The second is to determine the optimal competitive ratio when MIV predictions are accurate: our results together with the results in Neoh and Teh~\cite{neoh2026closing} place it between $19/30$ and
$1-1/(64\cdot 518)$.
Moreover, it would also be interesting to understand whether randomization, restrictions on the adversary, or quantitative assumptions on the prediction error lead to stronger guarantees.

\paragraph*{AI Disclosure.}
All the results in this manuscript were obtained using GPT-5.6-Sol under
the guidance of the authors. The authors verified the proofs for
correctness and used GPT-5.6-Sol to improve the exposition and simplify
some arguments.

\bibliographystyle{plainurl}
\bibliography{mylibrary}

\end{document}